\documentclass[10pt]{article}
\usepackage{authblk}
\usepackage{titlesec}
\usepackage{anysize}
\usepackage{comment}
\usepackage{booktabs} 
\usepackage{amsmath,amsthm,amsfonts} 
\usepackage{dsfont}
\usepackage{graphicx}
\usepackage{textcomp}
\usepackage{url}
\usepackage{mathtools}
\usepackage{xcolor}
\usepackage{multirow}
\usepackage{hyperref}
\usepackage{multirow}

\DeclarePairedDelimiter\ceil{\lceil}{\rceil}
\DeclarePairedDelimiter\floor{\lfloor}{\rfloor}

\DeclareMathOperator{\opt}{\text{OPT}}
\DeclareMathOperator{\alg}{\text{ALG}}
\DeclareMathOperator{\bat}{\text{Batch}}
\DeclareMathOperator{\db}{\text{DB}}
\DeclareMathOperator{\bb}{\text{BB}}
\DeclareMathOperator*{\argmax}{arg\,max}
\DeclareMathOperator*{\argmin}{arg\,min}
\DeclareMathOperator*{\e}{\mathbb{E}}
\DeclareMathOperator*{\p}{\mathbb{P}}

\allowdisplaybreaks[4]
\marginsize{2cm}{2cm}{2.5cm}{2.5cm}

\usepackage{algorithm}
\usepackage{algpseudocodex} 

\newtheorem{assumption}{Assumption}[section]

\theoremstyle{definition}
\newtheorem{definition}{Definition}[section]

\newtheorem{theorem}{Theorem}[section]

\theoremstyle{remark}
\newtheorem{remark}{Remark}[subsection]
\newtheorem{lemma}[theorem]{Lemma}

\begin{document}
\title{Online Span Minimization for Flexible Uniform Jobs}

\author[1]{\textbf{Mozhengfu Liu}\thanks{Supported by an Adobe award and NSF 2216970 (IDEAL).}}
\author[2]{\textbf{Samir Khuller}\protect\footnotemark[1]}
\author[3]{\textbf{Xueyan Tang}\thanks{Supported by Singapore MOE AcRF Tier 1 Award RG15/25.}}
\affil[1]{Northwestern University,Evanston, IL, USA, \texttt{mozhengfuliu2027@u.northwestern.edu}}
\affil[2]{Northwestern University, Evanston, IL, USA, \texttt{samir.khuller@northwestern.edu}}
\affil[3]{Nanyang Technological University, Singapore, \texttt{asxytang@ntu.edu.sg}}

\maketitle

\begin{abstract}
Motivated by the critical need for energy-efficient scheduling in cloud computing, this paper investigates Span Minimization, a fundamental variant of the well-studied BusyTime problem. In the general BusyTime problem, $n$ jobs characterized by release times, deadlines, and processing times must be partitioned into bundles of capacity $B$, where the objective is to minimize the total active duration of the virtual machines. Span minimization addresses the specific case of unbounded capacity ($B = \infty$), a problem that serves as a vital precursor for achieving high-performance approximation guarantees in more complex scheduling environments.

While previous research established a deterministic $2$-approximation for interval jobs and a $3$-approximation for the general BusyTime problem, the online landscape of span minimization remains less explored. In this paper, we focus on the online version of span minimization. We demonstrate that randomization can be leveraged to break the known deterministic competitive barrier of $2$. For uniform-length jobs, we derive a randomized competitive upper bound of $\frac{1}{\ln{2}}\approx 1.443$ and a lower bound of $\frac{\sqrt{3}+1}{2}\approx 1.366$. Furthermore, we show that by introducing the ability to restart jobs, we can achieve an optimal competitive ratio equal to the golden ratio ($\phi \approx 1.618$). Our results provide new insights into the power of randomization and flexibility in online energy-aware scheduling.
\end{abstract}

\section{Introduction}

Motivated by the need to reduce energy utilization, we study the problem called ``span minimization", which is closely related to the well-known BusyTime problem \cite{chang2014lp,khandekar2010minimizing,koehler2017busy}.
In the most general form in the BusyTime problem, we are given access to virtual machines with batch capacity $B$ and a collection of $n$ jobs, each job $j$ with a release time, deadline and processing time (denoted by $r_j, d_j, p_j$ respectively). The goal is to partition $n$ jobs into bundles $S_1, S_2, \ldots, S_k$ with the following property. Each bundle contains a collection of jobs that can be feasibly scheduled such that no more than $B$ jobs are running simultaneously. The cost of this bundle is simply the duration for which the machine is on (from the earliest start time of any job in the bundle to the latest completion time of any job in the bundle).
We can call this $c(S_i) = e(S_i) - s(S_i)$ where $e(S_i)$ is the ending time of the set of jobs $S_i$ and $s(S_i)$ is the start time of a set of jobs $S_i$. The total cost is the sum of all bundles' costs ($\sum_i c(S_i)$). Note that we can create an unbounded number of bundles, since each is simply a rented virtual machine (VM) with some number of parallel processors.

A lot of research has been done on this problem. Since the problem is NP-hard \cite{winkler2003wavelength} even on the rather special case of interval (or rigid jobs) where $d_j=p_j+r_j$, significant research was done on approximation algorithms for it. 
The earliest algorithms for the problem only focused on the special case of interval jobs \cite{alicherry2003line,kumar2005approximation, flammini2010minimizing}, 
and the best known approximation is $2$. 
The online setting of scheduling interval jobs is also a well studied topic. 
The tight competitiveness of the non-clairvoyant and clairvoyant settings (whether a job's length is known at its release) are $\mu+\Theta(1)$ \cite{spaa2014, kamali2014, ren2016competitiveness} and $\Theta( \sqrt{\log{\mu}} )$ \cite{azar2017tight, azar2019tight} respectively, where $\mu$ is the ratio of the maximum job length to the minimum job length. 


Subsequently, for the general case (flexible jobs), Khandekar et al \cite{khandekar2010minimizing} developed a $4$ approximation, but this works in two steps.  {\em In the first step} solve the problem of {\em minimizing span}, or assuming unbounded batch capacity ($B = \infty$) to fix the starting times of the jobs. In other words, imagine that you have an infinite capacity machine - how would you schedule jobs respecting starting and ending times to minimize the cost of the schedule (the duration for which the machine is on). This is what we refer to as the ``span" of the set of jobs (see Figure \ref{fig:example}). We use this to place the jobs optimally by setting the start times, this reduces to the special case of BusyTime problem for Interval Jobs (since there is no flexibility any longer as to when the jobs start), and even this problem is NP-hard, but multiple algorithms are known that give a $2$ approximation for it as mentioned earlier. 

\begin{figure}[h]
    \centering
    \includegraphics[width=0.5\linewidth]{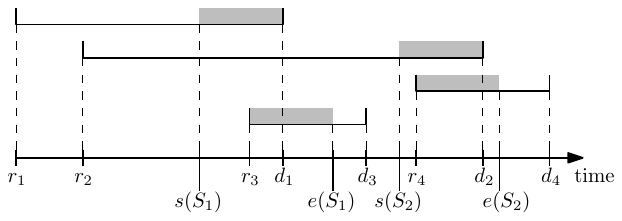}
    \caption{A suboptimal schedule of an instance of four uniform jobs. Moving job 2 earlier can reduce the overall cost.}
    \label{fig:example}
\end{figure}

However, overall this approach yields a 4 approximation for the BusyTime problem as the optimum solution might increase by a factor of 2 when we reduce to an interval instance (as shown by Chang et al \cite{chang2014lp}). Chang et al \cite{chang2014lp} develop a 3 approximation via a different approach, although in the first step we still need to solve the problem of minimizing span for unbounded batch capacity.

In the online setting of the BusyTime model, recently, Albers {\em et al.} \cite{albers2025onlinebusytimescheduling} showed a $2$ competitive online algorithm for uniform (identical length) jobs and a $9$ competitive online algorithm with $p_{max}$ lookahead for jobs with arbitrary lengths, where $p_{max}$ is the maximum job length.  


The only known method for minimizing span is based on dynamic programming and is extremely inefficient (with space complexity $\Omega(n^5)$ and time complexity $\Omega(n^7)$)\cite{khandekar2010minimizing} and is based on earlier work by Baptiste \cite{baptiste2006scheduling} developed for uniform jobs using dynamic programming. This prohibitive running time, is the main bottleneck in all good approximation algorithms for BusyTime Scheduling. Although the algorithms guarantee solutions within a factor of 3 of the optimal cost, as mentioned earlier, the issue is that the algorithms are simply not practical due to the first phase.

On the other hand, a very simple {\em online} (deterministic) algorithm to minimize the span (Algorithm Doubler) \cite{koehler2017busy}  gives a 5 approximation. The algorithm is very simple - it delays jobs as long as possible, and when a job has to be scheduled, we simply confirm running the machine for twice the length of the job; and all jobs that fit in this ``active time" when the machine is on, are scheduled.

In recent work, Liu et al \cite{DBLP:conf/spaa/LiuT24a} show that no deterministic online algorithm can get a bound better than 4.
However, while one question remains, how to close the gap between 4 and 5 - we turn our attention to how randomization could help get past the deterministic lower bound on the competitive ratio.

In practice, there are computing applications where jobs have strict deadlines and uniform processing lengths. Streaming data processing, for example, often requires processing fixed-size data blocks with strict deadlines to complete. Periodic financial reporting also involves running deadline-constrained jobs for similar data sizes.
Understanding the special case when all the jobs are uniform-length is surprisingly nontrivial in the online setting. 
We first show that for this case, there is a tight bound of 2 on the competitive ratio - both the upper and lower bounds is exactly 2. For example Algorithm Doubler obtains a 2 approximation in this case and Albers et al \cite{albers2025onlinebusytimescheduling} claim a lower bound of 2 for all deterministic online algorithms\footnote{As we show in the Appendix \ref{sec:correction}, the proof of this claim is not correct.}. We also show a lower bound of 2 on the competitive ratio of any deterministic online algorithm for uniform-length jobs [Calinescu, personal communication], however the proof is more involved.

Then we show something interesting: the upper bound of 2 can be improved via a simple oblivious randomized algorithm called $\bat^\gamma$. Instead of always stretching the window by doubling it, like in Algorithm Doubler, we pick a random value $\gamma$ chosen carefully, and run each machine by stretching the active time by a factor of $1+\gamma$.

One of the challenges in online scheduling of this type, is when we start the machine to run a job that has reached its latest-possible starting time, then jobs released later have a dilemma - should we start now, or delay (in case they have a later deadline) hoping to run for free with other jobs later on. We propose a {\em restart} model - here we start running a job, but if we learn about new jobs, it is possible to terminate the running job at no cost, but we can re-start it from scratch later on (so we lose all the progress we made). This is different from a model where pre-emption is for free and if half of a  job has completed, only the remaining half needs to be completed. In our {\em restart model}, if a job is not completed, we have to run the entire job from scratch. In the restart model we can develop online algorithms with a (tight) competitive ratio of $\phi \approx 1.618$.


Scheduling problems for uniform-length jobs have been widely studied. For example, just the (offline) feasibility of scheduling $n$ uniform-length jobs on $m$ identical machines is already a non-trivial problem \cite{Simons80,SimonsW89}. Moreover, a very clever and fast $O(n \log n)$ time algorithm for {\em one machine} was reported by Garey et al \cite{GareyJST81}. If a feasible schedule exists, the algorithm finds one. If no feasible schedule exists, the algorithm does not output any schedule. 

For {\em batch scheduling} as well, several problems for uniform-length jobs have \cite{ChangGK14} been studied with the goal of minimizing the ``on'' time for a single batching machine. An efficient algorithm called ``Lazy Activation" is described for the case of scheduling uniform jobs with release times and deadlines. Recently, Davies et al \cite{davies2022balancing} also considers the case of developing polynomial time algorithms for the tradeoff between energy utilization and average waiting time.

\begin{definition}
    For each job $j$, let $r_j\leq s_j< d_j$ denote its release time, starting deadline and deadline respectively. 
    Since we study uniform jobs, by letting each job's processing length be equal to some constant, say $1$, a valid job $j$ satisfies $s_j + 1 = d_j$. 
\end{definition}

The definition of competitiveness is stated as below. 

\begin{definition} [Competitiveness \cite{borodin1998}] \label{def:comp}
    Suppose $\mathcal{A}$ is the class of all deterministic online scheduling algorithms. 
    Suppose $\mathcal{X}$ is the class of all input job instances. 
    For any algorithm $a\in \mathcal{A}$ and any input instance $x\in \mathcal{X}$, let $a(x)$ denote the algorithm $a$'s cost with the input $x$. 
    Let $\opt(x)$ denote the offline optimal cost with input $x$.  
    Suppose $A$ is a randomized online algorithm that is a random variable taking values in $\mathcal{A}$. 
    We say $A$ is $\mathbf{c}$-competitive if $\e_{A} \left[ A(x) \right] \leq \mathbf{c}\cdot \opt(x)$ for any input instance $x\in \mathcal{X}$. 
\end{definition}

Our main results of Span Minimization, summarized in Table \ref{tab:results} together with some state-of-art results of the BusyTime problem, are stated as below:

\begin{table*}[h]
    \centering
    \scriptsize
    \begin{tabular}{|c|c|c|c|c|c|}
        \hline
          & uniform, unbounded & arbitrary, unbounded & uniform, bounded & arbitrary, bounded \\
        \hline
        offline & & optimal \cite{khandekar2010minimizing} &  & $3$-approximation \cite{chang2014lp} \\
        \hline
        \multirow{2}{*}{}
         deterministic online & $2$-competitive \cite{ren2017online,koehler2017busy,albers2025onlinebusytimescheduling} & $5$-competitive \cite{koehler2017busy} & $2$-competitive \cite{albers2025onlinebusytimescheduling} & \\
         (clairvoyant) & lower bound of $2$ (Section \ref{sec:lb_2}) & lower bound of $4$ \cite{DBLP:conf/spaa/LiuT24a} & & $\Omega(\sqrt{\log {\mu}})$ \cite{azar2019tight}\\
        \hline
        \multirow{2}{*}{randomized online}
         & $1.443$-competitive (Section \ref{sec:rand_ub}) &  && \\
        & lower bound of $1.366$ (Section \ref{sec:rand_lb}) &&&\\
        \hline
        \multirow{2}{*}{}deterministic online & $1.618$-competitive (Section \ref{sec:restart})  &  & & \\
        with restarts & lower bound of $1.618$ (Section \ref{sec:restart}) &&&\\
        \hline
    \end{tabular}
    \caption{algorithmic results of BusyTime problem in offline/online settings}
    \label{tab:results}
\end{table*}




\begin{itemize}
    \item There is a lower bound of $2$ for all deterministic online algorithms [Calinescu, personal communication]. In Appendix \ref{sec:correction}, we discuss an attempt by Albers {\em et al.} \cite{albers2025onlinebusytimescheduling} on the lower bound of $2$ and show how it is flawed. 
    \item We show a randomized online algorithm whose competitiveness is $\frac{e}{e-1}\approx 1.582$ and then improve it to have competitiveness $\frac{1}{\ln{2}}\approx 1.443$. 
    \item We show that any randomized online algorithm that is able to restart jobs cannot achieve competitiveness lower than $\frac{\sqrt{3}+1}{2}\approx 1.366$. 
    \item We show that the tight competitiveness for any deterministic online algorithm that is able to restart jobs is the golden ratio $\phi\approx 1.618$. 
\end{itemize}

{\bf Outline of Paper:}
In Section \ref{sec:lb}, we show any deterministic online algorithm cannot have competitiveness strictly below $2$. Then, in Section \ref{sec:rand_ub}, we investigate how randomization against oblivious adversaries helps to reduce the competitiveness strictly below $2$ in the online setting. Correspondingly, in Section \ref{sec:rand_lb}, we propose a non-trivial oblivious adversary to shrink the gap of competitiveness. 
Motivated by the oblivious adversary in Section \ref{sec:rand_lb}, we consider the deterministic online setting where algorithms can restart jobs during the online process in Section \ref{sec:restart} where we show its tight competitiveness is the golden ratio. Please refer to the appendixes for missing proofs. 

\section{Adversary in deterministic online setting} \label{sec:lb}

In Section \ref{sec:lb_3/2}, as a warm-up we show that the competitiveness of any deterministic online algorithm is at least $\frac{3}{2}$. 
Then, we improve the lower bound to be $2$ in Section \ref{sec:lb_2}. 

\subsection{A simple adversary} \label{sec:lb_3/2}

Let $K$ be a large integer. A rigid job $\hat{j}^1$ with $r_{\hat{j}^1} = s_{\hat{j}^1} = 0$ is released initially. For each $h = 0,1,\ldots,K$, job $j_h$ with $r_{j_h} = \frac{h}{K}$ and $s_{j_h} = 3$ is released optionally. At last, a rigid job $\hat{j}^2$ with $r_{\hat{j}^2} = s_{\hat{j}^2} = 3$ may or may not be released.  
Now, we present the adversary as described in Algorithm \ref{alg:det_3/2}. 

\begin{algorithm} 
\caption{Lower bound of 3/2}
\label{alg:det_3/2}
\algrenewcommand\algorithmicprocedure{\textbf{Upon}}
\begin{algorithmic}[1]
\State Release $\hat{j}^1$ and $j_0$ at time $0$; 
\Comment{Assume the online process starts at time $0$}
\Procedure{$j_h$ is started for some $h$}{}\Comment{Adversary waits until $j_h$ is started}
    \State $t^*\gets $ when $j_h$ is started; 
    \If{$t^*\in [0,1)$}
        \State Release job $j_{\floor{t^*\cdot K}+1}$; 
    \ElsIf{$t^*\in [1,2]$}
        \State Release $\hat{j}^2$; 
    \EndIf
\EndProcedure
\end{algorithmic}
\end{algorithm}

First, we show that the adversary is defined in the online manner. It suffices to look at lines 2-7. Note that lines 2-7 happen at the same time $t^*$. 
It suffices to ensure that all the knowledge used by the adversary are at or before time $t^*$ and all the job releases are at or after time $t^*$. 
For the former, note that at time $t^*$, it is feasible to determine the value of $t^*$ and hence line 4 and 6 are feasible. 
For the latter, the release time of $j_{\floor{t^*\cdot K}+1}$ is $\frac{\floor{t^*\cdot K}+1}{K} \geq t^*$, and the release time of $\hat{j}^2$ is $3 = s_{j_h} \geq t^*$. 

It remains to compute the lower bound of competitiveness that Algorithm \ref{alg:det_3/2} gives. 
Take any online scheduling $a$. 
Case 1: $\hat{j}^2$ is released. 
By algorithm definition, there exists some $t^*\in [1,2]$ when $j_{h}$ is started. Then, it follows that the cost of $a$ is at least $3$ that are during $[0,1]$, $[t^*,t^*+1]$ and $[3,4]$. 
For the optimal cost, each job $j_h$ is started at its starting deadline to overlap with $\hat{j}^2$, and we have the cost of $2$. 
The ratio is at least $\frac{3}{2}$. 
Case 2: $\hat{j}^2$ is not released. 
It follows that $j_h$ is never started during $[1,2]$ for any $h$.
Let $j_{h^*}$ be the last $j_h$ released. Note that $h^*$ exists. 
By Algorithm definition (line 4-5), $a$ starts some $j_h$ during $\left[\frac{h^*-1}{K},\frac{h^*}{K}\right)$. It follows that $a$ is processing jobs during $[0,\frac{h^*-1}{K}+1]$. 
By definition of $h^*$, $a$ starts job $j_{h^*}$ during $(2,3]$. It follows that the cost of $a$ during $[2,4]$ is at least $1$. 
Summing up, the cost of $a$ is at least $\frac{h^*-1}{K}+2$. 
For the optimal cost, each job $j_h$ is started at its release time. Then, the optimal cost is $\frac{h^*}{K}+1$. 
The ratio is at least $\frac{\frac{h^*-1}{K}+2}{\frac{h^*}{K}+1}$, where $h^*$ is determined by $a$. 
Note that $\frac{\frac{h^*-1}{K}+2}{\frac{h^*}{K}+1} \geq \frac{\frac{K-1}{K}+2}{2}$.  Eventually, we see that the competitiveness of $a$ is at least $\min\left\{\frac{3}{2},\frac{\frac{K-1}{K}+2}{2}\right\} = \frac{\frac{K-1}{K}+2}{2}$. 
Since $K$ is arbitrarily large, the competitiveness of $a$ approaches $\frac{3}{2}$.

\subsection[Lower bound of 2]{Lower bound of $2$} \label{sec:lb_2}

Now, we show that any deterministic online algorithm has competitiveness at least $2$. 
Let $K$ be a large integer. 
Consider the adversary as the following Algorithm \ref{alg:det_2} with parameters $K$ and $R$. 
Algorithm \ref{alg:det_2} has $R$ iterations. 
Iteration $i$ starts at $t_i$ for $i = 1,2,\ldots,R$, and job $j^i_h$ with $r_{j^i_h} = t_i + \frac{h}{K}$ and $s_{j^i_h} = t_i + 3$ for $h = 0,1,\ldots,K$ is released by the adversary optionally. 

\begin{algorithm} 
\caption{Lower bound of 2}
\label{alg:det_2}
\algrenewcommand\algorithmicprocedure{\textbf{Upon}}
\begin{algorithmic}[1]
\State $t_1 \gets 0$; \Comment{Assume the online process starts at time $0$}
\State Release rigid job $\hat{j}^1$ at time $t_1$;
\For{$i = 1,2,\ldots,R$}
    \State Release job $j^i_0$ at $t_i$; 
    \Procedure{$j^i_h$ is started for some $h$}{}
    \Comment{Adversary waits until some $j^i_h$ is started}
        \State $t^*\gets $ the current time; 
        \If{$t^*\in [t_i,t_i+1)$}
            \State Release job $j^i_{\floor{(t^*-t_i)\cdot K}+1}$; 
        \ElsIf{$t^*\in [t_i+1,t_i+2]$} \label{alg_line:correct_move}
            \State $t_{i+1}\gets t_i + 3$; 
            \State Release rigid job $\hat{j}^{i+1}$ at time $t_{i+1}$; 
        \Else
            \State $t_{i+1}\gets t^*$; 
            \State $\hat{j}^{i+1}\gets j^i_h$; 
        \EndIf
    \EndProcedure
\EndFor
\end{algorithmic}
\end{algorithm}

Consider the following running example when $K = 4$ and $R = 5$. See Figure \ref{fig:det_2} where black and gray rectangles represent the processing of $\hat{j}^i$ and $j^i_h$ respectively. 
At time $t_1$, jobs $\hat{j}^1$ and $j^1_0$ are released. Job $j^1_0$ is started at time $t_1 + 1/K$ and hence $j^1_2$ is released at time $t_1 + 2/K$. Job $j^1_2$ is started at $t_1 + 1.25 \in [t_1 + 1, t_1 + 2]$, and hence rigid job $\hat{j}^2$ is released at time $t_2 = t_1 + 3$. 
For iteration $2$, job $j^2_0$ is released at time $t_2$. Job $j^2_0$ is started at time $t_2$ and hence job $j^2_1$ is released at time $t_2 + 1/K$. Job $j^2_1$ is started at time $t_2 + 1/K$ and hence job $j^2_2$ is released at time $t_2 + 2/K$. Job $j^2_2$ is started at time $t_2 + 5/2 > t_2 + 2$, and hence $j^2_2$ itself becomes $\hat{j}^3$ and $t_3 = t_2 + 5/2$. 
For iteration $3$, job $j^3_0$ is released at time $t_3$. Job $j^3_0$ is started at time $t_3 + 2.75 > t_3 + 2$, and hence $j^3_0$ becomes $\hat{j}^4$ and $t_4 = t_3 + 2.75$. 
For iteration $4$, job $j^4_0$ is released at time $t_4$. Job $j^4_0$ is started at time $t_4 + 3/K$, and hence job $j^4_4$ is released at time $t_4 + 1$. Job $j^4_4$ is started at time $t_4 + 2.75 > t_4 + 2$, and hence job $j^4_4$ becomes $\hat{j}^5$ and $t_5 = t_4 + 2.75$. 
For iteration $5$, job $j^5_0$ is released at time $t_5$. Job $j^5_0$ is started at time $t_5 + 3/K$, and hence job $j^5_4$ is released at time $t_5+1$. Job $j^5_4$ is started at $t_5+1\in [t_5+1,t_5+2]$, and hence rigid job $\hat{j}^6$ is released at time $t_5+3$ and $t_6 = t_5 + 3$. 

\begin{figure*}[t]
  \centering
  \includegraphics[width=0.8\linewidth]{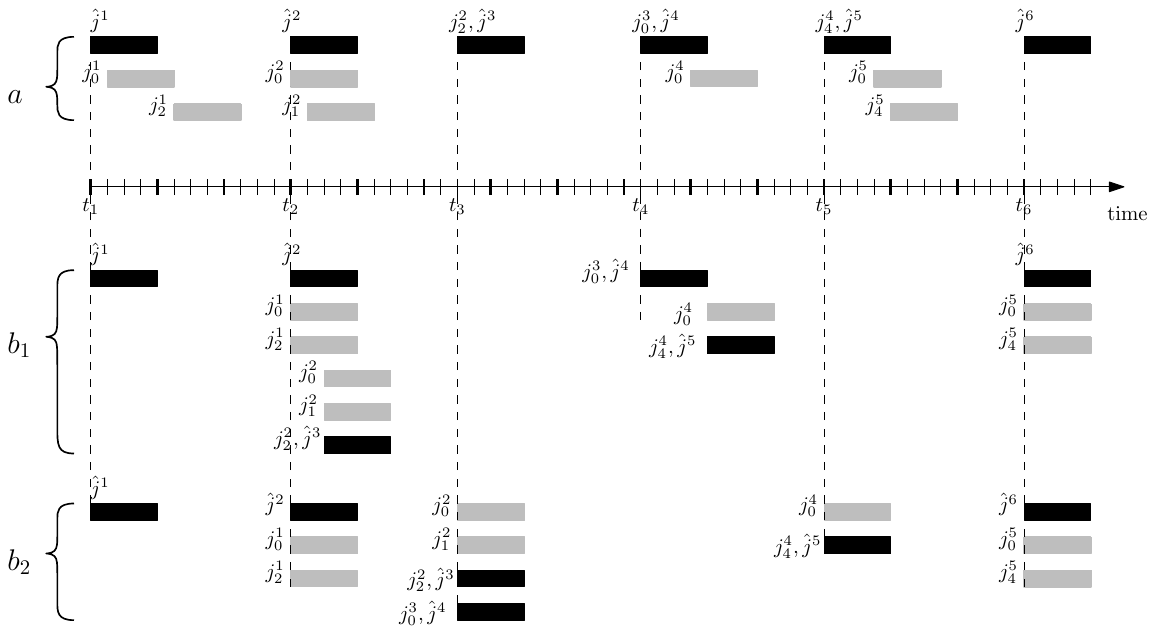}
  \caption{A running example of the lower bound of $2$ when $K = 4$ and $R = 5$. $b_1$ and $b_2$ are two options for the optimal scheduling. }
  \label{fig:det_2}
\end{figure*}

Now, we do the analysis. Take any online scheduling algorithm $a$. 
Let $b$ denote an offline algorithm to be determined later whose cost serves as an upper bound of the optimal cost. 
Consider the partitioning $S_1:=\{i: \hat{j}^{i+1}\text{~is~rigid}\}$ and $S_2:=\{i:\hat{j}^{i+1}\text{~is~not~rigid}\}$. 
Consider the cost of $a$ during $[t_i+1,t_{i+1}+1]$ for each $i$. 
For each $i$ in $S_1$, by definition, $a$ starts some $j^i_h$ at time $t^*\in [t_i+1,t_i+2]$. It follows that the cost of $a$ (during $[t_i+1,t_{i+1}+1]$) is at least $2$ that are during $[t^*,t^*+1]$ and $[t_{i+1},t_{i+1}+1]$. 
For the optimal cost, $b$ starts all the jobs $j^i_h$ at $t_{i+1} = t_{i}+3 = s_{j^i_h}$. 
It follows that the cost of $b$ (during $[t_i+1,t_{i+1}+1]$) is $1$. 

For each $i\in S_2$, let $h_i^*:=\max_{j^i_h\text{~released}} h$. 
Take any $i_1 \leq i_2$ such that $i_1-1,i_2+1\notin S_2$ and $i\in S_2$ for each $i_1\leq i \leq i_2$. 
Partition the index set $\{i_1,i_1+1,\ldots,i_2\}$ into $E:=\{i_1+2k: k \in \mathbb{N}, i_1+2k\leq i_2\}$ and $O:=\{i_1+2k+1: k \in \mathbb{N}, i_1+2k+1\leq i_2\}$. 
Consider the following two scheduling of jobs $\{j^i_h: i_1\leq i \leq i_2, j^i_h\text{~released}\}$. 
\begin{itemize}
    \item $b_1$ starts all jobs $\{j^i_h: j^i_h\text{~released}\}$ at time $t_i + \frac{h^*_i}{K}$ for each $i \in E$, and all jobs $\{j^i_h: j^i_h\text{~released}\}$ at time $t_{i+1}$ for each $i \in O$. 
    \item $b_2$ starts all jobs $\{j^i_h: j^i_h\text{~released}\}$ at time $t_{i+1}$ for each $i \in E$, and all jobs $\{j^i_h: j^i_h\text{~released}\}$ at time $t_i + \frac{h^*_i}{K}$ for each $i \in O$. 
\end{itemize}
Let $b$ choose the scheduling which gives the lower cost between $b_1$ and $b_2$. Also see Figure \ref{fig:det_2} for how $b_1$ and $b_2$ schedule the jobs in the running example. 

The ultimate goal is to show that the cost of $b$ during $[t_{i_1}+1,t_{i_2+1}+1]$ is roughly at most half of the cost of $a$ during the same period. 
Here, we compute the cost of $a$. 
Observe that job $j^i_{h^*_i}$ is released only if $a$ starts some job $j^i_{\tilde{h}_i}$ during $\left[t_i + \frac{h^*_i-1}{K}, t_i + \frac{h^*_i}{K}\right)$. 
Then, for each $i = i_1,i_1+1,\ldots,i_2$, the cost of $a$ during $[t_i+1, t_{i+1}+1]$ is  at least $\frac{h^*_i-1}{K} + 1$ where $\frac{h^*_i-1}{K}$ is the cost during $[t_i+1, t_i+1+\frac{h^*_i-1}{K}]$ and $1$ is the cost during $[t_{i+1}, t_{i+1}+1]$. 
Note that we have the two cost during $[t_i+1, t_i+1+\frac{h^*_i-1}{K}]$ and during $[t_{i+1}, t_{i+1}+1]$ disjoint because $t_{i+1} \geq t_i + 2 \geq t_i+1+\frac{h^*_i-1}{K}$. 
Summing up all the $i$, the cost of $a$ during $[t_{i_1}+1,t_{i_2+1}+1]$ is at least $\sum_{i = i_1,i_1+1,\ldots,i_2} (\frac{h^*_i-1}{K}+1)$. 

Now, we compute $b$'s cost during $[t_{i_1}+1,t_{i_2+1}+1]$ if $b$ chooses $b_1$ for the scheduling of jobs $\{j^i_h: i_1\leq i \leq i_2, j^i_h\text{~released}\}$. 
Claim that $b$ is not busy during $[t_i + 1 + \frac{h^*_i}{K}, t_{i+2}]$ for each $i\in E$. 
For any job released before $j^i_0$, $b$ starts it at or before time $t_i$. 
For each job $j^i_h$ (including $\hat{j}^{i+1}$), $b$ starts it at time $t_i + \frac{h^*_i}{K}$. 
For any job of the rest, $b$ starts it at or after time $t_{i+2}$ where $\hat{j}^{i+2}$ is rigid when $i = i_2$. 
Therefore, the claim is proved. 
One can check that $\left(\cup_{i\in E} \left[t_i + 1, t_i + 1 + \frac{h^*_i}{K}\right]\right) \cup \left( \cup_{i\in O}[t_{i+1},t_{i+1}+1] \right) \cup \left( \cup_{i\in E} [t_i + 1 + \frac{h^*_i}{K}, t_{i+2}] \right)$ is a partitioning of $[t_{i_1}+1,t_{i_2+2}]$ if $i_2\in E$, while it is also a partitioning of $[t_{i_1}+1,t_{i_2+1}+1]$ if $i_2\in O$. 
It follows from the claim that $b$'s cost during $[t_{i_1}+1,t_{i_2+1}+1]$ is at most $|O| + \sum_{i \in E} \frac{h^*_i}{K}$ where $|O|$ is the costs during $[t_{i+1},t_{i+1}+1]$ for each $i\in O$ and $\frac{h^*_i}{K}$ is the cost during $\left[t_i + 1, t_i + 1 + \frac{h^*_i}{K}\right]$ for each $i\in E$. 

Similarly, we compute $b$'s cost during $[t_{i_1}+1,t_{i_2+1}+1]$ if $b$ chooses $b_2$ for the scheduling of jobs $\{j^i_h: i_1\leq i \leq i_2, j^i_h\text{~released}\}$. 
Claim 1: $b$ is not busy during $[t_{i_1}+1,t_{i_1+1}]$. 
For any job released before $j^{i_1}_0$, $b$ starts it at or before time $t_{i_1}$. 
For each job $j^{i_1}_h$ (including $\hat{j}^{i+1}$), $b$ starts it at time $t_{i_1+1}$. 
For any job of the rest, $b$ starts it at or after time $t_{i_1+1}$. 
Therefore, claim 1 is proved. 
Claim 2: $b$ is not busy during $[t_i + 1 + \frac{h^*_i}{K}, t_{i+2}]$ for each $i\in O$. 
For any job released before $j^i_0$, $b$ starts it at or before time $t_i$. 
For each job $j^i_h$ (including $\hat{j}^{i+1}$), $b$ starts it at time $t_i + \frac{h^*_i}{K}$. 
For any job of the rest, $b$ starts it at or after time $t_{i+2}$. 
Therefore, claim 2 is proved. 
One can check that $[t_{i_1}+1, t_{i_1+1}]\cup \left(\cup_{i\in O} \left[t_i + 1, t_i + 1 + \frac{h^*_i}{K}\right]\right) \cup \left( \cup_{i\in E}[t_{i+1},t_{i+1}+1] \right) \cup \left( \cup_{i\in O} [t_i + 1 + \frac{h^*_i}{K}, t_{i+2}] \right)$ is a partitioning of $[t_{i_1}+1,t_{i_2+2}]$ if $i_2\in O$, while it is also a partitioning of $[t_{i_1}+1,t_{i_2+1}+1]$ if $i_2\in E$. 
It follows that the cost of $b$ during $[t_{i_1}+1,t_{i_2+1}+1]$ is $|E| + \sum_{i \in O} \frac{h^*_i}{K}$ where $|E|$ is the costs during $[t_{i+1},t_{i+1}+1]$ for each $i\in E$ and $\frac{h^*_i}{K}$ is the cost during $\left[t_i + 1, t_i + \frac{h^*_i}{K} + 1\right]$ for each $i\in O$. 

Since $b$ chooses the scheduling with the lower cost, we have the cost of $b$ during $[t_{i_1}+1,t_{i_2+1}+1]$ at most the average of $b_1$'s cost and $b_2$'s cost. It follows that the cost of $b$ during $[t_{i_1}+1,t_{i_2+1}+1]$ is at most $\frac{1}{2}\cdot (|O| + \sum_{i \in E} \frac{h^*_i}{K} + |E| + \sum_{i \in O} \frac{h^*_i}{K}) = \frac{1}{2}\cdot ((i_2 - i_1 + 1) + \sum_{i = i_1,i_1+1,\ldots,i_2} \frac{h^*_i}{K})$. 
Summing up all the segments like $i_1,\ldots,i_2$, we have the cost of $b$ during $\cup_{i\in S_2} [t_i+1, t_{i+1}+1]$ at most $\frac{1}{2}\cdot (|S_2| + \sum_{i\in S_2} \frac{h^*_i}{K})$. 
It follows that the cost of $b$ is at most $1 + |S_1| + \frac{1}{2}\cdot ( |S_2| + \sum_{i\in S_2} \frac{h^*_i}{K} )$ where $1$ is the cost during $[t_1,t_1+1]$, $|S_1|$ are the costs during $\cup_{i\in S_1} [t_i+1,t_{i+1}+1]$, and the remaining are the costs during $\cup_{i\in S_2} [t_i+1,t_{i+1}+1]$. 

Eventually, the cost of $a$ is at least $1 + 2\cdot |S_1| + |S_2| + \sum_{i\in S_2} \frac{h^*_i-1}{K}$ where $1$ is the cost during $[t_1,t_1+1]$, $2\cdot |S_1|$ are the costs during $\cup_{i\in S_1} [t_i+1,t_{i+1}+1]$, $|S_2|$ are the costs during $\cup_{i\in S_2} [t_{i+1},t_{i+1}+1]$, and $\frac{h^*_i-1}{K}$ is the cost during $[t_i + 1, t_i + \frac{h^*_i-1}{K} + 1]\subset [t_i + 1, t_{i+1}]$ for each $i\in S_2$. 

The competitiveness ratio of $a$ is at least $\frac{1 + 2\cdot |S_1| + |S_2| + \sum_{i\in S_2} \frac{h^*_i-1}{K}}{1 + |S_1| + \frac{1}{2}\cdot ( |S_2| + \sum_{i\in S_2} \frac{h^*_i}{K} )} = 2 - \frac{ 1 + \frac{|S_2|}{K}}{1 + |S_1| + \frac{1}{2}\cdot ( |S_2| + \sum_{i\in S_2} \frac{h^*_i}{K} )} \geq 2 - \frac{ 1 + \frac{R}{K}}{\frac{1}{2} R}$, where we have $R = |S_1|+|S_2|$. 
Clearly, the term $\frac{ 1 + \frac{R}{K}}{\frac{1}{2} R}$ goes to zero as $R$ and $K$ go to infinity. 
Therefore, the competitiveness of any online scheduling algorithm is at least $2$, since $K$ and $R$ can be chosen arbitrarily large.

\section{Randomized online algorithm against oblivious adversary} \label{sec:rand_ub}

In this section, some preparation work for algorithm analysis are presented in Section \ref{sec:rand_ub_prep}. 
Section \ref{sec:rand_ub_coin} uses a single Bernoulli variable (flip a coin) to break the barrier of $2$ against any oblivious adversary. 
Then, section \ref{sec:rand_ub_1} minimizes the competitive ratio of the same algorithm by choosing an appropriate continuous distribution. 
Finally, Section \ref{sec:rand_ub_2} makes an improvement via a new algorithmic idea. 

\subsection{Preparations} \label{sec:rand_ub_prep}

\begin{definition} \label{alg:batch}
    For each real number $\gamma\in [0,\infty)$, define $\bat^\gamma$ to be the deterministic online algorithm such that 
    \begin{itemize}
        \item upon job $j$ is released, start job $j$ now if job $j$ can be scheduled fully within $[s_f, s_f + 1 + \gamma]$ for some flag job $f$, and delay it otherwise; 
        \item upon some delayed job $f$ hits its starting deadline $s_f$, mark $f$ as a flag job with respect to $\gamma$ and start all the jobs that can be feasibly started (including $f$) now; 
    \end{itemize}
\end{definition}
\begin{remark}
    $\bat^0$ and $\bat^1$ have been investigated in different generalizations of our uniform job scheduling model \cite{ren2017online,koehler2017busy}. 
\end{remark}
\begin{definition} \label{def:batch}
    For each real number $\gamma\in [0,\infty)$, let $\mathcal{F}^{\gamma}$ denote the set of flag jobs with respect to $\gamma$ after running $\bat^\gamma$ on the whole input job instance. 
    For each $f\in \mathcal{F}^{\gamma}$, We call the interval $[s_f, s_f + 1 + \gamma]$ the $\gamma$-extension associated with the flag job $f$ with respect to $\gamma$. 
    For each flag job $f\in \mathcal{F}^{\gamma}$, let $P^\gamma_f$ denote all the non-flag jobs fitted within the $\gamma$-extension associated with it while running $\bat^\gamma$. 
\end{definition}

Take any input job instance $x\in \mathcal{X}$. 
Take any optimal scheduling of $\cup_{\gamma\in [0,\infty)} \mathcal{F}^\gamma$. 
Suppose $I_i$ with $i\in \mathcal{I}$ is the $i$-th (from left to right) maximal busy intervals (called blocks from now on) in the optimal scheduling of $\cup_{\gamma\in [0,\infty)} \mathcal{F}^\gamma$. 
Suppose $I_i = [I^-_i, I^+_i]$ for $i\in \mathcal{I}$. 
Note that $\sum_{i\in \mathcal{I}} I^+_i - I^-_i$ is an lower bound of the optimal cost of scheduling all the jobs. 
Take any $i\in \mathcal{I}$. 
Suppose $\mathcal{F}^\gamma_i$ is the set of flag jobs with respect to $\gamma$ scheduled within $I_i$ by the optimal scheduling of $\cup_{\gamma\in [0,\infty)} \mathcal{F}^\gamma$. 
Suppose $C^\gamma_i$ is the cost of processing $\cup_{f\in \mathcal{F}^\gamma_i} \{f\}\cup P^\gamma_f$ while running $\bat^\gamma$ on the whole input job instance. 
Note that the total cost of $\bat^\gamma$ is bounded by $\sum_{i\in \mathcal{I}} C^\gamma_i$. 

\begin{lemma} \label{lem:I^-}
    We have $s_f \geq I^-_i$ for each $f\in \mathcal{F}^\gamma_i$. 
\end{lemma}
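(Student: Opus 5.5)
The plan is to compare two things: the time at which the optimal schedule of $\cup_{\gamma\in[0,\infty)}\mathcal{F}^\gamma$ starts $f$, and the starting deadline $s_f$. Fix $i\in\mathcal{I}$ and $f\in\mathcal{F}^\gamma_i$. Let $\sigma_f$ be the start time that the fixed optimal schedule of $\cup_{\gamma\in[0,\infty)} \mathcal{F}^\gamma$ assigns to $f$.

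Since this is a feasible schedule of $f$, we have $r_f\le \sigma_f\le s_f$. The upper bound $\sigma_f\le s_f$ is the one we need, and it holds by the definition of the starting deadline.

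Next, $f\in\mathcal{F}^\gamma_i$ means, by definition, that $f$ is scheduled within the block $I_i=[I^-_i,I^+_i]$. So its whole execution interval $[\sigma_f,\sigma_f+1]$ lies inside $I_i$. Here I use that the blocks are the maximal busy intervals, so each job's execution interval lies in exactly one block: a job's interval is connected and busy, hence contained in one maximal busy interval. In particular, $\sigma_f\ge I^-_i$.

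Chaining the two inequalities gives $s_f\ge\sigma_f\ge I^-_i$, which is the claim. There is no real obstacle here. The only point to state carefully is the convention that ``scheduled within $I_i$'' refers to the execution interval of $f$ in the optimal schedule of the flag jobs. It does not refer to the time $s_f$ at which $\bat^\gamma$ starts $f$; note that $\bat^\gamma$ starts every flag job exactly at its starting deadline. The lemma then just says that the online start time of a flag job is never earlier than the offline start time of its block.
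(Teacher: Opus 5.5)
Your proof is correct and is essentially the paper's argument. The optimal schedule of the flag jobs starts $f$ at or after $I^-_i$, because $f$ lies in the block $I_i$. Feasibility bounds that start time above by $s_f$, so $s_f \geq I^-_i$.
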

\begin{proof}
    Note that the taken optimal scheduling starts job $f$ at or after time $I^-_i$. 
    It follows that $f$'s starting deadline is also at or after time $I^-_i$.
\end{proof}

\begin{lemma} \label{lem:key}
    For each $\gamma$, $|\{f\in \mathcal{F}^\gamma_i: s_f\geq I^+_i - 1 - \gamma\}| \leq 1$. 
\end{lemma}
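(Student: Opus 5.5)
The plan is a contradiction argument: assume there are two distinct flag jobs $f, f' \in \mathcal{F}^\gamma_i$ with $s_f, s_{f'} \geq I^+_i - 1 - \gamma$, say $s_f \leq s_{f'}$. The idea is to use the rules of $\bat^\gamma$ to force the later flag job $f'$ to be released strictly after $s_f + \gamma$. Then it cannot be processed inside the block $I_i$ in the optimal schedule of $\cup_{\gamma} \mathcal{F}^\gamma$.

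First I will rule out $s_{f'} = s_f$. If both jobs hit their starting deadlines at the same moment, then when $f$ is marked as a flag, $\bat^\gamma$ starts every job that can feasibly be started at that time, and this includes $f'$. So $f'$ is no longer delayed and is never marked as a flag. I will adopt the convention that simultaneous deadline events are processed one at a time, so that only one of the two becomes a flag. Hence $s_f < s_{f'}$.

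Next I will show $r_{f'} > s_f$. If $f'$ were released at or before $s_f$, it would still be delayed at time $s_f$, because it only becomes a flag later, at $s_{f'} > s_f$. It would also be feasible to start at time $s_f$, since $r_{f'} \leq s_f < s_{f'}$. The second rule of $\bat^\gamma$ would then start $f'$ at $s_f$, so $f'$ would not be a flag, a contradiction. Thus $f'$ is released after $f$ became a flag. By the first rule, at its release $f'$ could not be scheduled fully within $[s_f, s_f + 1 + \gamma]$. Since $f'$ can start at any time in $[r_{f'}, s_{f'}]$ and $s_{f'} > s_f$, this failure can only happen if every feasible start time exceeds $s_f + \gamma$. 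This gives $r_{f'} > s_f + \gamma$.

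Finally, the optimal schedule of the flag jobs places $f'$ inside $I_i$ and starts it at some time at least $r_{f'}$. Its completion time is then more than $s_f + \gamma + 1 \geq I^+_i$, so $f'$ ends strictly after $I^+_i$, contradicting $f' \in \mathcal{F}^\gamma_i$. The main obstacle I expect is the boundary bookkeeping: simultaneous events at time $s_f$, and whether ``fits within the closed interval'' yields strict or non-strict inequalities. I will fix the conventions so that a job released exactly at $s_f$ is handled by the flag rule, and a job with $r_{f'} = s_f + \gamma$ counts as fitting. With these conventions the strict inequality $r_{f'} > s_f + \gamma$ holds and the contradiction goes through.
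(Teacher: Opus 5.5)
Your proof is correct and follows the paper's argument. You order the two flag jobs by starting deadline, use the rules of $\bat^\gamma$ to force $r_{f'} > s_f + \gamma \geq I^+_i - 1$, and conclude that $f'$ cannot fit inside $I_i$. Your separate handling of $r_{f'} \leq s_f$ and of simultaneous deadlines just spells out the boundary cases that the paper folds into ``otherwise $f_2$ would be scheduled within the $\gamma$-extension associated with $f_1$''.
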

\begin{proof}
    Prove by contradiction. Assume $f_1,f_2\in \{f\in \mathcal{F}^\gamma_i: s_f\geq I^+_i - 1 - \gamma\}$ and $f_1$ hits its starting deadline before $f_2$ when running $\bat^\gamma$. Since $f_2$ is a flag job with respect to $\gamma$, $r_{f_2} > s_{f_1}+\gamma$ must hold, otherwise $f_2$ would be scheduled within the $\gamma$-extension associated with $f_1$. 
    Then, we have $r_{f_2} > s_{f_1}+\gamma \geq (I^+_i - 1 - \gamma) + \gamma = I^+_i - 1$ which contradicts to the fact that the optimal scheduling schedules $f_2$ within $I_i$.  
\end{proof}

\begin{remark} \label{rem:barrier_2}
    If $I^+_i - I^-_i \leq 1 + \gamma$, then $|\mathcal{F}^\gamma_i| \leq 1$ by Lemma \ref{lem:I^-} and \ref{lem:key}. 
    It follows that $C^\gamma_i \leq 1 + \gamma \leq (1 + \gamma)\cdot (I^+_i - I^-_i)$. 
    If $I^+_i - I^-_i > 1 + \gamma$, then if any $\gamma$-extension associated with a flag job in $\mathcal{F}^\gamma_i$ ends after $I^+_i$, then the flag job is unique by Lemma \ref{lem:key}. 
    It follows that $C^\gamma_i \leq (I^+_i - I^-_i) + 1 + \gamma < 2\cdot (I^+_i - I^-_i)$.
    Therefore, we have for any input job instance $x\in \mathcal{X}$, the algorithm's cost $\bat^\gamma(x) = \sum_{i\in \mathcal{I}} C^\gamma_i \leq \sum_{i\in \mathcal{I}} \max\{2,1 + \gamma\}\cdot (I^+_i - I^-_i) = \max\{2,1 + \gamma\}\cdot \opt(x)$. 
    It follows that $\bat^\gamma$ for each $\gamma\in [0,1]$ is $2$-competitive. 
\end{remark}

One may think of a randomized online algorithm acting between $\bat^0$ and $\bat^1$ randomly. The following job instance (see Figure \ref{fig:bat_01}) says it just doesn't work. Let $R$ be a large integer. Consider releasing rigid job $j_i$ with $r_{j_i} = 3\cdot i$, job $j'_i$ with $r_{j'_i} = 3\cdot i + \epsilon$ and $s_{j'_i} = 3\cdot i + 1$, and job $j''_i$ with $r_{j''_i} = 3\cdot i + 1$ and $s_{j''_i} = 3\cdot i + 3$, for $i = 0,1,\ldots,R$, where $\epsilon>0$ is a small real number. Observe that either of $\bat^0$ or $\bat^1$ starts job $j''_i$ at time $3\cdot i + 1$, for each $i$. Since the scheduling for distinct $i$'s are mutually disjoint, the algorithm's cost is $2\cdot R + O(1)$. Observe that if $j'_i$ is started at $3\cdot i + \epsilon$ and $j''_i$ is started at $3\cdot i + 3$ for each $i$ (i.e. $\bat^\gamma$ for any $\gamma\in [\epsilon,1)$), then the cost of the scheduling is $(1 + \epsilon)\cdot R + O(1)$. When $R\rightarrow \infty$ and $\epsilon\rightarrow 0$, the ratio becomes close to $2$.

\begin{figure}[h]
  \centering
  \includegraphics[width=0.7\linewidth]{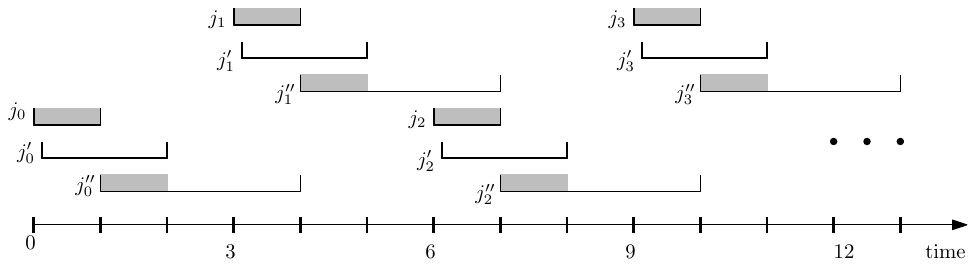}
  \caption{An instance where $\bat^0$ and $\bat^1$ make no difference}
  \label{fig:bat_01}
\end{figure}

\subsection{Barely Randomization: flipping a coin} \label{sec:rand_ub_coin}

It turns out that $\gamma$ using a value in $(0,1)$ makes randomization work. 
Let $\Gamma$ denote the random variable replacing the variable $\gamma$. 
In this subsection, we try a baby algorithm design where $\Gamma$ takes only two values. 
The algorithm initially flips a coin before all jobs are released. If Head, $\Gamma = 0$, else $\Gamma = 1/2$. The goal is to show that the baby algorithm $\bat^\Gamma$ has competitiveness strictly below $2$. 

Let $L_i$ denote $I^+_i - I^-_i$. Consider the following case-by-case analysis: 
\begin{itemize}
    \item Case 1: $L_i \leq 3/2$. 
    Looking at $\bat^{0}$, Lemma \ref{lem:I^-} and \ref{lem:key} imply $C^{0}_i \leq L_i+1 \leq 2\cdot L_i$. 
    Looking at $\bat^{1/2}$, Lemma \ref{lem:I^-} and \ref{lem:key} imply $|\mathcal{F}^{1/2}_i|\leq 1$. Thus, $C^{1/2}_i \leq 1 + 1/2 \leq 3/2\cdot L_i$. 
    
    \item Case 2: $L_i > 3/2$. 
    Looking at $\bat^{0}$, Lemma \ref{lem:I^-} and \ref{lem:key} imply $C^{0}_i \leq L_i+1 \leq 5/3 \cdot L_i$. 
    Looking at $\bat^{1/2}$, Lemma \ref{lem:I^-} and \ref{lem:key} imply $C^{1/2}_i \leq L_i + 3/2 < 2\cdot L_i$. 
\end{itemize}

Let $\alpha\in (0,1)$ denote the probability of the coin facing Head (choosing $\bat^0$). 
Then, $\alpha\cdot C^{0}_i + (1-\alpha)\cdot C^{1/2}_i \leq \mathbf{c}\cdot L$, where $\mathbf{c} = \max\left\{\alpha\cdot 2 + (1-\alpha)\cdot \frac{3}{2}, \alpha\cdot \frac{5}{3} + (1-\alpha)\cdot 2\right\}$. 
Consequently, the expected algorithm cost is $\e_\Gamma[\bat^\Gamma(x)] = \alpha\cdot \sum_{i\in \mathcal{I}} C^{0}_i + (1-\alpha)\cdot \sum_{i\in \mathcal{I}} C^{1/2}_i \leq \mathbf{c}\cdot \opt(\mathcal{J})$. 
It remains to find the best $\mathbf{c}$ by tuning $\alpha$. 
It turns out that the best $\mathbf{c}$ is $\frac{9}{5}$ achieved when $\alpha = \frac{3}{5}$. 

In fact, it turns out that $0$ and $1/2$ is the best pair of values to be taken (see Appendix \ref{sec:best_coin}).

\subsection{Barely Randomization: any distribution} \label{sec:rand_ub_1}

In this subsection, we try to determine the ``best" distribution for $\Gamma$ among any distribution such that $\Gamma\geq 0$ almost surely. 

If $1+\Gamma \geq L_i$, Lemma \ref{lem:I^-} and \ref{lem:key} imply $|\mathcal{F}^{\Gamma}_i|\leq 1$, and hence $C^{\Gamma}_i \leq 1+\Gamma = \frac{1+\Gamma}{L_i}\cdot L_i$. 
If $1+\Gamma < L_i$, Lemma \ref{lem:I^-} and \ref{lem:key} imply $C^{\Gamma}_i \leq L_i + 1+\Gamma = (1 + \frac{1+\Gamma}{L_i})\cdot L_i$.
Let $\mathbf{c}_L$ denote the random variable $\frac{1+\Gamma}{L}\cdot \mathds{1}\{1+\Gamma \geq L\} + \left(1 + \frac{1+\Gamma}{L}\right)\cdot \mathds{1}\{1+\Gamma < L\}$, for each $L\geq 1$. 
Observe that $\e_\Gamma[\bat^\Gamma(x)] = \sum_{i\in \mathcal{I}} \e_{\Gamma}[C^\Gamma_i] \leq \sum_{i\in \mathcal{I}} \e_{\Gamma}\left[(1 + \Gamma)\cdot \mathds{1}\{1+\Gamma \geq L_i\} + (L_i + 1 + \Gamma)\cdot \mathds{1}\{1+\Gamma < L_i\}\right] \\
= \sum_{i\in \mathcal{I}} \e_\Gamma[\mathbf{c}_{L_i}]\cdot L_i \leq \max_{L\geq 1} \e_\Gamma[\mathbf{c}_L]\cdot \opt(x)$ where $x$ is the input job instance. 

The goal is simply to minimize $\max_{L\geq 1} \e_\Gamma[\mathbf{c}_L]$ via choosing the ``best" distribution for $\Gamma$. 
We have $\e_\Gamma\left[\mathbf{c}_L(\Gamma)\right] = \p\left[1+\Gamma < L\right] + \frac{1}{L}\cdot\e_\Gamma\left[1+\Gamma\right]$. 

Consider the density function $f_\Gamma(\gamma):=\\
\begin{cases}
    \frac{a}{(1+\gamma)^2} \quad\text{if~} 0\leq \gamma \leq \frac{1}{a-1}\\
    0 \quad\text{otherwise}
\end{cases}$, 
where parameter $a>1$ is to be adjusted later. 
We have $\p[1+\Gamma < L] = \int^{L-1}_0 \frac{a}{(1+\gamma)^2}\,\text{d}\gamma = \left[-\frac{a}{1+\gamma}\right]^{L-1}_0 = a - \frac{a}{L}$ for each $L\in \left[1,\frac{a}{a-1}\right]$.  
It follows that  $\p[1+\Gamma < L] \leq a - \frac{a}{L}$ for each $L \geq 1$. 
We have $\e[1+\Gamma] = \int^{\frac{1}{a-1}}_0 \frac{a}{1+\gamma}\,\text{d}\gamma = a\ln(\frac{a}{a-1})$. 
Then, $\e_\Gamma\left[\mathbf{c}_L(\Gamma)\right] = \p[1+\Gamma < L] + \frac{1}{L}\cdot\e_\Gamma[1+\Gamma] \leq a - \frac{a}{L} + \frac{a\ln(\frac{a}{a-1})}{L}$. 
The idea is to let the two factors of $\frac{1}{L}$ cancel each other, i.e., $\frac{a}{a-1} = e$ or equivalently $a = \frac{e}{e-1}$. 
Therefore, the competitiveness of $\bat^\Gamma$ is $\frac{e}{e-1}\approx 1.582$ when $\Gamma\sim f_\Gamma$ and $a = \frac{e}{e-1}$. 

Even though the chosen density function $f$ looks satisfactory with respect to the objective function $\p[1+\Gamma < L] + \frac{1}{L}\cdot\e_\Gamma[1+\Gamma]$, we observe that the above analysis is not tight. Specifically, when $1+\Gamma < L$, the bound $C^\Gamma_i \leq L + 1 + \Gamma$ is not tight. 
For example, we show that the equality cannot be approached when $L = 2$ and $\Gamma=0.7$. Observe that $\mathcal{F}^{0.7}_i$, i.e. the set of flag jobs with respect to $\Gamma=0.7$ scheduled within the same block of length $L = 2$, contains at most two flag jobs, otherwise three flag jobs with respect to $\Gamma = 0.7$ would constitute an optimal block strictly longer than $2.4$ (see Figure \ref{fig:L&gamma}). Thus, $C^{0.7}_i$ is at most $2\cdot (1+\Gamma) = 3.4$ and cannot approach $L + 1 + \Gamma = 3.7$. 

\begin{figure}[h]
    \centering
    \includegraphics[width=0.7\linewidth]{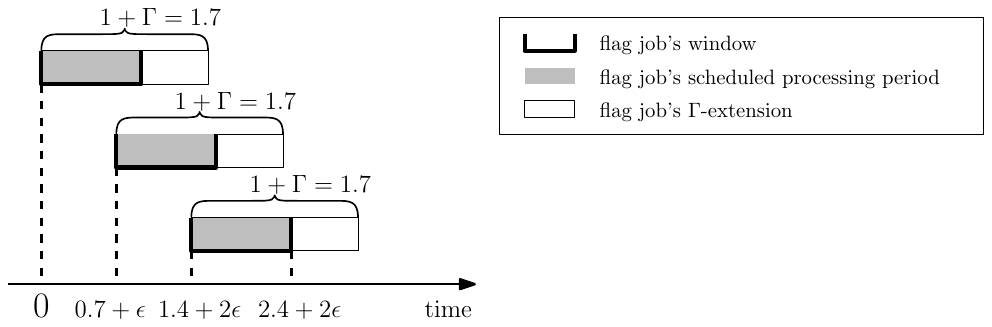}
    \caption{three rigid flag jobs with respect to $\Gamma = 0.7$ and $\epsilon>0$ is small}
    \label{fig:L&gamma}
\end{figure}

The following lemma captures the gap in the above example. 
\begin{lemma} \label{lem:gamma_multiples}
    If $L\in \left(k\cdot (1+\Gamma), 1 + (k+1)\cdot\Gamma\right]$ for some integer $k\geq 1$, then $C^\Gamma_i\leq (k+1)\cdot(1+\Gamma)$. 
\end{lemma}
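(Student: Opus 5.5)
The plan is to bound the number of flag jobs in $\mathcal{F}^\Gamma_i$ by $k+1$. Each flag job contributes at most $1+\Gamma$ to $C^\Gamma_i$: its $\Gamma$-extension $[s_f,s_f+1+\Gamma]$ covers the processing of $f$ and of every job in $P^\Gamma_f$. Hence $C^\Gamma_i\le |\mathcal{F}^\Gamma_i|\cdot(1+\Gamma)$, and it suffices to show $|\mathcal{F}^\Gamma_i|\le k+1$ whenever $L_i\le 1+(k+1)\Gamma$.

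First we record a separation property, in the spirit of the proof of Lemma \ref{lem:key}. Order the flag jobs of $\mathcal{F}^\Gamma_i$ by the time they hit their starting deadlines: $f_1,f_2,\ldots,f_m$, so $s_{f_1}\le s_{f_2}\le\cdots\le s_{f_m}$. For $\ell<\ell'$, job $f_{\ell'}$ was not absorbed into the extension of $f_\ell$, so $r_{f_{\ell'}}>s_{f_\ell}+\Gamma$. If $r_{f_{\ell'}}\le s_{f_\ell}$, it would have been started at time $s_{f_\ell}$ together with $f_\ell$. Otherwise it is released inside $(s_{f_\ell}, s_{f_\ell}+\Gamma]$, where it fits in the extension and would be started on release. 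Applying this to consecutive flags gives $s_{f_{\ell+1}}\ge r_{f_{\ell+1}}>s_{f_\ell}+\Gamma$.

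Next we place everything in the block. Lemma \ref{lem:I^-} gives $s_{f_1}\ge I^-_i$, so $r_{f_m}>s_{f_{m-1}}+\Gamma>I^-_i+(m-1)\Gamma$. Since the optimal schedule processes $f_m$ inside $I_i$, we need $r_{f_m}+1\le I^+_i$, which yields $L_i>1+(m-1)\Gamma$. Taking $m=k+2$ gives $L_i>1+(k+1)\Gamma$, contradicting the hypothesis. Hence $m\le k+1$ and $C^\Gamma_i\le(k+1)(1+\Gamma)$.

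The main point to check is the separation step, namely that the releases of later flags lie strictly beyond $s_{f_\ell}+\Gamma$ for every earlier flag. This relies only on the absorption rule of Definition \ref{alg:batch}. Note that the lower bound $L>k(1+\Gamma)$ in the hypothesis is not needed for this upper bound. It only identifies the regime in which $(k+1)(1+\Gamma)$ improves on the bound $L+1+\Gamma$.
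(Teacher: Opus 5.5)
Your proposal is correct and follows essentially the same route as the paper. Both arguments bound $|\mathcal{F}^\Gamma_i|\le k+1$ by chaining $s_{f_\ell}\ge r_{f_\ell}>s_{f_{\ell-1}}+\Gamma$ from $s_{f_1}\ge I^-_i$ up to $r_{f_{k+2}}\le I^+_i-1$, which forces $L_i>1+(k+1)\Gamma$; your explicit justification of the separation step via the absorption rule of $\bat^\Gamma$, and your remark that the lower endpoint $k(1+\Gamma)$ is never used, only make explicit what the paper leaves implicit.
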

\begin{proof}
    It suffices to show that $\mathcal{F}^\Gamma_i$ has at most $k+1$ jobs. For the sake of contradiction, assume $\mathcal{F}^\Gamma_i$ has $k+2$ jobs $f^\Gamma_1, f^\Gamma_2, \ldots, f^\Gamma_{k+2}$ from left to right. Observe that $s_{f^\Gamma_i} \geq r_{f^\Gamma_i} > s_{f^\Gamma_{i-1}} + \Gamma$ for each $i = 2,3,\ldots,k+2$. Thus, $r_{f^\Gamma_{k+2}} > s_{f^\Gamma_1} + (k+1)\cdot \Gamma$. 
    Observe that we have $s_{f^\Gamma_1} \geq I^-_i$ and $r_{f^\Gamma_{k+2}} \leq I^+_i - 1$, otherwise flag job $f^\Gamma_1$ or $f^\Gamma_{k+2}$ would not be able to fit into the chosen block of the optimal scheduling. 
    It follows that $ I^+_i -1 > I^-_i + (k+1)\cdot \Gamma$, and hence, $L > 1 + (k+1)\cdot \Gamma$ that is the contradiction. 
\end{proof}

For a pair of values of $L$ and $\Gamma$, Lemma \ref{lem:gamma_multiples} suggests the following two bounds corresponding to two cases respectively: 
\begin{itemize}
    \item If $L\in \left(1 + k\cdot\Gamma, k\cdot (1+\Gamma)\right]$ for some integer $k\geq 2$, $C^\Gamma_i$ may approach $L + 1 + \Gamma$; 
    \item If $L\in \left(k\cdot (1+\Gamma), 1 + (k+1)\cdot\Gamma\right]$ for some integer $k\geq 1$, $C^\Gamma_i$ may approach $(k+1)\cdot(1+\Gamma)$. 
\end{itemize}

However, the following example says the above bounds are still not tight. 
Consider the case that a block of optimal scheduling has length $L = 4.6+\epsilon$, and $\Gamma$ chooses $1.5$, where $\epsilon>0$ is small. 
Observe that $4.6+\epsilon\in \left(1 + 2\cdot 1.5, 2\cdot 2.5\right]$ implies $L + 1 + \Gamma$ can be approached when $\Gamma$ chooses $1.5$ (see Figure \ref{fig:L&multiple_gammas}). 
However, considering the same block with a second $\Gamma$ value of $1.25$. 
Observe that $4.6+\epsilon\in \left(2\cdot 2.25, 1 + 3\cdot 1.25\right]$ implies $3\cdot (1+\Gamma)$ strictly lower than $L + 1 + \Gamma$ can be approached when $\Gamma$ chooses $1.25$. 
We observe that the proposed bounds corresponding to the two values of $\Gamma$, i.e. $1.5$ and $1.25$, cannot be approached simultaneously. 
It is not hard to see that the bound $L + 1 + \Gamma$ when $\Gamma = 1.5$ can only be approached when the first two flag jobs with respect to $1.5$ have roughly the same deadline as $f^{1.5}_1$ and $f^{1.5}_2$ in Figure \ref{fig:L&multiple_gammas} respectively. Simultaneously, when $\Gamma = 1.25$, the existence of $f^{1.5}_1$ and $f^{1.5}_2$ ensures that the first flag job with respect to $1.25$ is started no later than $f^{1.5}_1$ and the second flag job with respect to $1.25$ is started no later than $f^{1.5}_2$. It follows that the cost due to the $1.25$-extension of the first two flag job with respect to $1.25$ is at most $4.35$. Since there are at most three flag job with respect to $1.25$, $C^{1.25}_i$ is at most $6.6$ instead of $6.75 = 3\cdot(1+\Gamma)$ when $C^{1.5}_i$ approaches $7.1 = L + 1 + \Gamma$. 

\begin{figure}[h]
    \centering
    \includegraphics[width=0.8\linewidth]{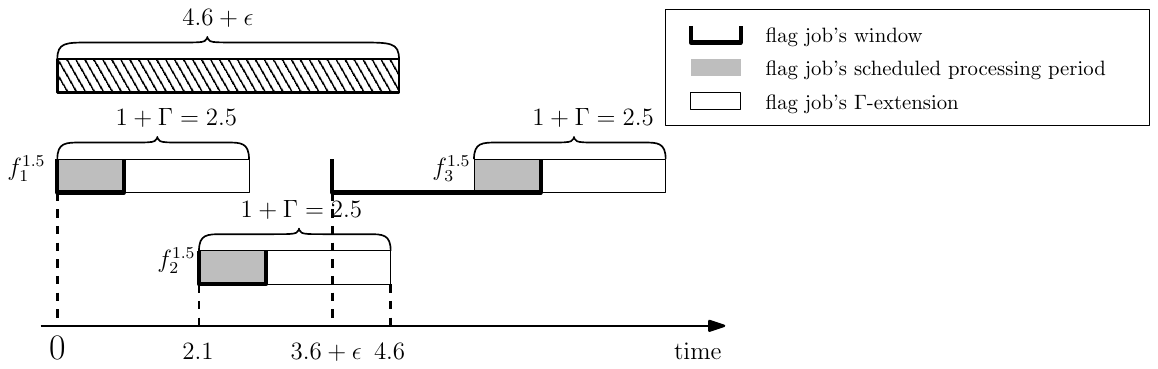}
    \caption{three flag jobs with respect to $\Gamma = 4.8$ and $\epsilon>0$ is small}
    \label{fig:L&multiple_gammas}
\end{figure}

The above discussion suggests there exists a big difficulty in determining the tight competitiveness of $\bat^\Gamma$. 
One may consider improving the competitiveness analysis of $\bat^\Gamma$ using the same density function $f$ for $\Gamma$ by using the above better bounds, but it seems to be a wrong way due to too many cases to be considered. 
Instead of working on improving the analysis of $\bat^\Gamma$, we turn to improve the algorithm design.

\subsection[Disjoint Batches]{Disjoint Batches $\db^{\gamma}$} \label{sec:rand_ub_2}

It seems that the unpleasant feature of $\bat^\gamma$ is that the $\gamma$-extensions for the same $\gamma$ may overlap with each other, so we try to modify $\bat^\gamma$ to force that all $\gamma$-extensions for the same $\gamma$ must be disjoint. 
We propose the following class of deterministic online algorithms $\db^\gamma$ for each $\gamma\geq 0$ as described in Algorithm \ref{alg:db}, where each flag job's $\gamma$-extension is defined differently from Definition \ref{def:batch}.

\begin{algorithm} 
\caption{Disjoint Batches: $\db^\gamma$}
\label{alg:db}
\algrenewcommand\algorithmicprocedure{\textbf{Upon}}
\begin{algorithmic}[1]
\Procedure{job $j$ is released}{}
    \If{starting $j$ now makes $j$'s processing fully contained within some $\gamma$-extension}
         \State start $j$ now; 
    \EndIf
\EndProcedure
\Procedure{a $\gamma$-extension starts}{}
    \State start all the remaining jobs now; 
\EndProcedure
\Procedure{time $s_f$ for some remaining job $f$}{}
    \State $f$ is called a flag job; 
    \If{$s_f$ is within some $\gamma$-extension, say $[t, t + 1 + \gamma]$}
         \State let $[t + 1 + \gamma, t + 2 + 2\gamma]$ be the $\gamma$-extension associated with the flag job $f$; 
    \Else
         \State let $[s_f, s_f + 1 + \gamma]$ be the $\gamma$-extension associated with the flag job $f$; 
    \EndIf
\EndProcedure
\end{algorithmic}
\end{algorithm}

The following lemma suggests $\db^\gamma$ is an overall improvement over $\bat^\gamma$. We keep using the same notations as those in the analysis of $\bat^\gamma$. 
\begin{lemma} \label{lem:db}
    $|\mathcal{F}^\gamma_i|\leq \left\lceil\frac{L_i}{1+\gamma}\right\rceil$, where $L_i = I^+_i - I^-_i$ is the length of the $i$-th block in the taken optimal scheduling of all the flag jobs with respect to $\gamma$ with $\gamma \in [0,\infty)$.  
\end{lemma}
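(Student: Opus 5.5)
The plan is to show that if $\mathcal{F}^\gamma_i$ contains $k$ flag jobs, then $L_i>(k-1)(1+\gamma)$. This gives $k-1<\frac{L_i}{1+\gamma}$, i.e., $k\le\left\lceil\frac{L_i}{1+\gamma}\right\rceil$.

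First, order the flag jobs of $\mathcal{F}^\gamma_i$ as $f_1,f_2,\ldots,f_k$ by the time they hit their starting deadlines. Let $E_m=[e_m,e_m+1+\gamma]$ be the $\gamma$-extension associated with $f_m$ in $\db^\gamma$. I would record two structural facts about $\db^\gamma$.

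\begin{itemize}
\item[(a)] Extensions are pairwise disjoint and appear in the same order as the flags, so $e_{m+1}\ge e_m+1+\gamma$. Also $e_m\ge s_{f_m}$. Both facts follow from the two branches in Algorithm \ref{alg:db}. If $s_f$ lies inside an existing extension, the new one is placed immediately after it. Otherwise the new extension starts at $s_f$, which is after every existing extension. The point to check is that "some $\gamma$-extension" containing $s_f$ can be taken to be the last one created, so that the new extension never overlaps a later one.
\item[(b)] For $m\ge 2$ we have $r_{f_m}>e_{m-1}+\gamma$. Suppose instead $r_{f_m}\le e_{m-1}+\gamma$. If $r_{f_m}\in[e_{m-1},e_{m-1}+\gamma]$, then starting $f_m$ upon release fits fully inside $E_{m-1}$, so $f_m$ is started immediately. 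If $r_{f_m}<e_{m-1}$, then $f_m$ is still remaining when $E_{m-1}$ starts, so it is started at $e_{m-1}$. Here I use that $s_{f_m}\ge s_{f_{m-1}}$ and that $f_m$ has not yet become a flag; it cannot have become a flag earlier without contradicting the ordering. In either case $f_m$ is never a remaining job at time $s_{f_m}$, so it is not a flag, a contradiction.
\end{itemize}

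Next, use the block constraints exactly as in Lemma \ref{lem:I^-} and in the proof of Lemma \ref{lem:gamma_multiples}. Every $f\in\mathcal{F}^\gamma_i$ is started inside $I_i$ by the optimal schedule of the flag jobs. Hence $s_{f_1}\ge I^-_i$, and $r_{f_k}\le I^+_i-1$.

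Chaining these facts gives
\begin{align*}
I^+_i-1\ge r_{f_k} &> e_{k-1}+\gamma \ge e_1+(k-2)(1+\gamma)+\gamma\\
&\ge I^-_i+(k-1)(1+\gamma)-1 .
\end{align*}
The second inequality is (b), and the third iterates (a). This yields $L_i>(k-1)(1+\gamma)$. When $k=1$ the claim is trivial, since $L_i\ge 1>0$.

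I expect the main obstacle to be step (b) together with the precise disjointness and ordering in (a). Both require a careful reading of the online rules of $\db^\gamma$. In particular, I need to rule out that a job released before $e_{m-1}$ could avoid being started at the start of $E_{m-1}$. I also need to confirm that the chained placement $[t+1+\gamma,t+2+2\gamma]$ always follows the most recent extension, so that consecutive flags have consecutive disjoint extensions. If that second point fails in general, I would instead argue by induction that at any moment the existing extensions form an increasing sequence of disjoint intervals and that any $s_f$ falls either in the last one or after all of them.
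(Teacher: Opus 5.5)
Your chain of inequalities is the paper's own argument. The ingredients are the same: the first flag starts no earlier than $I^-_i$; extensions of length $1+\gamma$ are disjoint and start no earlier than their flag's starting deadline; each later flag is released after the end of the previous extension minus one; and every flag is released by $I^+_i-1$. Your final arithmetic is fine.

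The gap is in step (b), and with it step (a). The sentence ``if $r_{f_m}<e_{m-1}$, then $f_m$ is still remaining when $E_{m-1}$ starts'' needs $s_{f_m}\ge e_{m-1}$. The ordering only gives you $s_{f_m}\ge s_{f_{m-1}}$. In the chained branch of Algorithm~\ref{alg:db}, $e_{m-1}=t+1+\gamma>s_{f_{m-1}}$. So $f_m$ can reach its starting deadline inside $(s_{f_{m-1}},e_{m-1})$, i.e.\ still inside $[t,t+1+\gamma]$, and it then receives the very same chained extension.

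Concretely, let $\gamma=1/2$ and take a rigid job $f_1$ with $r=s=0$, a job $f_2$ with $r=0.6$, $s=0.7$, and a job $f_3$ with $r=0.8$, $s=0.9$.
\begin{itemize}
\item $\db^{1/2}$ opens $[0,1.5]$ at time $0$.
\item $f_2$ does not fit on release. It becomes a flag at $0.7\in[0,1.5]$ and gets $[1.5,3]$.
\item $f_3$ fits in neither extension on release and is still remaining at $0.9<1.5$. It becomes a flag whose extension is again $[1.5,3]$.
\end{itemize}
All three jobs are flags, and their optimal schedule is the single block $[0,1.8]$. So $L_i=1.8$, while $|\mathcal{F}^{1/2}_i|=3>2=\lceil L_i/(1+\gamma)\rceil$. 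Here $e_2=e_3$, contradicting (a), and $r_{f_3}=0.8<2=e_2+\gamma$, contradicting (b). Your fallback (every $s_f$ lies in the last extension or after all of them) also fails, since $0.9$ lies only in $[0,1.5]$.

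So the statement is false for the algorithm as literally written. The paper's proof asserts exactly the same two facts without justification: ``since $f_{N+1}$ is a flag job, $r_{f_{N+1}}>t'-1$'' and ``all the $\gamma$-extensions are mutually disjoint''. The defect is inherited from the source, not introduced by you.

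What is true, and all that Theorem~\ref{thm:upp_best} needs, is the same bound for $m_i$, the number of \emph{distinct} extensions attached to flags of $\mathcal{F}^\gamma_i$. Every job, including a chained flag (which must run on $[s_f,s_f+1]\subseteq[t,t+2+2\gamma]$), is processed inside the union of the extensions. Each distinct extension can be charged to a block containing one of its flags, so the total cost is at most $(1+\gamma)\sum_i m_i$.

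To prove $m_i\le\lceil L_i/(1+\gamma)\rceil$:
\begin{enumerate}
\item Show by induction that a newly created extension either coincides with an existing one or lies to the right of all existing ones. Hence distinct extensions are interior-disjoint and first created from left to right.
\item Replace (b) by the following claim: if a flag $g$ has its extension strictly to the right of an extension $[a,a+1+\gamma]$, then $r_g>a+\gamma$. Your two cases ($r_g\in[a,a+\gamma]$, and $r_g<a\le s_g$) go through unchanged.
\item Settle the missing case $s_g<a$ as follows. If $[a,a+1+\gamma]$ already exists at time $s_g$, it was chained from $[a-1-\gamma,a]\ni s_g$, so $g$'s extension cannot lie to its right. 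If it does not yet exist, $g$'s extension was created earlier and therefore is not to its right.
\end{enumerate}
With $k$ counting distinct extensions, your chain of inequalities then works verbatim.

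Alternatively, change the algorithm so that your (a) and (b) hold as stated. This needs two changes: start all remaining jobs at every flag event, as $\bat^\gamma$ does, and test a released job against the union of back-to-back extensions.
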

\begin{proof}
    Assume $\mathcal{F}^\gamma_i$ has flag jobs $f^\gamma_1, f^\gamma_2, \ldots, f^\gamma_{\left\lceil\frac{L_i}{1+\gamma}\right\rceil + 1}$ from left to right and we show a contradiction. 
    Note that the $\gamma$-extension associated with $f^\gamma_h$ starts no earlier than $s_{f^\gamma_h}$ for each $h = 1,2,\ldots,\left\lceil\frac{L_i}{1+\gamma}\right\rceil + 1$. 
    Since all the flag jobs in $\mathcal{F}^\gamma_i$ must be feasibly fit into the block $[I^-_i, I^+_i]$, we have $s_{f^\gamma_h} \geq I^-_i$ for each $h = 1,2,\ldots,\left\lceil\frac{L_i}{1+\gamma}\right\rceil + 1$. 
    Suppose $t'$ is the end time of the $\gamma$-extension associated with $f^\gamma_{\left\lceil\frac{L_i}{1+\gamma}\right\rceil}$. 
    Since $f^\gamma_{\left\lceil\frac{L_i}{1+\gamma}\right\rceil + 1}$ is a flag job, we have $r_{f^\gamma_{\left\lceil\frac{L_i}{1+\gamma}\right\rceil + 1}} > t'-1$. 
    Since $f^\gamma_{\left\lceil\frac{L_i}{1+\gamma}\right\rceil + 1}$ can be feasibly scheduled within the taken block $[I^-_i, I^+_i]$, we have $r_{f^\gamma_{\left\lceil\frac{L_i}{1+\gamma}\right\rceil + 1}} \leq I^+_i-1$. 
    It follows that $t' < I^+_i$. 
    If follows that $[s_{f^\gamma_1}, t']$ is a proper subset of $[I^-_i, I^+_i]$. 
    Since all the $\gamma$-extensions are mutually disjoint, we have $t' -s_{f^\gamma_1} \geq \left\lceil\frac{L_i}{1+\gamma}\right\rceil\cdot (1 + \gamma) \geq L_i$. 
    It follows that $L_i > t' -s_{f^\gamma_1} \geq \left\lceil\frac{L_i}{1+\gamma}\right\rceil\cdot (1 + \gamma) \geq L_i$ (contradiction). 
\end{proof}

Note that $C^\gamma_i \leq |\mathcal{F}^\gamma_i|\cdot (1 + \gamma)$. 
It follows that if random variable $\Gamma$ follows a density function $f_\Gamma$, then Lemma \ref{lem:db} says $C^\Gamma_i$ is at most $\left\lceil\frac{L_i}{1+\Gamma}\right\rceil\cdot (1+\Gamma)$. 
Consider the randomized online algorithm $\db^\Gamma$ that generates a random sample of $\Gamma$, denoted by $\gamma$, and then run $\db^\gamma$ for scheduling all the jobs from the input. 
Let $\mathbf{c}_L$ denote the random variable $\left\lceil\frac{L}{1+\Gamma}\right\rceil\cdot \frac{1+\Gamma}{L}$, for each $L\geq 1$. 
Observe that $\e_\Gamma[\db^\Gamma(x)] = \sum_{i\in \mathcal{I}} \e_{\Gamma}[C^\Gamma_i] \leq \sum_{i\in \mathcal{I}} \e_{\Gamma}\left[\left\lceil\frac{L_i}{1+\Gamma}\right\rceil\cdot (1+\Gamma)\right] = \sum_{i\in \mathcal{I}} \e_\Gamma[\mathbf{c}_{L_i}]\cdot L_i \leq \\
\max_{L\geq 1} \e_\Gamma[\mathbf{c}_L]\cdot \opt(x)$ where $x$ is the input job instance. 

It remains to minimize $\max_{L\geq 1} \e_\Gamma[\mathbf{c}_L]$ via finding the best $f_\Gamma$. 

\begin{theorem} \label{thm:upp_best}
    $\db^\Gamma$ with $f_\Gamma(\gamma) = \begin{cases}
    \frac{1}{\ln{2}\cdot (1+\gamma)} \quad \text{if~} 0\leq \gamma\leq 1 \\
    0 \quad \text{otherwise}
\end{cases}$ is $\frac{1}{\ln{2}} \approx 1.44$ competitive. 
\end{theorem}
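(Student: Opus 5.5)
The plan is to reduce the theorem to a one-variable inequality. By the discussion right after Lemma \ref{lem:db}, it suffices to show $\max_{L\geq 1}\e_\Gamma[\mathbf{c}_L]\leq \frac{1}{\ln 2}$, where $\mathbf{c}_L=\left\lceil\frac{L}{1+\Gamma}\right\rceil\cdot\frac{1+\Gamma}{L}$. Before that, I would check that $f_\Gamma$ is a valid density: $\int_0^1\frac{d\gamma}{\ln 2\,(1+\gamma)}=1$.

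\textbf{Step 1: simplify the expectation.} Substitute $x=1+\gamma$, so $x$ ranges over $[1,2]$ with density $\frac{1}{x\ln 2}$. The factor $x$ in $\mathbf{c}_L$ cancels the $1/x$ in the density, giving
\begin{equation*}
\e_\Gamma[\mathbf{c}_L]=\frac{1}{L\ln 2}\int_1^2\left\lceil\frac{L}{x}\right\rceil dx .
\end{equation*}
This cancellation is why the density $\propto 1/(1+\gamma)$ is chosen. The claim therefore reduces to showing $\int_1^2\lceil L/x\rceil\,dx\leq L$ for every $L\geq 1$.

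\textbf{Step 2: rescale.} Change variables $u=L/x$, so $dx=L\,du/u^2$. This turns the integral into $L\cdot F(L)$ with
\begin{equation*}
F(L):=\int_{L/2}^{L}\frac{\lceil u\rceil}{u^2}\,du .
\end{equation*}
It then suffices to prove $F(L)\leq 1$ for all $L\geq 1$.

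\textbf{Step 3: monotonicity of $F$.} At $L=1$ we have $\lceil u\rceil=1$ on $(1/2,1]$, so $F(1)=\int_{1/2}^1 u^{-2}\,du=1$. The function $F$ is absolutely continuous, and wherever it is differentiable (all but countably many points),
\begin{equation*}
F'(L)=\frac{\lceil L\rceil}{L^2}-\frac12\cdot\frac{\lceil L/2\rceil}{(L/2)^2}=\frac{\lceil L\rceil-2\lceil L/2\rceil}{L^2}\leq 0,
\end{equation*}
using $\lceil L\rceil\leq 2\lceil L/2\rceil$, which holds because $2\lceil L/2\rceil$ is an integer that is at least $L$. Hence $F$ is non-increasing on $[1,\infty)$, so $F(L)\leq F(1)=1$. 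This gives $\e_\Gamma[\mathbf{c}_L]\leq\frac{1}{\ln 2}$ for all $L\geq1$, and the competitiveness bound follows from the summation over blocks already established before the theorem.

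\textbf{Main obstacle.} The delicate point is Step 3, since the crude bound $\lceil y\rceil\leq y+1$ only yields $L\ln 2+1$ and is not enough. The rescaling to $F$ together with the derivative identity $\lceil L\rceil-2\lceil L/2\rceil\leq 0$ is the route I would take to get the exact constant. I would also note that equality holds at $L=1$ (and on $[1,2]$, where $F'=0$), which shows the constant $\frac{1}{\ln 2}$ cannot be improved for this density under this analysis.
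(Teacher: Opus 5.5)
Your proof is correct. It shares the paper's reduction (Lemma~\ref{lem:db} plus the per-block summation) and the same cancellation of $1+\Gamma$ against the density, but it verifies the resulting scalar inequality $\int_1^2\lceil L/x\rceil\,dx\le L$ differently.

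The paper splits into the ranges $L=1$, $1<L\le 2$, $2<L\le 3$, $3<L\le 4$ and computes the expectation exactly on each, obtaining $\frac{1}{\ln 2}$, $\frac{1}{\ln 2}$, $\frac{1}{\ln 2}(\frac12+\frac1L)$ and $\frac{5}{6\ln 2}$. For $L>4$ it uses the crude bound $\lceil y\rceil\le y+1$, the one you set aside. That bound fails only for small $L$; it already suffices once $L\ge 1/(1-\ln 2)\approx 3.26$. This is why the paper carries explicit cases up to $4$.

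Your substitution $u=L/x$ and the identity $F'(L)=(\lceil L\rceil-2\lceil L/2\rceil)/L^2\le 0$ replace the case split and the tail estimate with one uniform argument. It reproduces the paper's values: $F'=0$ on $(1,2)$ and $(3,4)$, and $F'=-1/L^2$ on $(2,3)$. It also gives a stronger statement, namely that the expected per-block ratio is non-increasing in the block length $L$. And it isolates exactly where the factor-$2$ support of $1+\Gamma$ is used, namely in $\lceil L\rceil\le 2\lceil L/2\rceil$.

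The paper's route is more pedestrian but fully explicit and needs no regularity of $F$. Your regularity step is fine: $F$ is continuous and differentiable off the integers, so being non-increasing on each $[n,n+1]$ already suffices.
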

\begin{proof}
    Consider 
    \begin{eqnarray*}
    &\quad&  \e\left[ \left\lceil\frac{L}{1+\Gamma}\right\rceil\cdot \frac{1+\Gamma}{L} \right] \\
    &=& \e\left[ \sum_{k = 1,2,\ldots, \ceil{L}} \mathds{1}\left\{\frac{L}{k}\leq 1+\Gamma< \frac{L}{k-1}\right\}\cdot \frac{k(1+\Gamma)}{L} \right] \\
    &=& \e\left[ \mathds{1}\left\{\frac{L}{\ceil{L}}\leq 1+\Gamma< \frac{L}{\ceil{L}-1}\right\}\cdot \frac{\ceil{L}(1+\Gamma)}{L} \right] + \\
    &\quad& \e\left[ \mathds{1}\left\{\frac{L}{\ceil{L}-1}\leq 1+\Gamma< \frac{L}{\ceil{L}-2}\right\}\cdot \frac{(\ceil{L}-1)(1+\Gamma)}{L} \right]\\
    &\quad& + \cdots + \e\left[ \mathds{1}\left\{L\leq 1+\Gamma< \infty\right\}\cdot \frac{1+\Gamma}{L} \right].
    \end{eqnarray*}
    
    Consider the following different cases. 
    
    Case 1: $L=1$. Clearly, $\e\left[ \left\lceil\frac{L}{1+\Gamma}\right\rceil\cdot \frac{1+\Gamma}{L} \right] = \e[1+\Gamma] = \frac{1}{\ln{2}}$. 
    
    Case 2: $1<L\leq 2$. We have $\e\left[ \mathds{1}\left\{\frac{L}{2}\leq 1+\Gamma< L\right\}\cdot \frac{2(1+\Gamma)}{L} \right] + 
    \e\left[ \mathds{1}\left\{L\leq 1+\Gamma< \infty\right\}\cdot \frac{1+\Gamma}{L} \right] 
    = \frac{2}{L}\cdot \int^{L-1}_{0} (1+\gamma)\frac{1}{\ln{2}\cdot (1+\gamma)}\,\text{d}\gamma + \frac{1}{L}\cdot \int^{1}_{L-1} (1+\gamma)\frac{1}{\ln{2}\cdot (1+\gamma)}\,\text{d}\gamma 
    = \frac{1}{\ln{2}}\left(\frac{2}{L}(L-1) + \frac{1}{L}(2-L) \right) 
    = \frac{1}{\ln{2}}$. 

    Case 3: $2 < L \leq 3$. We have $\e\left[ \mathds{1}\left\{\frac{L}{3}\leq 1+\Gamma< \frac{L}{2}\right\}\cdot \frac{3(1+\Gamma)}{L} \right] + \e\left[ \mathds{1}\left\{\frac{L}{2}\leq 1+\Gamma< L\right\}\cdot \frac{2(1+\Gamma)}{L} \right] 
    = \frac{3}{L}\cdot \int^{\frac{L}{2}-1}_{0} (1+\gamma)\frac{1}{\ln{2}\cdot (1+\gamma)}\,\text{d}\gamma + \frac{2}{L}\cdot \int^{1}_{\frac{L}{2}-1} (1+\gamma)\frac{1}{\ln{2}\cdot (1+\gamma)}\,\text{d}\gamma 
    = \frac{1}{\ln{2}}\cdot \left( \frac{3}{L}(\frac{L}{2}-1) + \frac{2}{L}(2-\frac{L}{2}) \right) 
    = \frac{1}{\ln{2}}\cdot (\frac{1}{2} + \frac{1}{L}) < \frac{1}{\ln{2}}$, where the last inequality is because $2 < L$. 

    Case 4: $3 < L \leq 4$. We have 
    {\scriptsize
    \begin{equation*}
    \begin{split}
    &\quad \e\left[ \mathds{1}\left\{\frac{L}{4}\leq 1+\Gamma< \frac{L}{3}\right\}\cdot \frac{4(1+\Gamma)}{L} \right] + \e\left[ \mathds{1}\left\{\frac{L}{3}\leq 1+\Gamma< \frac{L}{2}\right\}\cdot \frac{3(1+\Gamma)}{L} \right] 
    + \e\left[ \mathds{1}\left\{\frac{L}{2}\leq 1+\Gamma< L\right\}\cdot \frac{2(1+\Gamma)}{L} \right] \\
    &= \frac{4}{L}\cdot \int^{\frac{L}{3}-1}_{0} (1+\gamma)\frac{1}{\ln{2}\cdot (1+\gamma)}\,\text{d}\gamma 
    + \frac{3}{L}\cdot \int^{\frac{L}{2}-1}_{\frac{L}{3}-1} (1+\gamma)\frac{1}{\ln{2}\cdot (1+\gamma)}\,\text{d}\gamma 
    + \frac{2}{L}\cdot \int^{1}_{\frac{L}{2}-1} (1+\gamma)\frac{1}{\ln{2}\cdot (1+\gamma)}\,\text{d}\gamma \\
    &= \frac{1}{\ln{2}}\cdot \left( \frac{4}{L}(\frac{L}{3} - 1) + \frac{3}{L}(\frac{L}{2}-\frac{L}{3}) + \frac{2}{L}(2-\frac{L}{2}) \right) = \frac{1}{\ln{2}}\cdot \frac{5}{6} < \frac{1}{\ln{2}}.
    \end{split}
    \end{equation*}
    }
    
    Case 5: $L > 4$. We have $\e\left[ \left\lceil\frac{L}{1+\Gamma}\right\rceil\cdot \frac{1 + \Gamma}{L} \right] \leq \e\left[ \left(\frac{L}{1+\Gamma} + 1\right)\cdot \frac{1+\Gamma}{L} \right] = 1 + \frac{\e[1+\Gamma]}{L} < 1 + \frac{\e[1+\Gamma]}{4} = 1 + \frac{1}{4\cdot \ln{2}} < \frac{1}{\ln{2}}$. 
    
\end{proof}

\section{Oblivious adversary against randomized online algorithm} \label{sec:rand_lb}

In this section, we still present a simple oblivious adversary first, which says the competitiveness of any randomized online algorithm is at least $\frac{8}{7}$. Then, we improve the adversary to achieve a better lower bound of $\frac{4}{3}$ via an uniform distribution. Finally, we observe that the uniform distribution can be modified for a better lower bound of $\frac{1 + \sqrt{3}}{2}$. 

The following lemma gives a way of constructing an oblivious adversary against any randomized online algorithm. 
\begin{lemma} [Yao's principle \cite{borodin1998}] \label{lem:yao}
    Following the notations used in Definition \ref{def:comp}, let $X$ be a random input instance on $\mathcal{X}^*\subset \mathcal{X}$, where $\mathcal{X}^*$ is a non-empty finite set and hence $X$ is a discrete random variable with values in $\mathcal{X}^*$. 
    Suppose $\mathcal{A}^*\subset \mathcal{A}$ is a finite set of deterministic online algorithms such that for each $a\in \mathcal{A}$ there exists $l(a)\in \mathcal{A}^*$ such that $l(a)(x) \leq a(x),\forall x\in \mathcal{X}^*$. 
    Then, we have the competitiveness of any randomized online algorithm is at least $\frac{\min_{a^*\in \mathcal{A}^*} \e_X[a^*(X)]}{\e_X[\opt(X)]}$.
\end{lemma}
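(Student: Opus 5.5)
The plan is to prove Lemma \ref{lem:yao} by the usual averaging argument behind Yao's principle, using two facts: expectations commute for a fixed product distribution, and the finite dominating class $\mathcal{A}^*$ lets us replace any deterministic algorithm by one from a finite set. Let $A$ be any randomized online algorithm, i.e.\ a random variable taking values in $\mathcal{A}$, and suppose it is $\mathbf{c}$-competitive. We must show $\mathbf{c}\ge \min_{a^*\in\mathcal{A}^*}\e_X[a^*(X)]/\e_X[\opt(X)]$. Draw $X$ independently of $A$.

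First, fix a value $x\in\mathcal{X}^*$. Competitiveness (Definition \ref{def:comp}) gives $\e_A[A(x)]\le \mathbf{c}\cdot\opt(x)$. Since $X$ is discrete on the finite set $\mathcal{X}^*$, we multiply by $\p[X=x]$ and sum over $x$, obtaining
\begin{equation*}
\e_X\e_A[A(X)] \le \mathbf{c}\cdot \e_X[\opt(X)].
\end{equation*}
Costs are nonnegative, so Tonelli's theorem lets us swap the order: $\e_X\e_A[A(X)]=\e_A\e_X[A(X)]$. This is the only measure-theoretic point. The sum over $x$ is finite, so even this is just linearity of expectation applied to finitely many nonnegative terms $\p[X=x]\,\e_A[A(x)]$.

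Next, for each realization $a$ of $A$, the hypothesis on $\mathcal{A}^*$ gives $l(a)\in\mathcal{A}^*$ with $l(a)(x)\le a(x)$ for every $x\in\mathcal{X}^*$. Hence $\e_X[a(X)]\ge \e_X[l(a)(X)]\ge \min_{a^*\in\mathcal{A}^*}\e_X[a^*(X)]$. The minimum exists because $\mathcal{A}^*$ is finite, which is exactly why the finite reduction is assumed. Taking $\e_A$ of this pointwise bound gives $\e_A\e_X[A(X)]\ge \min_{a^*}\e_X[a^*(X)]$. Combining with the first display and dividing by $\e_X[\opt(X)]>0$ yields the claim; the denominator is positive since every instance has at least one job of unit length.

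I expect no real obstacle. The only care needed is the measurability of $a\mapsto a(x)$ under the distribution of $A$, which is implicit in Definition \ref{def:comp} since $\e_A[A(x)]$ is assumed meaningful. We also do not need any measurable selection of $l(a)$, because we only use the scalar inequality $\e_X[a(X)]\ge\min_{a^*}\e_X[a^*(X)]$ for each $a$.
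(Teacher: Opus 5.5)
Your proof is correct and is essentially the paper's argument. Both average the competitiveness inequality over $X$, exchange $\e_X$ and $\e_A$ using finiteness of $\mathcal{X}^*$, and lower-bound each realization $a$ through $l(a)$ and the minimum over $\mathcal{A}^*$.

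The paper frames this as a contradiction via $\tilde c=\max_{x\in\mathcal{X}^*}\e_A[A(x)]/\opt(x)<c^*$, and pushes the distribution of $A$ forward through $l$ to a measure $\nu^*$ on $\mathcal{A}^*$. Your direct pointwise bound $\e_X[a(X)]\ge\min_{a^*\in\mathcal{A}^*}\e_X[a^*(X)]$ avoids needing $l$ to be measurable, which is a small but real tidying.
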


\subsection{A baby instance}

As how the upper bounds are developed, we consider making a baby step first. 
According to Yao's principle, we turn to find a randomized instance $X$ such that there exists a set of ``best" deterministic online algorithm $\mathcal{A}^*$ such that $\min_{a^*\in \mathcal{A}^*} \frac{\e_X[a^*(X)]}{\e_X[\opt(X)]}$ is strictly greater than $1$.

Consider the following baby instance. There are three jobs defined: 
\begin{itemize}
    \item Job $j_s$ is release at time $0$ and rigid;
    \item Job $j^\delta_m$ is released at time $\delta$ with starting deadline $s_{j^\delta_m}$ at time $2$, where $0 < \delta < 1$ is to be adjusted later;
    \item Job $j_e$ is released at time $2$ and rigid; 
\end{itemize}
Consider the random job instance $X_{\delta,\alpha} := \begin{cases}
    \{j_s,j^\delta_m\} \quad w.p. \quad 1-\alpha \\
    \{j_s,j^\delta_m,j_e\} \quad w.p. \quad \alpha
\end{cases}$ on $\mathcal{X}^*_{\delta,\alpha} = \{\{j_s,j^\delta_m\}, \{j_s,j^\delta_m,j_e\}\}$, where $\delta, \alpha\in (0,1)$ are to be adjusted later. 

Observe that the ``best" deterministic online algorithm will start job $j_m$ either at $r_2 = \delta$ or at $2$. 
Think of the following argument. If some algorithm $a$ starts job $j_m$ during $(\delta, 1)$, then an alternative algorithm $a^*$ which starts job $j_m$ at $\delta$ has a strictly lower scheduling cost during $[0,2]$, and hence a strictly lower scheduling cost overall. If some algorithm $a$ starts job $j_m$ during $[1, 2)$, then an alternative algorithm $a^*$ which starts job $j_m$ at $2$ has a strictly lower scheduling cost during $[1,3]$, and hence a strictly lower scheduling cost overall.
Therefore, for each algorithm $a$, $a$ is mapped to either $\bat^1$ (job $j_m$ is started at $\delta$) or $\bat^0$ ($j_m$ is started at $2$). 

Then, compute 
$\e_X[\bat^1(X_{\delta,\alpha})] = (1-\alpha)\cdot (1+\delta) + \alpha\cdot (2+\delta) = 1+\alpha+\delta$
and
$\e_X[\bat^0(X_{\delta,\alpha})] = (1-\alpha)\cdot 2 + \alpha\cdot 2 = 2$. 
To make the adversary strong, it is intuitive to let the two candidates for the best algorithm have the same expected cost, i.e., $\alpha + \delta = 1$, and we have $\mathcal{A}^*=\{\bat^0,\bat^1\}$. 
To attain the strongest adversary, it remains to attain the lowest possible expected optimal cost, since the expected algorithm's cost is fixed to be $2$. 
We have
$\min_{\delta,\alpha: \alpha + \delta = 1} \e_X[\opt(X_{\delta,\alpha})] = \min_{\delta,\alpha: \alpha + \delta = 1} (1-\alpha)\cdot (1+\delta) + \alpha\cdot 2 = \min_{\alpha} 2 - \alpha(1-\alpha) = \frac{7}{4}$ where the minimum is attained when $\alpha = \frac{1}{2}$. 
Eventually, we obtain a lower bound of $\frac{8}{7}$ for any randomized online algorithm due to $X_{\frac{1}{2}, \frac{1}{2}}$ by Lemma \ref{lem:yao}. 

\subsection{A generalized instance} \label{sec:lb1}

Consider generalizing the baby instance for better lower bounds of the competitiveness. 
The first idea is to replace $j_m$ with a lot of jobs released during $j_s$'s processing. On top of it, the second idea is to have many rounds of the above by setting $j_s$ as the beginning of the current round and $j_e$ as the beginning of the next round. Consider the following definition. 
\begin{definition}
    Define the set of jobs $\sigma^{t}_{K}$ as below: 
    \begin{itemize}
    \item job $j^t_s$ is release at time $t$ and rigid;
    \item job $j^t_k$ is released at time $t+\frac{k}{K}$ with starting deadline $t+k+1$ for $k = 1,2,\ldots,K$, where $K$ is a large integer; 
    \end{itemize}
\end{definition}

For each $r = 0,1,\ldots,R$, let $T_r$ denote $\sum_{r' = 1,2,\ldots,r} \Delta_{r'}$ where $\Delta_{1}, \Delta_2,\ldots,\Delta_R$ are i.i.d. random variables that take values in $\{2,3,\ldots,K+1\}$. 
Consider the oblivious adversary $X_{K,R} := \cup_{r = 0,1,\ldots,R} \sigma^{T_r}_{K}$, where $R$ and $K$ are large integers. 
Clearly, the corresponding finite space $\mathcal{X}^*_{K,R}$ is 
\begin{equation*}
    \left\{\cup_{r = 0,1,\ldots,R} \sigma^{\sum_{r' = 1,2,\ldots,r} \delta_{r'}}_{K}: \delta_1,\delta_2,\ldots,\delta_R\in \{2,3,\ldots,K+1\}\right\}. 
\end{equation*}
See Figure \ref{fig:lb_rand} for a job instance when $K = 4$. 

\begin{figure}[h]
  \centering
  \includegraphics[width=0.7\linewidth]{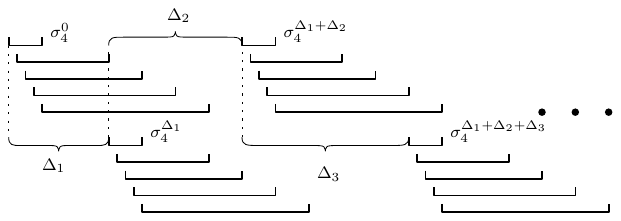}
  \caption{When $K = 4$, $\Delta_1, \Delta_2, \Delta_3$ take values of $3,4,5$ respectively}
  \label{fig:lb_rand}
\end{figure}

Now, we start its analysis. 
As the analysis for the baby instance, we first determine the set of ``best" algorithms.

\begin{lemma} \label{lem:best_alg}
    For any deterministic online algorithm $a$, there exists a deterministic online algorithm $a^*$ with parameters $\Gamma_0, \Gamma_1, \ldots, \Gamma_R$ such that 
\begin{itemize}
    \item $\Gamma_0$ is a constant in $\left\{\frac{1}{K}, \frac{2}{K}, \ldots, 1\right\}$;  
    \item $\Gamma_r$ that takes values in $\left\{\frac{1}{K}, \frac{2}{K}, \ldots, 1\right\}$ is determined by $\Delta_1, \ldots,\Delta_r$, for each $r= 1,2,\ldots,R$; 
    \item $\Gamma_R = \frac{1}{K}$; 
    \item $a^*$ starts jobs $j^{T_r}_1, j^{T_r}_2, \ldots, j^{T_r}_{\Gamma_r\cdot K-1}$ at time $r_{j^{T_r}_{\Gamma_r\cdot K-1}}$, for each $r= 0,1,\ldots,R$; 
    \item $a^*$ starts jobs $j^{T_r}_{\Gamma_r\cdot K}, j^{T_r}_{\Gamma_r\cdot K+1}, \ldots, j^{T_r}_{K}$ at time\\
    $\min\{T_{r+1}, s_{j^{T_r}_{\Gamma_r\cdot K}}\}$, for each $r= 0,1,\ldots,R$; 
    \item $a^*(x) \leq a(x)$ for each $x\in \mathcal{X}^*_{K,R}$; 
\end{itemize}
\end{lemma}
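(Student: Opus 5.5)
We construct $a^*$ from $a$ round by round, replacing $a$'s behaviour in each round $r$ by a canonical two-batch schedule that is never more expensive on any $x\in\mathcal{X}^*_{K,R}$. Everything rests on one observation. Take the instance after $T_r$ is fixed. Up to time $T_r+1$, the algorithm's information is exactly $\Delta_1,\ldots,\Delta_r$: the next rigid job $j^{T_{r+1}}_s$ is released at $T_r+\Delta_{r+1}\ge T_r+2$. Hence at time $T_r+1$ the algorithm has not yet learned anything about $\Delta_{r+1}$.

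In round $r$, the jobs $j^{T_r}_k$ are released at $T_r+\frac kK\in[T_r,T_r+1]$, during the processing of the rigid job $j^{T_r}_s$ on $[T_r,T_r+1]$. We define $\Gamma_r$ as the least $\frac kK$ such that $a$ does not start $j^{T_r}_k$ within $[T_r,T_r+1]$. If no such $k$ exists, we use the convention that gives $\Gamma_r=1$; the indexing in the statement will be checked carefully at this point. Since $a$'s actions during $[T_r,T_r+1]$ depend only on $\Delta_1,\ldots,\Delta_r$, so does $\Gamma_r$. For $r=0$ no $\Delta$ is involved, so $\Gamma_0$ is a constant.

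\textbf{First batch.} $a^*$ starts all jobs $j^{T_r}_1,\ldots,j^{T_r}_{\Gamma_rK-1}$ together at the release time of the last of them. The key claim is that $a$ already pays for the interval $[T_r+1,\,T_r+\Gamma_r-\frac1K+1]$. This holds because $a$ starts $j^{T_r}_{\Gamma_rK-1}$ no earlier than its release time $T_r+\Gamma_r-\frac1K$.

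\textbf{Second batch.} The remaining jobs $j^{T_r}_k$ with $k\ge\Gamma_rK$ are started by $a^*$ at $\min\{T_{r+1},s_{j^{T_r}_{\Gamma_rK}}\}$. Since $s_{j^{T_r}_k}=T_r+k+1$ and all these jobs are released by $T_r+1\le T_{r+1}$, the start time is feasible for every such job. If $T_{r+1}\le s_{j^{T_r}_{\Gamma_rK}}$, these jobs run in parallel with the rigid job $j^{T_{r+1}}_s$ and cost nothing extra. Otherwise they cost one extra unit, starting at $T_r+\Gamma_rK+1<T_{r+1}$.

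In that second case, $a$ must also pay at least one unit outside $[T_r,T_r+1]$ before $T_{r+1}$. Job $j^{T_r}_{\Gamma_rK}$ is not started by $a$ in $[T_r,T_r+1]$, and its deadline forces $a$ to start it by $s_{j^{T_r}_{\Gamma_rK}}<T_{r+1}$. So $a$ is busy for at least one unit in an interval strictly between round $r$'s first batch and the next rigid job.

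For the last round, $\Gamma_R=\frac1K$ is justified because no further rigid job arrives. Running everything in round $R$ at one common time costs at most one extra unit, and $a$ pays at least that much.

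\textbf{Charging argument.} We show $a^*(x)\le a(x)$ by charging each round's cost of $a^*$ to disjoint busy intervals of $a$. The mandatory rigid units are common to both algorithms. The first-batch overhang $[T_r+1,T_r+\Gamma_r]$ is charged to $a$'s processing of $j^{T_r}_{\Gamma_rK-1}$. The possible extra unit of the second batch is charged to $a$'s processing of $j^{T_r}_{\Gamma_rK}$.

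\textbf{Main obstacle.} The hard part is verifying that these charged intervals of $a$ are disjoint from each other and from the rigid intervals. Job $j^{T_r}_{\Gamma_rK}$, started after $T_r+1$, may overlap $a$'s first-batch interval or the rigid job $j^{T_{r+1}}_s$. Here we will need an exchange argument: $a$'s busy time after $T_r+1$ and before $T_{r+1}+1$, restricted to round-$r$ jobs, is at least $(\Gamma_r-\frac1K)+\mathds{1}\{s_{j^{T_r}_{\Gamma_rK}}<T_{r+1}\}$, computed by a union-length bound on these processing intervals. We will argue this inequality separately for the case where the two intervals overlap (then their union has length at least that of the first interval plus the part of the second lying after it, and this part is at least one unit since the second starts after $T_r+1$), and we take care with the off-by-$\frac1K$ edge terms.
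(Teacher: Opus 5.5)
There is a genuine gap, and it sits exactly at the step you flag as the main obstacle. It cannot be closed by a sharper charging argument, because your $a^*$ itself can be more expensive than $a$.

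\textbf{The problem.} Your rule takes $\Gamma_r$ to be the least $\frac kK$ such that $a$ does not start $j^{T_r}_k$ within $[T_r,T_r+1]$. This ignores what $a$ does during $(T_r+1,T_r+2)$. The inequality you plan to prove is that $a$'s busy time after $T_r+1$ is at least $(\Gamma_r-\frac1K)+\mathds{1}\{s_{j^{T_r}_{\Gamma_rK}}<T_{r+1}\}$. This inequality is false, and so is your justification for it. Suppose $a$ starts $j^{T_r}_{\Gamma_rK}$ at some $t\in(T_r+1,T_r+2)$. Then the part of $[t,t+1]$ lying beyond $T_r+1+\Gamma_r-\frac1K$ has length only $t-T_r-(\Gamma_r-\frac1K)$, not $1$.

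\textbf{A counterexample.} Take $K\ge 20$ divisible by $10$, $R=1$, $T_0=0$, and $\epsilon>0$ small. Let $a$ start $j^{T_0}_1,\ldots,j^{T_0}_{9K/10}$ at time $\frac{9}{10}$ and all remaining $j^{T_0}_k$ at time $1+\epsilon$. Your rule gives $\Gamma_0-\frac1K=\frac{9}{10}$ and $s_{j^{T_0}_{\Gamma_0K}}=\frac{9K}{10}+2<K+1$. On the instance with $\Delta_1=K+1$, your $a^*$ is busy on $[0,1.9]$ and on $[\frac{9K}{10}+2,\frac{9K}{10}+3]$, so it pays $2.9$ before $T_1$. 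Meanwhile $a$ is busy only on $[0,2+\epsilon]$. If $a$ also pays $2$ in the last round (for example by starting all $j^{T_1}_k$ at $T_1+1$), then $a^*(x)=4.9>4+\epsilon=a(x)$.

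\textbf{The missing idea: the paper's case split.} At time $T_r$, $a^*$ can simulate $a$ on $[T_r,T_r+2)$. This is legitimate because no job is released in $(T_r+1,T_r+2)$, and everything before $T_r+2$ is determined by $\Delta_1,\ldots,\Delta_r$. Make $a^*$ do this simulation explicitly.
\begin{itemize}
\item If $a$ starts any job in $[T_r+1,T_r+2)$, set $\Gamma_r=\frac1K$, i.e.\ defer all round-$r$ jobs to $T_r+2$. When $T_{r+1}\ge T_r+3$, $a^*$ pays an extra unit $[T_r+2,T_r+3]$. This is matched by the unit $a$ spends inside $[T_r+1,T_r+3]\subseteq[T_r,T_{r+1}]$. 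That job may belong to an earlier round, so account by time windows $[T_r,T_{r+1}]$ rather than by round-$r$ jobs.
\item Otherwise, choose $\Gamma_r$ from $a$'s starts in $[T_r,T_r+1)$. The paper uses the last start time $T^*$; your least-index rule also works in this case. Now $j^{T_r}_{\Gamma_rK}$ is not started by $a$ before $T_r+2$.
\end{itemize}
In the second case, the processing interval of $j^{T_r}_{\Gamma_rK}$ is disjoint from the overhang $[T_r+1,\,T_r+1+\Gamma_r-\frac1K]\subseteq[T_r+1,T_r+2]$. It also ends by $T_{r+1}$ whenever $s_{j^{T_r}_{\Gamma_rK}}<T_{r+1}$, since both are integer offsets of $T_r$. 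That disjointness is exactly what your charging needs. With this split added, the rest of your plan goes through: the first-batch overhang, second-batch feasibility, and $\Gamma_R=\frac1K$.
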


The following Lemma follows from Lemma \ref{lem:best_alg}. 
\begin{lemma} \label{lem:finite}
    There exists a finite set $\mathcal{A}^*_{K,R}$ such that each deterministic online algorithm $a\in \mathcal{A}$ is mapped to $a^*$ for some $a^*\in \mathcal{A}^*_{K,R}$. 
\end{lemma}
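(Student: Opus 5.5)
We will take $\mathcal{A}^*_{K,R}$ to be the set of all algorithms $a^*$ of the canonical form described in Lemma~\ref{lem:best_alg}, and check that this set is finite. Once that is done, the lemma follows immediately: for any $a\in\mathcal{A}$, Lemma~\ref{lem:best_alg} produces an $a^*$ of this form with $a^*(x)\le a(x)$ for all $x\in\mathcal{X}^*_{K,R}$, and that $a^*$ lies in $\mathcal{A}^*_{K,R}$ by definition. This map $a\mapsto a^*$ is exactly the map $l$ required in Lemma~\ref{lem:yao}.

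To prove finiteness, we first observe that the behaviour of $a^*$ on every instance in $\mathcal{X}^*_{K,R}$ is completely fixed by its parameters $\Gamma_0,\Gamma_1,\ldots,\Gamma_R$. The last two scheduling bullets of Lemma~\ref{lem:best_alg} give the start time of every non-rigid job $j^{T_r}_k$ as a function of $\Gamma_r$, $T_r$ and $T_{r+1}$ (and the $T$'s are read off from the instance). The rigid jobs $j^{T_r}_s$ have forced start times. Two algorithms of this form that share the same parameter functions therefore coincide on $\mathcal{X}^*_{K,R}$. Since the statement only concerns costs on $\mathcal{X}^*_{K,R}$, we may identify them, or equivalently choose one representative per parameter profile.

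Next we count the parameter profiles. $\Gamma_0$ is one of $K$ constants, and $\Gamma_R=\frac1K$ is fixed. For $1\le r\le R-1$, $\Gamma_r$ is a function from $\{2,\ldots,K+1\}^r$ to $\{\frac1K,\ldots,1\}$, so there are at most $K^{K^r}$ choices. Hence the number of profiles is at most $K\cdot\prod_{r=1}^{R-1}K^{K^r}<\infty$, and $\mathcal{A}^*_{K,R}$ is finite.

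The only delicate point is that each such profile must correspond to a genuine online algorithm, and that the identification above is legitimate. $\Gamma_r$ depends only on $\Delta_1,\ldots,\Delta_r$, and these are revealed by time $T_r$ through the release of $j^{T_r}_s$. The decisions of $a^*$ in round $r$ happen at or after $T_r$, and the $\min\{T_{r+1},\cdot\}$ rule only reacts to the release at $T_{r+1}$ when it occurs. So $a^*$ is online, and this is already part of what Lemma~\ref{lem:best_alg} asserts. The identification is harmless because Lemma~\ref{lem:yao} only uses values $a^*(x)$ with $x\in\mathcal{X}^*_{K,R}$.
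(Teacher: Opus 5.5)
Your proposal is correct and follows the paper's proof. Both proofs keep one representative per class of canonical-form algorithms from Lemma~\ref{lem:best_alg} that agree on $\mathcal{X}^*_{K,R}$. Both then bound the number of classes by the finitely many profiles $(\Gamma_0,\ldots,\Gamma_{R-1})$, where each $\Gamma_r$ is a function from $\{2,\ldots,K+1\}^r$ to $\{\frac1K,\ldots,1\}$.

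Your explicit count, and your use of an injection into the profile space instead of the paper's claimed one-to-one correspondence, are harmless refinements. One wording fix: your opening sentence should already refer to the set of representatives. The set of all canonical-form algorithms is infinite, since they can differ on inputs outside $\mathcal{X}^*_{K,R}$.
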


\begin{lemma} \label{lem:comp_1}
    If $\p(\Delta_r = \delta_r) = \frac{1}{K}$ for each $\delta_r = 2,3,\ldots,K+1$, then $\e_{X_{K,R}}[a^*(X_{K,R})] = 2 + R\cdot \left(2 - \frac{1}{K}\right)$ for any $a^*\in \mathcal{A}^*_{K,R}$, and $\e_{X_{K,R}}[\opt(X_{K,R})] = 2 + R\cdot\frac{3K-1}{2K}$. 
\end{lemma}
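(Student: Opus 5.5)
The plan is to compute both expectations round by round, cutting the time line at the random round starts $T_0<T_1<\cdots<T_R$. Every cost contribution of round $r$ lies in $[T_r,T_{r+1}]$ for $r<R$, and in $[T_R,\infty)$ for the last round. Linearity of expectation then reduces both claims to per-round computations. The first $R$ rounds $r=0,\dots,R-1$ each contribute the same expected amount, and the final round $r=R$ contributes exactly $2$.

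\emph{Algorithm cost.} Fix $a^*\in\mathcal{A}^*_{K,R}$ from Lemma \ref{lem:best_alg} and a round $r<R$, and write $\Gamma_r=g/K$ and $\Delta_{r+1}=d$.
\begin{itemize}
\item By the structure in Lemma \ref{lem:best_alg}, jobs $j^{T_r}_1,\dots,j^{T_r}_{g-1}$ start at $T_r+\frac{g-1}{K}$. Together with the rigid job this gives the busy interval $[T_r,T_r+1+\frac{g-1}{K}]$, of length $1+\frac{g-1}{K}$. When $g=1$ the group is empty and only the rigid job contributes.
\item The remaining jobs start at $\min\{T_{r+1},T_r+g+1\}$. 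If $d\le g+1$ they start at $T_{r+1}$ and are absorbed into the next round's rigid job, at no extra cost. If $d>g+1$ they form a separate batch $[T_r+g+1,T_r+g+2]$ of cost $1$.
\item Disjointness holds because all times involved are integers: $T_r+g+2\le T_{r+1}$ in the second case, and $T_r+1+\frac{g-1}{K}\le T_r+2\le T_{r+1}$.
\end{itemize}
Since $\Gamma_r$ depends only on $\Delta_1,\dots,\Delta_r$, it is independent of $\Delta_{r+1}$. Conditioning on $\Delta_1,\dots,\Delta_r$ and using the uniform law, $\p(\Delta_{r+1}>g+1)=\frac{K-g}{K}$. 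The conditional expected cost of round $r$ is therefore
\[
1+\frac{g-1}{K}+\frac{K-g}{K}=2-\frac1K,
\]
which is independent of $g$. This independence is the crux of the construction.

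For the last round, $\Gamma_R=\frac1K$ and there is no next round. All flexible jobs then start at $s_{j^{T_R}_1}=T_R+2$, giving cost exactly $1+1=2$. Summing over rounds yields $2+R(2-\frac1K)$.

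\emph{Optimal cost.} Fix a realization and a round $r<R$ with $\Delta_{r+1}=d\in\{2,\dots,K+1\}$.
\begin{itemize}
\item \textbf{Upper bound.} Jobs $j^{T_r}_k$ with $k\ge d-1$ have $s\ge T_{r+1}$, so they can be started at $T_{r+1}$ together with the next rigid job. Jobs with $k\le d-2$ can all be started at $T_r+\frac{d-2}{K}$. This gives per-round cost $1+\frac{d-2}{K}$.
\item \textbf{Lower bound.} The rigid job costs $1$ on $[T_r,T_r+1]$. When $d\ge3$, job $j^{T_r}_{d-2}$ is released at $T_r+\frac{d-2}{K}$ and must finish by $T_r+d=T_{r+1}$. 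Hence it forces busy time of at least $\min\{1,\frac{d-2}{K}\}=\frac{d-2}{K}$ inside $(T_r+1,T_{r+1}]$: either it extends the first block, or it occupies a separate unit of time within that window.
\end{itemize}
For the last round, $j^{T_R}_K$ is released at $T_R+1$, which forces cost at least $2$. The value $2$ is achieved by starting all flexible jobs at $T_R+1$. Taking expectations,
\[
\e\left[\frac{\Delta-2}{K}\right]=\frac{K-1}{2K},
\]
so each of the first $R$ rounds has expected optimal cost $\frac{3K-1}{2K}$, for a total of $2+R\cdot\frac{3K-1}{2K}$.

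The main obstacle is the lower bound for the optimum. One must argue carefully that the per-window charges are disjoint across rounds: in particular, that a unit block ending exactly at $T_{r+1}$ is not double counted with the next round's rigid job. One must also check that no cleverer cross-round arrangement beats $\frac{d-2}{K}$ per window. I would handle this by charging only the portion of busy time lying in the half-open window $(T_r+1,T_{r+1}]$, taking these windows for different $r$ to be disjoint.
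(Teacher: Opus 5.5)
Your proposal is correct and follows the paper's proof essentially step for step. It uses the same round-by-round decomposition at the $T_r$, the same per-round algorithm cost $1+\Gamma_r-\frac{1}{K}+\mathds{1}\{\Delta_{r+1}-1>\Gamma_r K\}$ (whose conditional expectation is $2-\frac{1}{K}$ because $\Delta_{r+1}$ is independent of $\Gamma_r$), and the same per-round optimal schedule of cost $1+\frac{\Delta_{r+1}-2}{K}$. Your only addition is the windowed lower bound showing that this schedule is actually optimal, which the paper merely exhibits. That argument is sound as you state it: charging busy time on the windows $[T_r,T_{r+1}]$, which overlap only at endpoints, already resolves the obstacle you flag.
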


\begin{theorem} \label{thm:lb_1}
    The competitiveness of any randomized online algorithm is at least $\frac{4}{3}$. 
\end{theorem}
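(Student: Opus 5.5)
The plan is to apply Yao's principle (Lemma \ref{lem:yao}) to the random instance $X_{K,R}$ with $\Delta_1,\ldots,\Delta_R$ i.i.d.\ uniform on $\{2,3,\ldots,K+1\}$, so that $\p(\Delta_r=\delta_r)=\frac{1}{K}$. The space $\mathcal{X}^*_{K,R}$ is finite, since it has $K^R$ elements. Lemma \ref{lem:finite}, built on Lemma \ref{lem:best_alg}, supplies a finite set $\mathcal{A}^*_{K,R}$ and a map $a\mapsto a^*$ with $a^*(x)\le a(x)$ for all $x\in\mathcal{X}^*_{K,R}$. These are exactly the hypotheses of Lemma \ref{lem:yao}, with $l(a)=a^*$. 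Hence the competitiveness of any randomized online algorithm is at least
\[
\frac{\min_{a^*\in\mathcal{A}^*_{K,R}}\e_{X}[a^*(X_{K,R})]}{\e_X[\opt(X_{K,R})]}.
\]

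Next I substitute the two expectations from Lemma \ref{lem:comp_1}. The numerator is $2+R(2-\frac1K)$, and this holds for every $a^*$, so the minimum equals that value. The denominator is $2+R\cdot\frac{3K-1}{2K}$. The lower bound is therefore
\[
\frac{2+R\left(2-\frac{1}{K}\right)}{2+R\cdot\frac{3K-1}{2K}} .
\]

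Finally I take limits. Letting $R\to\infty$ with $K$ fixed, the ratio tends to $\frac{(2K-1)/K}{(3K-1)/(2K)}=\frac{2(2K-1)}{3K-1}$. Letting $K\to\infty$, this tends to $\frac{4}{3}$. Since $K$ and $R$ can be chosen arbitrarily large, no randomized online algorithm can have competitiveness below $\frac43-\epsilon$ for any $\epsilon>0$. The competitiveness is therefore at least $\frac43$.

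The only delicate point is formal: a lower bound that holds for every finite $K,R$ and tends to $\frac43$ must imply competitiveness $\ge\frac43$. This follows because a $\mathbf{c}$-competitive algorithm with $\mathbf{c}<\frac43$ would contradict the bound once $K,R$ are large enough that the ratio exceeds $\mathbf{c}$. All the substantive work lives in Lemmas \ref{lem:best_alg} and \ref{lem:comp_1}, which I take as given.
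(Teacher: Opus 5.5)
Your proposal is correct and follows the paper's own proof: Yao's principle (Lemma \ref{lem:yao}) applied to $X_{K,R}$ with uniform $\Delta_r$, Lemmas \ref{lem:best_alg} and \ref{lem:finite} for the finite dominating class, Lemma \ref{lem:comp_1} for the two expectations, and then a limit. The only differences are cosmetic. You let $R\to\infty$ before $K\to\infty$, whereas the paper does the reverse, and either order gives $\frac{4}{3}$. You also add an explicit argument for why a bound that holds for every finite $K,R$ gives competitiveness at least $\frac{4}{3}$, which the paper leaves implicit.
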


\subsection{A better distribution} \label{sec:lb2}

In the previous subsection, we choose the uniform distribution for $\Delta_r$ to make the expected algorithm cost constant for each $a^*\in \mathcal{A}^*_{K,R}$. But actually a modified uniform distribution putting a significant mass at $\Delta_r = 2$ gives a slightly better lower bound of the competitiveness. 

\begin{lemma} \label{lem:comp_2}
    If $\Delta_r = \begin{cases}
    2\quad \text{w.p.} \quad \frac{1+K-\tilde{K}}{K}\\
    \delta_r\quad \text{w.p.} \quad \frac{1}{K} \quad \forall\delta_r = 3,4,\ldots,\tilde{K}+1
\end{cases}$, where $\tilde{K} \in \{K,K-1,\ldots,2\}$, then $\e_{X_{K,R}}[a^*(X_{K,R})] \geq 2 + R\cdot \left(1 + \frac{\tilde{K}-1}{K}\right)$ for any $a^*\in \mathcal{A}^*_{K,R}$, and $\e_{X_{K,R}}[\opt(X_{K,R})] = 2 + R\cdot \left(1 + \frac{\tilde{K}(\tilde{K}-1)}{2K^2}\right)$. 
\end{lemma}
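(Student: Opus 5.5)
The plan is to compute both expectations round by round, using the structure of $a^*$ from Lemma \ref{lem:best_alg} for the algorithm and an explicit schedule plus a matching lower bound for $\opt$. Write $m_r := \Gamma_r\cdot K\in\{1,\ldots,K\}$ and $\Delta_{r+1} = T_{r+1}-T_r$. Each rigid job $j^{T_r}_s$ costs $1$ on $[T_r,T_r+1]$, giving $R+1$ in total. I will then charge to round $r$ only the extra busy time caused by the jobs $j^{T_r}_k$.

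\textbf{Algorithm side.} Fix $r<R$. Since $a^*$ starts $j^{T_r}_1,\ldots,j^{T_r}_{m_r-1}$ at $T_r+\frac{m_r-1}{K}$, these jobs extend the busy period beyond $T_r+1$ by $\frac{m_r-1}{K}$. The remaining jobs $j^{T_r}_{m_r},\ldots,j^{T_r}_K$ are started at $\min\{T_{r+1},\,T_r+m_r+1\}$, and they all have deadlines at least that large.
\begin{itemize}
\item If $\Delta_{r+1}\le m_r+1$, they run together with the rigid job $j^{T_{r+1}}_s$ at no extra cost.
\item Otherwise they cost an extra $1$, in a window that is disjoint from everything else in the round.
\end{itemize}
The key point is that $m_r$ depends only on $\Delta_1,\ldots,\Delta_r$, so it is independent of $\Delta_{r+1}$. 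Conditioning on $m_r=m$, the extra cost is
\[
\frac{m-1}{K}+\p[\Delta_{r+1}>m+1]=\frac{m-1}{K}+\frac{\max\{0,\tilde K-m\}}{K}.
\]
Here I use $m+2\ge 3$, so the event $\Delta_{r+1}>m+1$ covers exactly the values $m+2,\ldots,\tilde K+1$, each of mass $\frac1K$. This quantity is at least $\frac{\tilde K-1}{K}$ for every $m$, and the bound holds with equality when $m\le\tilde K$. Adding the rigid job of round $r+1$ gives at least $1+\frac{\tilde K-1}{K}$ per round. The last round has $\Gamma_R=\frac1K$, and its deferred jobs cost $1$. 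The boundary terms therefore total $2$, which yields the claimed bound $2+R\left(1+\frac{\tilde K-1}{K}\right)$.

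\textbf{OPT side.} Fix round $r<R$ with $\Delta_{r+1}=d$. Jobs $j^{T_r}_k$ with $k\ge d-1$ have starting deadline at least $T_{r+1}$, so they can be scheduled with $j^{T_{r+1}}_s$ for free. The jobs with $k\le d-2$ are all started at $T_r+\frac{d-2}{K}$, which costs $\frac{d-2}{K}$ beyond $T_r+1$; this is at most $1$ because $d\le\tilde K+1\le K+1$. When $d=2$ this cost is $0$. Averaging gives
\[
\sum_{d=3}^{\tilde K+1}\frac{d-2}{K^2}=\frac{\tilde K(\tilde K-1)}{2K^2}.
\]
The last round costs $2$, so the expected $\opt$ is $2+R\left(1+\frac{\tilde K(\tilde K-1)}{2K^2}\right)$.

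For the matching lower bound I will use a charging argument. Job $j^{T_r}_{d-2}$ is released at $T_r+\frac{d-2}{K}$ and has deadline $T_r+d<T_{r+1}+1$, with starting deadline $T_r+d-1<T_{r+1}$. Any schedule must therefore either merge it with the rigid block $[T_r,T_r+1]$, paying at least $\frac{d-2}{K}$ of busy time in $(T_r+1,T_{r+1})$, or run it in its own stretch, paying at least $1$ there. These intervals are disjoint across rounds, which gives the lower bound round by round.

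\textbf{Main obstacle.} I expect the main difficulty to be the disjointness bookkeeping. On the algorithm side, I must show that the per-round extra costs really do add up; this uses $\Delta\ge2$, so $T_r+1+\frac{m_r-1}{K}\le T_{r+1}$. On the $\opt$ side, I must make the lower bound rigorous when jobs from different rounds could in principle share busy time. I would handle this by restricting the charged busy time to the open interval $(T_r+1,T_{r+1})$ for each round $r$.
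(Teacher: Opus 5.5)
Your proposal is correct and follows the paper's proof. Both use the per-round cost $1+\Gamma_r-\frac{1}{K}+\mathds{1}\{\Delta_{r+1}-1>\Gamma_r K\}$ and the independence of $\Gamma_r$ from $\Delta_{r+1}$, which gives the conditional bound $\Gamma_r+\frac{(\tilde{K}-\Gamma_r K)^+}{K}\geq\frac{\tilde{K}}{K}$, together with the per-round OPT schedule of cost $1+\frac{\Delta_{r+1}-2}{K}$ averaged under the new distribution.

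Your charging argument that this OPT schedule is actually optimal goes beyond the paper, which only exhibits the schedule. The argument is sound: job $j^{T_r}_{\Delta_{r+1}-2}$ has deadline exactly $T_{r+1}$, so it forces at least $\frac{\Delta_{r+1}-2}{K}$ of busy time in $[T_r+1,T_{r+1}]$, and $j^{T_R}_K$ supplies the final $2$. It is not needed for Theorem \ref{thm:low_best}, though, since only an upper bound on $\e[\opt]$ enters there.
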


\begin{theorem} \label{thm:low_best}
    The competitiveness of any randomized online algorithm is at least $\frac{1+\sqrt{3}}{2}\approx 1.366$. 
\end{theorem}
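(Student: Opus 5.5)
Theorem \ref{thm:low_best} follows by combining Lemma \ref{lem:comp_2} with Yao's principle (Lemma \ref{lem:yao}), then optimizing the free parameter $\tilde{K}$ as $K\to\infty$.

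First, fix $K$, $R$ and $\tilde{K}\in\{2,\ldots,K\}$, and let $\Delta_1,\ldots,\Delta_R$ be i.i.d.\ with the modified distribution of Lemma \ref{lem:comp_2}. This gives a random instance $X_{K,R}$ on the finite set $\mathcal{X}^*_{K,R}$; the support of $\Delta_r$ is contained in $\{2,\ldots,K+1\}$.

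By Lemmas \ref{lem:best_alg} and \ref{lem:finite}, every deterministic online algorithm $a$ is dominated on $\mathcal{X}^*_{K,R}$ by some $a^*$ in the finite set $\mathcal{A}^*_{K,R}$. The domination in Lemma \ref{lem:best_alg} holds pointwise on every $x\in\mathcal{X}^*_{K,R}$, so it does not depend on the distribution of the $\Delta_r$. Hence the hypotheses of Lemma \ref{lem:yao} hold. Lemma \ref{lem:comp_2} then gives, for every randomized online algorithm, a competitiveness of at least
\[
\frac{2+R\left(1+\frac{\tilde{K}-1}{K}\right)}{2+R\left(1+\frac{\tilde{K}(\tilde{K}-1)}{2K^2}\right)}.
\]

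Next, let $R\to\infty$, so the ratio tends to $\frac{1+(\tilde{K}-1)/K}{1+\tilde{K}(\tilde{K}-1)/(2K^2)}$. Set $\tilde{K}=\lceil xK\rceil$ with $x\in(0,1]$ and let $K\to\infty$; the bound tends to
\[
g(x)=\frac{1+x}{1+x^2/2}.
\]
Maximizing, the condition $g'(x)=0$ becomes $(1+x^2/2)-(1+x)x=0$, that is $x^2+2x-2=0$, so $x=\sqrt{3}-1\in(0,1]$. At this point $1+x^2/2=1+(4-2\sqrt{3})/2=3-\sqrt{3}$, and therefore
\[
g(\sqrt{3}-1)=\frac{\sqrt{3}}{3-\sqrt{3}}=\frac{\sqrt{3}(3+\sqrt{3})}{6}=\frac{1+\sqrt{3}}{2}.
\]

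Finally, the lower bound must hold for arbitrarily large finite $K$ and $R$. For any $\varepsilon>0$, choose $K$ large with $\tilde{K}=\lceil(\sqrt{3}-1)K\rceil$, which lies in $\{2,\ldots,K\}$ for large $K$. Then choose $R$ large so that the finite ratio exceeds $\frac{1+\sqrt{3}}{2}-\varepsilon$. The main obstacle is already absorbed into Lemma \ref{lem:comp_2}, namely that the expected cost of every $a^*$ is at least the stated amount under the nonuniform distribution. The remaining care is only in checking the limit interchange and the rounding of $\tilde{K}$, both of which are routine.
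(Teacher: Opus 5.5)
Your proposal is correct and matches the paper's proof: both combine Lemmas \ref{lem:best_alg}, \ref{lem:yao} and \ref{lem:comp_2}, set $\tilde{K}\approx(\sqrt{3}-1)K$, and let $K,R\to\infty$ to reach $\max_{p}\frac{1+p}{1+p^2/2}=\frac{1+\sqrt{3}}{2}$. The only differences are that you use a ceiling where the paper uses a floor for $\tilde{K}$, and you take $R\to\infty$ before $K\to\infty$; both are immaterial, since the bound holds for every finite pair $(K,R)$.
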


\section{Algorithm with restarts} \label{sec:restart}

This section studies deterministic online algorithms that restarts jobs, whose definition will be provided shortly. We are interested in online algorithms with restarts not only because the ability of restarting jobs can improve the algorithm's competitiveness, but also because the lower bound presented in Section \ref{sec:lb1} and \ref{sec:lb2} also applies for any randomized online algorithm that restarts jobs. The goal of this section is to obtain the tight competitiveness of the setting for all deterministic online algorithms that restart jobs.

\subsection{Motivation}

So far, it is not known whether the algorithm $\db^\gamma$ gives the best possible competitiveness. To close the gap between $\frac{1+\sqrt{3}}{2}$ and $\frac{1}{\ln{2}}$, one gives a better lower bound of competitiveness for all algorithms, or, proposes an algorithm that has a lower competitiveness, or even efforts from both sides. 
Unfortunately, we did not manage to achieve any of the above. 
Then, we work around the existing lower bound of $\frac{1+\sqrt{3}}{2}$ by making the compromise that online algorithms are allowed to {\it restart} jobs. 
Precisely, we have the following definition. 
\begin{definition} [Restart a Job]
    Let $a_r$ denote a deterministic online algorithm that can restart jobs. 
    If $a_r$ is processing job $j$ at time $t$ such that $t\in [r_j,s_j]$, then $a_r$ can stop processing job $j$ at time $t$ and restart the {\it whole} processing of job $j$ at any time $t'\in [t,s_j]$. 
\end{definition}

We present a simple example (see Figure \ref{fig:restart_demo}) to show how restarting jobs works. 
Consider an adversary releasing the following four jobs: 
$\hat{j}$ is rigid and released at time $0$;
$j_1$ is released at time $1/3$ such that $s_{j_1} = 3$;
$j_2$ is released at time $2/3$ such that $s_{j_2} = 3$;
$\hat{j}'$ is rigid and released at time $3$. 
Consider $\bat^{1/2}$'s scheduling. 
By definition, $j_1$ is started at its release time and $j_2$ is started at its starting deadline. 
Then, consider $\bat^{1/2}$'s scheduling plus the ability of restarting jobs. 
By definition, $\bat^{1/2}$ starts job $j_1$ at time $1/3$. 
At time $2/3$, the online scheduler observes that $j_1$ should be started at the same time as $j_2$, since $[r_{j_1},s_{j_1}]\supset [r_{j_2},s_{j_2}]$ for minimizing the algorithm's cost. In other words, the online scheduler realizes that $j_1$ is redundant when and only when $j_2$ is released. 
Due to the observation, the online scheduler stops $j_1$'s processing at time $2/3$ and restart $j_1$ at the same time as $j_2$ at time $3$. 
Restarting job $j_1$ saves a cost of $1/3$ for $\bat^{1/2}$. 
\begin{figure}[h]
    \centering
    \includegraphics[width=0.5\linewidth]{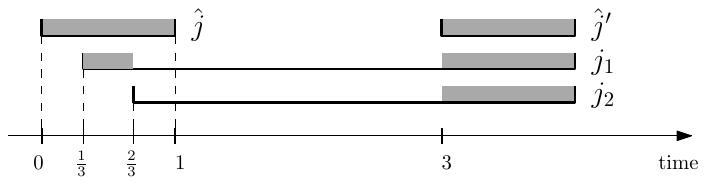}
    \caption{Restart job $j_1$}
    \label{fig:restart_demo}
\end{figure}

An easy observation is that $X_{K,R}$ as defined in Section \ref{sec:lb1} and \ref{sec:lb2} also works for algorithms that restart jobs. 
Precisely, in the following lemma, we show that for scheduling the randomized job instance $X_{K,R}$, the ability of restarting jobs is useless. 
\begin{lemma} \label{lem:best_alg_restart}
    Suppose $a_r$ is any deterministic online algorithm that can restart jobs. There exists a deterministic online algorithm $a$ that never restarts any job such that $a(x) \leq a_r(x)$ for each $x\in \mathcal{X}^*_{K,R}$. 
\end{lemma}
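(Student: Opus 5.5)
We build $a$ from $a_r$ by simulation: $a$ runs $a_r$ internally, and whenever $a_r$ starts a copy of some job, $a$ decides whether to commit now or wait, so that $a$'s final busy set is contained in $a_r$'s. The natural choice is to let $a$ start each job $j$ exactly at the start time of the copy of $j$ that $a_r$ eventually \emph{completes}. This is not online, since at a start moment $a_r$ may later abort the copy. So we use the structure of $\mathcal{X}^*_{K,R}$ to decide online which copy to commit to.

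In round $r$ the jobs are the rigid $j^{T_r}_s$ and the nested family $j^{T_r}_k$. These have release times $T_r+\frac{k}{K}$ and starting deadlines $T_r+k+1$. The window of $j^{T_r}_k$ is $[T_r+\frac kK, T_r+k+1]$. Rigid jobs cannot be restarted, so they are fixed in both schedules. Consider a flexible job $j=j^{T_r}_k$, and suppose $a_r$ aborts a copy started at time $t$ at some later time $t''$. Only the busy period $[t,t'']$ is "wasted". The plan is to show that $a$ can instead start $j$ at time $t$ and keep it running until $t+1$ without paying more than $a_r$ pays on $[t, t+1]\cup\{\text{later run of } j\}$.

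The key observation is that a restart is only useful in a job-free sense when the machine would otherwise idle. We argue as follows. Let $t$ be the start of the \emph{first} copy of $j$ under $a_r$, and let $t'$ be the start of the final completed copy. Define $a$ to start $j$ at $t$ if $a_r$'s machine is busy throughout $[t,t+1]$ (e.g.\ because of $j^{T_r}_s$ or another job). Otherwise $a$ delays $j$ and mimics $a_r$. More cleanly, we would do an exchange argument on $a_r$'s output. Take an optimal-for-$a_r$ restart behaviour. Among algorithms with cost at most $a_r(x)$ on every $x$, pick one minimizing the number of aborted copies. Then show that any aborted copy can be removed, either by truncating (deleting its busy time, since the completed copy covers $j$) or by committing, without increasing the cost on any instance.

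Deleting an aborted copy never increases cost, and it is an online-consistent change, since we simply never start that copy. The only issue is that $a$ must decide online \emph{not} to start a copy that $a_r$ starts and later aborts. This is where the instance structure enters. An abort by $a_r$ at time $t''$ can only be triggered by information revealed at $t''$. In $\mathcal{X}^*_{K,R}$, the only information revealed in round $r$ is the sequence of releases $j^{T_r}_k$ at the grid times and the value of $T_{r+1}$ (the arrival of $j^{T_{r+1}}_s$). Both types of event are known to be possible in advance. So $a$ can run a deterministic rule of the form "start the pending jobs at time $t$ iff $a_r$, on the branch where no further jobs arrive before $t+1$, completes them." On branches where a new job does arrive, $a$ instead follows $a_r$'s future decisions.

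The main obstacle is this last step: showing that committing to the "no-new-arrival" branch never costs more than $a_r$ on the other branches. We would try to bound it by charging. If $a$ committed at $t$ and $a_r$ aborted at $t''<t+1$ and restarted at $t'$, then $a_r$ pays at least $(t''-t)+1$ for $j$. Meanwhile, $a$ pays for $[t,t+1]$, which has extra length at most $1-(t''-t)$. That excess must then be absorbed by the fact that $a_r$'s restarted copy of length $1$ is not needed by $a$, unless it overlaps other busy time of $a_r$. Handling that overlap case, presumably using that later round-$r$ jobs have windows containing $t$ and can be co-scheduled with $j$, is where the case analysis lies.
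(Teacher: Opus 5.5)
Your proposal leaves the lemma unproven: the step you flag as ``the main obstacle'' is the whole content of the lemma, and the ingredients before it do not help. The exchange argument (delete an aborted copy, i.e.\ ``simply never start that copy'') is not an online transformation, because when $a_r$ starts a copy, whether it will later be aborted depends on the future. Your claim that an abort can only be triggered by information revealed at the abort time is false for an arbitrary deterministic $a_r$, which may abort whenever it likes, and the lemma must hold for every $a_r$.

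Your concrete rule also misreads the information structure of $\mathcal{X}^*_{K,R}$. The releases of $j^{T_r}_1,\ldots,j^{T_r}_K$ in $(T_r,T_r+1]$ are deterministic once $T_r$ is known. The only genuinely new information is $\Delta_{r+1}$, revealed at the integer times $T_r+2,T_r+3,\ldots$. If ``no further jobs arrive before $t+1$'' includes those deterministic releases, the rule fails. Let $a_r$ start $j^{T_r}_1$ at $T_r+\frac1K$, abort it when $j^{T_r}_2$ arrives, and run all flexible jobs of the round at $T_r+2$. Suppose that on the branch without further arrivals (which is not in $\mathcal{X}^*_{K,R}$) it would have completed that copy. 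Then your $a$ commits and pays $2+\frac1K$ in the round, against $2$ for $a_r$. If instead the branch only withholds the next rigid job, two problems remain. Feasibility is open: a copy completed on the actual branch but aborted on the hypothetical one is never committed. And the charging argument you leave open is still missing.

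The missing idea is to stop mirroring individual copies. Instead, compare the algorithms round by round on $[T_r,T_{r+1})$ using batches, charging only to copies that $a_r$ completes. Since nothing new is revealed in $[T_r,T_r+2)$, at time $T_r$ the algorithm $a$ can compute $a_r$'s entire behaviour there.

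If $a_r$ is busy throughout $[T_r,T_r+2)$, $a$ runs the rigid job and then all of $j^{T_r}_1,\ldots,j^{T_r}_K$ at $T_r+2$. This costs at most $2$, against at least $2$ for $a_r$.

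Otherwise let $t$ be $a_r$'s first idle instant in $[T_r+1,T_r+2)$. Then $a$ starts everything released by $t-1$ at time $t-1$, so both algorithms are busy exactly on $[T_r,t]$.

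For the remaining jobs, use two facts: starting deadlines are integers offset from $T_r$, and no new information arrives inside any $(T_r+k+1,T_r+k+2)$. So at each integer time $T_r+k+1$, $a$ can foresee whether $a_r$ completes some job in $(T_r+k+1,T_r+k+2]$; if so, $a$ starts all pending jobs then. Anything still pending goes with the rigid job at $T_{r+1}$. Hence after $t$, $a$ starts at most one unit-length batch strictly before $T_{r+1}$. That batch is paid for by the first copy $[t^*,t^*+1]\subseteq[t,T_{r+1}]$ that $a_r$ completes after $t$. Aborted copies of $a_r$ never need to be accounted for.
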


By Lemma \ref{lem:best_alg_restart} and \ref{lem:best_alg}, the lower bounds in Section \ref{sec:lb1} and \ref{sec:lb2} also applies to any randomized online algorithm that restarts jobs. Then, Theorem \ref{thm:low_best} is immediately improved as below. 
\begin{theorem}
    The competitiveness of any randomized online algorithm that restarts jobs is at least $\frac{1+\sqrt{3}}{2}\approx 1.366$.
\end{theorem}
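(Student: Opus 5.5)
The plan is to combine three earlier results. Lemma \ref{lem:best_alg_restart} turns any deterministic online algorithm with restarts into one without restarts that is no worse on $\mathcal{X}^*_{K,R}$. Lemma \ref{lem:best_alg} and Lemma \ref{lem:finite} then reduce that algorithm to the finite family $\mathcal{A}^*_{K,R}$. Finally, Yao's principle (Lemma \ref{lem:yao}) applies with the distribution of Lemma \ref{lem:comp_2}.

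\textbf{Step 1 (reduction).} A randomized online algorithm that restarts jobs is a random variable over deterministic online algorithms $a_r$ that may restart. For each such $a_r$:
\begin{itemize}
\item Lemma \ref{lem:best_alg_restart} gives a non-restarting $a$ with $a(x)\le a_r(x)$ for every $x\in\mathcal{X}^*_{K,R}$.
\item Lemma \ref{lem:best_alg} then gives $a^*\in\mathcal{A}^*_{K,R}$ with $a^*(x)\le a(x)$ on $\mathcal{X}^*_{K,R}$.
\end{itemize}
Set $l(a_r):=a^*$. By Lemma \ref{lem:finite}, $\mathcal{A}^*_{K,R}$ is finite, and $\mathcal{X}^*_{K,R}$ is finite by construction. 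So the hypotheses of Lemma \ref{lem:yao} hold, with the class $\mathcal{A}$ enlarged to include restarting algorithms. The only point to check is that Yao's principle does not depend on the algorithms being non-restarting. It does not: it only uses that each algorithm in the class is dominated pointwise on $\mathcal{X}^*$ by a member of the finite set, and that the randomized algorithm is a distribution over the class.

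\textbf{Step 2 (numbers).} I would take $\Delta_r$ distributed as in Lemma \ref{lem:comp_2}. Then
\[
\frac{\min_{a^*}\e[a^*(X_{K,R})]}{\e[\opt(X_{K,R})]}\ \ge\ \frac{2+R\bigl(1+\frac{\tilde K-1}{K}\bigr)}{2+R\bigl(1+\frac{\tilde K(\tilde K-1)}{2K^2}\bigr)}.
\]
Let $R\to\infty$ and write $\tilde K\approx xK$ with $K\to\infty$. The ratio tends to $\frac{1+x}{1+x^2/2}$. Maximizing over $x\in(0,1]$ gives the condition $x^2+2x-2=0$, so $x=\sqrt3-1$, and the value is $\frac{\sqrt3}{1+(4-2\sqrt3)/2}=\frac{\sqrt3}{3-\sqrt3}=\frac{1+\sqrt3}{2}$. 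This is the same computation that proves Theorem \ref{thm:low_best}, so I can just cite it.

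\textbf{Main obstacle.} The only nontrivial step is Lemma \ref{lem:best_alg_restart}, which is assumed here. What remains is to state carefully that Lemma \ref{lem:yao} goes through for the enlarged algorithm class. I would verify this by rerunning its proof: for any distribution over algorithms $a_r$,
\[
\e_{A}\,\e_X[A(X)]\ \ge\ \e_A\,\e_X[l(A)(X)]\ \ge\ \min_{a^*\in\mathcal{A}^*}\e_X[a^*(X)],
\]
and hence some $x\in\mathcal{X}^*$ attains the ratio against $\opt(x)$.
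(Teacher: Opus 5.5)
Your proposal is correct and follows the paper's own argument. The paper likewise composes Lemma \ref{lem:best_alg_restart} with Lemma \ref{lem:best_alg} to dominate every restarting algorithm pointwise on $\mathcal{X}^*_{K,R}$ by a member of the finite family $\mathcal{A}^*_{K,R}$, and then reuses the Yao-principle computation of Theorem \ref{thm:low_best} with the distribution of Lemma \ref{lem:comp_2}. You also check explicitly that Lemma \ref{lem:yao} still holds when the algorithm class is enlarged to include restarting algorithms; the paper leaves this implicit, and your check is right.
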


\subsection{Adversary against deterministic online algorithm that restarts jobs} \label{sec:lb_restart}

Lemma \ref{lem:best_alg_restart} says the lower bound in Section \ref{sec:lb1} and \ref{sec:lb2} should give some insight into how online algorithms that restart jobs may fail to achieve the offline optimal cost. 
Indeed, we can simply adapt the same lower bound to the deterministic online setting. 

Let $\phi$ be a real number between $0$ and $1$ to be adjusted later. 
Take any deterministic online algorithm that restarts jobs, denoted by $a_r$. 
By Lemma \ref{lem:best_alg_restart} and \ref{lem:best_alg}, there exists $a^*$ parametrized by $\Gamma_0, \Gamma_1,\ldots,\Gamma_R$ as described in Lemma \ref{lem:best_alg} such that $a^*(x) \leq a_r(x)$ for any $x\in \mathcal{X}_{K,R}$. 
Note that in the deterministic online setting, $\Delta_{r+1}$ is a function of $\Gamma_0, \Gamma_1,\ldots,\Gamma_{r}$ for each $r = 0,1,\ldots,R-1$. 
Consider the adversary such that $\Delta_{r+1} = \begin{cases}
    2 \quad \text{if~} \Gamma_r > \phi \\
    s_{j^{T_r}_{\Gamma_r\cdot K + 1}} - T_r = \Gamma_r\cdot K + 2 \quad \text{if~} \Gamma_r \leq \phi
\end{cases}$ for each $r = 0,1,\ldots,R-1$. 

\begin{lemma} \label{lem:restart_lb}
    The competitiveness of $a^*$ is at least $c\cdot \frac{R}{R + 2} -  \frac{1}{K}$, where $c = \min\left\{1 + \phi, \frac{2+\phi}{1+\phi}\right\}$ and $K,R$ are large integers. 
\end{lemma}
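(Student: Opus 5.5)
The plan is a round-by-round charging argument. For each round $r=0,1,\ldots,R-1$ we compare the cost of $a^*$ in the window $[T_r,T_{r+1})$ with the cost of an explicit offline schedule $b$ in the same window. The extra terms coming from round $R$ and the initial rigid job are absorbed into the factor $\frac{R}{R+2}$.

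First I fix the structure of $a^*$ given by Lemma \ref{lem:best_alg}. In round $r$ it runs the rigid job $j^{T_r}_s$ on $[T_r,T_r+1]$. It starts the early jobs $j^{T_r}_1,\ldots,j^{T_r}_{\Gamma_rK-1}$ together at time $T_r+\Gamma_r-\frac1K$, so it is busy on $[T_r,T_r+1+\Gamma_r-\frac1K]$. The remaining jobs $j^{T_r}_k$ with $k\ge \Gamma_rK$ are all started at $\min\{T_{r+1},s_{j^{T_r}_{\Gamma_rK}}\}$. Since $\Delta_{r+1}\ge 2\ge 1+\Gamma_r$, the busy interval of the early batch ends before $T_{r+1}$, so costs attributed to different rounds do not overlap.

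The main step is a case split on $\Gamma_r$.
\begin{itemize}
\item[(i)] $\Gamma_r>\phi$. The adversary sets $\Delta_{r+1}=2$, so the late jobs start at $T_{r+1}$ and ride along with the next rigid job for free. Hence $a^*$ pays at least $1+\Gamma_r-\frac1K$ in round $r$. The offline schedule $b$ starts every $j^{T_r}_k$ at $T_{r+1}=T_r+2\le s_{j^{T_r}_k}$ and pays only $1$, for the rigid job. The round ratio is therefore at least $1+\phi-\frac1K$.
\item[(ii)] $\Gamma_r\le\phi$. Now $T_{r+1}=T_r+\Gamma_rK+2$. The job $j^{T_r}_{\Gamma_rK}$ has $s=T_r+\Gamma_rK+1<T_{r+1}$, so $a^*$ must start it at that time. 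This is a separate unit of busy time on $[T_{r+1}-1,T_{r+1}]$, disjoint from the early batch because $\Gamma_rK+1\ge 1+\Gamma_r$. Jobs with $k>\Gamma_rK$ have $s\ge T_{r+1}$ and are started at $T_{r+1}$ at no extra cost. So $a^*$ pays at least $2+\Gamma_r-\frac1K$. The schedule $b$ starts $j^{T_r}_1,\ldots,j^{T_r}_{\Gamma_rK}$ at $T_r+\Gamma_r$ and the rest at $T_{r+1}$, paying $1+\Gamma_r$. The map $x\mapsto\frac{2+x}{1+x}$ is decreasing, so the round ratio is at least $\frac{2+\phi}{1+\phi}$ minus an $O(\frac1K)$ term.
\end{itemize}
In both cases $b$'s round cost lies in $[1,2]$. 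Hence $a^*_r\ge c\cdot b_r-\frac1K\cdot b_r$, where $c=\min\{1+\phi,\frac{2+\phi}{1+\phi}\}$.

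Summing, $a^*(x)\ge c\sum_{r<R} b_r-\frac{1}{K}\sum_r b_r$. Also $\opt(x)\le\sum_{r<R}b_r+2$, since the final round with $\Gamma_R=\frac1K$ costs at most about $2$. Using $\sum_{r<R}b_r\ge R$, the ratio is at least $c\cdot\frac{R}{R+2}-\frac1K$.

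The main obstacle is the bookkeeping. One must check that the busy intervals charged to $a^*$ in different rounds are genuinely disjoint, in particular the unit at $[T_{r+1}-1,T_{r+1}]$ versus the next round's rigid job. One must also check that $b$ is feasible and never pays for a job twice across round boundaries. Finally, the $\frac1K$ slack and the two boundary rounds have to be handled so that they give exactly the stated form. I would pin these down by writing the intervals explicitly, as in the partition arguments of Section \ref{sec:lb_2}.
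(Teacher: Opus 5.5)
Your proposal is correct and follows the paper's proof essentially step for step. It uses the same split of rounds into $\Gamma_r>\phi$ versus $\Gamma_r\le\phi$, the same offline schedule $b$ with round costs $1$ and $1+\Gamma_r$ plus $2$ for round $R$, and the same final estimate via $\opt(x)\le\sum_{r<R}b_r+2$ and $\sum_{r<R}b_r\ge R$. You compare round by round before summing while the paper compares the aggregated sums, but this is the same inequality. There is one harmless slip in case (ii): by Lemma \ref{lem:best_alg}, the jobs with $k>\Gamma_rK$ are started together with $j^{T_r}_{\Gamma_rK}$ at $T_{r+1}-1$, not at $T_{r+1}$. Since you only need the lower bound $2+\Gamma_r-\frac1K$ on $a^*$'s round cost, nothing changes.
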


\begin{theorem} \label{thm:restart_lb}
    The competitiveness of any deterministic online algorithm that restarts jobs is at least $\frac{\sqrt{5}+1}{2}$. 
\end{theorem}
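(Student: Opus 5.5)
The plan is to derive Theorem \ref{thm:restart_lb} directly from Lemma \ref{lem:restart_lb}, choosing the free parameter $\phi\in(0,1)$ to make the guaranteed ratio as large as possible.

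First take an arbitrary deterministic online algorithm $a_r$ that may restart jobs. By Lemma \ref{lem:best_alg_restart} and Lemma \ref{lem:best_alg}, there is an algorithm $a^*$, parametrized by $\Gamma_0,\ldots,\Gamma_R$, that never restarts and satisfies $a^*(x)\le a_r(x)$ on every instance in $\mathcal{X}^*_{K,R}$. The adaptive choice of $\Delta_{r+1}$ described before Lemma \ref{lem:restart_lb} depends only on $\Gamma_0,\ldots,\Gamma_r$. Hence it yields a concrete (deterministically constructed) instance $x\in\mathcal{X}^*_{K,R}$ on which $a_r(x)\ge a^*(x)$. Lemma \ref{lem:restart_lb} then gives $a_r(x)/\opt(x)\ge c\cdot\frac{R}{R+2}-\frac1K$ with $c=\min\{1+\phi,\frac{2+\phi}{1+\phi}\}$.

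Next, optimize over $\phi$. The function $1+\phi$ is increasing in $\phi$, while $\frac{2+\phi}{1+\phi}=1+\frac{1}{1+\phi}$ is decreasing. The minimum of the two is therefore maximized where they are equal: $(1+\phi)^2=2+\phi$, i.e.\ $\phi^2+\phi-1=0$. This gives $\phi=\frac{\sqrt5-1}{2}\in(0,1)$, which is admissible, and at this point $c=1+\phi=\frac{\sqrt5+1}{2}$.

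Finally, let $K,R\to\infty$. The error terms $\frac{R}{R+2}\to1$ and $\frac1K\to0$ vanish, so no constant below $\frac{\sqrt5+1}{2}$ can bound the competitive ratio of $a_r$.

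There is no serious obstacle here, since the substantive work sits in Lemma \ref{lem:restart_lb}. The only care needed is in how the adversary interacts with $a^*$. Because $\Gamma_r$ is determined by the $\Delta$'s revealed so far, the adaptive sequence of $\Delta$'s is well defined, and the resulting instance lies in $\mathcal{X}^*_{K,R}$. This requires $\Delta_{r+1}=\Gamma_rK+2\in\{2,\ldots,K+1\}$ when $\Gamma_r\le\phi<1$, and that holds since $\Gamma_rK\le K-1$. With this membership, the domination $a^*\le a_r$ from the earlier lemmas applies on that instance.
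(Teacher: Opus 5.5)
Your proposal is correct and follows the paper's proof. You chain Lemmas \ref{lem:best_alg_restart}, \ref{lem:best_alg}, and \ref{lem:restart_lb}, choose $\phi=\frac{\sqrt5-1}{2}$ so that $1+\phi=\frac{2+\phi}{1+\phi}$, and let $K,R\to\infty$. Your explicit check that the adaptive choice $\Delta_{r+1}=\Gamma_rK+2$ stays in $\{2,\ldots,K+1\}$, so the constructed instance lies in $\mathcal{X}^*_{K,R}$, is a small detail the paper leaves implicit.
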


\subsection[Best Batch]{Best Batch $\bb^\gamma$}
In this subsection, we present a deterministic online algorithm that restarts jobs, whose competitiveness is $\frac{\sqrt{5}+1}{2}\approx 1.618$ matching the lower bound in Theorem \ref{thm:restart_lb}. 

Without loss of generality, we use the following assumption for simplicity, because an online/offline scheduling algorithm is interesting only if it makes the same schedule for jobs that have the same window. 
\begin{assumption} \label{ass:windows}
    For each pair of distinct jobs $j_1,j_2$ in the input job instance $x\in \mathcal{X}$, we have $[r_{j_1}, d_{j_1}]\neq [r_{j_2},d_{j_2}]$. 
\end{assumption}

It is not hard to see how restarting jobs helps decrease the cost. Figure \ref{fig:restart_demo} is a simple example of doing it. 
To exploit it, consider the following typical situation. 
Upon some flag job $f$ hits its starting deadline, job $f$ is processed during $[s_f, d_f]$. 
All the jobs that are available to be started are started at time $s_f$ for free. 
It remains to decide the scheduling of some job released after time $s_f$. 
Typically, the set of jobs under consideration is $\mathcal{J}^f:=\{j: r_j\in (s_f, d_f]\}$. 
Observe that if jobs $j_1,j_2\in \mathcal{J}^f$ satisfy $[r_{j_1}, d_{j_1}]\supset [r_{j_2},d_{j_2}]$, then an offline scheduling can always process job $j_1$ during the same time as $j_2$, which makes $j_1$ redundant. 
In online settings, the redundancy of each job $j\in \mathcal{J}^f$ cannot be determined at its release time $r_j$, because at time $r_{j_1}$, the online scheduler should not know whether such job $j_2$ will be released (typically $r_{j_1} < r_{j_2}$) in the future and hence should not know whether job $j_1$ is redundant or not at time $r_{j_1}$. 
When the scheduler can restart jobs, a good strategy is to start all the jobs in $\mathcal{J}^f$ at their release times respectively, and then at time $d_f$ the scheduler decides whether to stop job $j$ at time $d_f$ for each $j\in \mathcal{J}^f$. 
Precisely, we have the following typical options. 
Let $\tilde{\mathcal{J}}^f:=\left\{j\in \mathcal{J}^f: \neg([r_{j'},d_{j'}]\subset [r_j,d_j]), \forall j'\in \mathcal{J}^f\setminus \{j\}\right\}$ denote the set of {\it critical} jobs excluding the redundant jobs. 
Then, $\tilde{\mathcal{J}}^f$ can be rewritten as $\{j^f_1,j^f_2,\ldots,j^f_K\}$ such that $r_{j^f_1} < r_{j^f_2} < \cdots < r_{j^f_K}$ and $d_{j^f_1} < d_{j^f_2} < \cdots < d_{j^f_K}$, where such a set of jobs is said to be {\it agreeable}. 
Each $k = 1,2,\ldots,K+1$ corresponds to one option for algorithm to stop jobs $\{j\in \mathcal{J}^f: r_j > r_{j^f_{k-1}}\}$ at time $d_f$, where $r_{j^f_{0}} := s_f$, and then the jobs that are stopped are delayed. 

Now, we describe more about the core ideas behind the algorithm design. 
Assume that all jobs are very flexible (with long windows), because in the case of scheduling jobs with short windows, any feasible scheduling may not be too far from the optimal, and a good scheduling should be easier to obtain intuitively. 
The first observation is that a good algorithm has to stop jobs $\{j\in \mathcal{J}^f: r_j > s_f + 0.618\}$. 
Otherwise, one may have the ratio of the local algorithm cost to the local optimal cost strictly greater than the golden ratio in the following situation. 
There may be a job $f'$ released after $d_f+1$ but with starting deadline before the starting deadline of any job in $\mathcal{J}^f$. 
In this case, all the jobs $\mathcal{J}^f$ becomes redundant because $[r_{f'},d_{f'}]\subset [r_j,d_j]$ for each $j \in \mathcal{J}^f$. 
Consequently, $f'$ becomes the next flag job and the local optimal cost around flag job $f$ is $1$ that is due to flag job $f$ alone. 
Therefore, the local competitiveness is equal to the local algorithm's cost around flag job $f$ which is determined by which jobs are to be stopped/delayed. 
To maintain a competitiveness of $1.618$, one does not want to schedule any job in $\{j\in \mathcal{J}^f: r_j > s_f + 0.618\}$ together with $f$.  
It follows that the ``best" $k$ has to be chosen among $k = 1,2,\ldots,K^f$ where $K^f = \min\{k: r_{j^f_k} > s_f + 0.618\}$. 
Note that intuitively, the more general case is when $r_{j^f_K} > s_f + 0.618$ where $j^f_K$ is the last critical job among $\mathcal{J}^f$, so one may assume the existence of $K^f$ for the current discussion.

Next, we motivate how to choose the ``best" $k$. 
The consequence of stopping jobs $\{j\in \mathcal{J}^f: r_j > r_{j^f_{k-1}}\}$ is that jobs $f,j^f_1,\ldots,j^f_{k-1}$ are scheduled within one block during $[s_d, 1 + r_{j^f_{k-1}}]$ and jobs $j^f_k, \ldots, j^f_{K}$ are delayed. 
A straightforward question is: why not fix the ``best" $k$ to be $K^f$? 
Consider one extreme case when $s_{j^f_1}\approx s_{j^f_2}\approx \cdots \approx s_{j^f_{K^f}}$. 
Intuitively, the optimal scheduling should process $j^f_1,j^f_2,\ldots,j^f_{K^f}$ in the same block, because $j^f_1,\ldots,j^f_{{K^f}-1}$ can overlap almost fully with $j^f_{K^f}$ no matter where $j^f_{K^f}$ is scheduled.
By the previous observation, since $j^f_{K^f}$ has to be delayed ($r_{j^f_{K^f}} > s_f + 0.618$), all the jobs $j^f_1,j^f_2,\ldots,j^f_{K^f}$ should be delayed. 
In this case, the ``best" $k$ is $1$ and all the jobs are stopped at time $d_f$ and delayed. 
Intuitively, there are cases when the ``best" $k$ is not as low as $1$.
Consider another extreme case when $s_{j^f_1} \ll s_{j^f_2} \ll \cdots \ll s_{j^f_{K^f}}$. 
Note that $j^f_{K^f}$ is definitely delayed. 
We consider the case $j^f_{K^f}$ is processed during its rightmost position $[s_{j^f_{K^f}},d_{j^f_{K^f}}]$. 
Observe that any job to be processed strictly right to $[s_{j^f_{K^f}},d_{j^f_{K^f}}]$ (e.g. the flag job next to $j^f_{K^f}$) must be released after $s_{j^f_{K^f}}$.
We can show that in this case $j^f_1, \ldots, j^f_{K^f-1}$ or even $j^f_{K^f}$ make a positive contribution to the local optimal cost around $f$. 
Precisely, the local optimal cost around $f$ should be at least $r_{j^f_{K^f-1}} + 1 - s_f$ significantly more than $1$ because of the additional jobs $j^f_1, \ldots, j^f_{K^f-1}$. 
In this case, the ``best" $k$ is $K^f$ and only jobs released after $r_{j^f_{K^f-1}}$ are stopped at time $d_f$ and delayed. 

However, it still remains to think about how to prove the above idea works in the view of the worst case analysis. 
After staring at algorithm $a^*$ described in Lemma \ref{lem:best_alg}, we make another important observation that helps to obtain the eventual algorithm design. 
Note that for each round $r$, job $\hat{j}^r$ is rigid and becomes a flag job. 
After $a^*$ decides the value of $\Gamma_r$, the processing of $\hat{j}^r$ gets a $\Gamma_r$-extension, that is one block, and the remaining jobs released after time $s_{\hat{j}^r} + \Gamma_r$ are started at the same time, that is the next block. 
In the adversary described in section \ref{sec:lb_restart}, the first block may have any length between $1$ and $2$ depending on the choices of the scheduler (where $1.618$ is a threshold), meanwhile, the second block either does not exist (when $\Gamma_r> 0.618$) or has length of $1$ (when $\Gamma_r\leq 0.618$). 
An important observation is that the two blocks described above are in distinct natures. 
It follows that our algorithm design should deal with the block accommodating $f,j^f_1,\ldots,j^f_{k-1}$ and the block accommodating $j^f_k, \ldots, j^f_K$ differently. 
In particular, after one determines the ``best" value for $k$, when job $j^f_k$ hits its starting deadline and hence is started at time $s_{j^f_k}$, one needs to determine how to schedule the set of jobs released during $(s_{j^f_k}, d_{j^f_k}]$ in some manner different from the scheduling of $f$ and $\mathcal{J}^f$. 

Let $\gamma\in (0,1)$ be a parameter to be adjusted later. Consider the online scheduler that restarts jobs called Best Batch ($\bb^\gamma$) as described in Algorithm \ref{alg:restart}.

\begin{algorithm} 
\caption{Best Batch: $\bb^\gamma$}
\label{alg:restart}
\algrenewcommand\algorithmicprocedure{\textbf{Upon}}
\begin{algorithmic}[1]
\State $\hat{f}\gets NULL$ and $s_{\hat{f}} \gets \infty$; \Comment{initialize the value of variable $\hat{f}$}
\State $p\gets NULL$; \Comment{initialize the value of variable $p$}
\Procedure{a job $j$ is released}{} \label{alg_line:release_begin}
    \If{$r_j \in (s_p, d_p]$}
        \State Start $j$ now (due to $p$); \label{alg_line:start_primary}
    \ElsIf{$r_j \in (d_p, r_{j^p_{K^p}} + 1]$ and $p$ is tight}
        \State Start $j$ now (due to $p$); \label{alg_line:start_tight}
    \ElsIf{$r_j \in (s_{j^p_{k^p}}, s_{j^p_{k^p}} + r_{j^p_{K^p}} - r_{j^p_{k^p-1}}]$ and $j^p_{k^p}$ is a secondary flag job} \Comment{where $r_{j^p_{0}} = s_p$}
        \State Start $j$ now (due to $j^p_{k^p}$); \label{alg_line:start_secondary}
    \ElsIf{$s_j \leq s_{\hat{f}}$}
        \State ${\hat{f}}\gets j$; \label{alg_line:f_gets_1}
    \EndIf
\EndProcedure \label{alg_line:release_end}
\Procedure{time $s_{\hat{f}}$}{} \label{alg_line:flag_begin}
    \State ${\hat{f}}$ is marked as a flag job;  \label{alg_line:flag_mark}
    \If{${\hat{f}}$ is not marked as secondary}
        \State ${\hat{f}}$ is marked as \textit{primary}; \label{alg_line:primary_mark}
        \State $p\gets {\hat{f}}$; 
    \EndIf
    \State Start all the available jobs now (due to ${\hat{f}}$); \label{alg_line:flag_start} 
    \Comment{including any job released at time $s_{\hat{f}}$}
    \State $\hat{f}\gets NULL$ and $s_{\hat{f}} \gets \infty$; \label{alg_line:f_gets_3}
\EndProcedure \label{alg_line:flag_end}
\Procedure{time $d_p$}{}  \label{alg_line:primary_begin}
    \State $\mathcal{J}^p\gets$ the set of all the jobs released during $(s_p, d_p]$; \label{alg_line:A_set_begin}
    \State $\{j^p_1,\ldots,j^p_{K}\}\gets$ the set of \textit{critical} jobs of $\mathcal{J}^p$ such that $r_{j^p_1} < \cdots < r_{j^p_K}$; \label{alg_line:critical_jobs} 
    \Comment{Precisely, $\{j^p_1,\ldots,j^p_{K}\} = \left\{j\in \mathcal{J}^p: \neg([r_{j'},d_{j'}]\subset [r_j,d_j]), \forall j'\in \mathcal{J}^p\setminus \{j\}\right\}$ }
    \If{$r_{j^p_K} > s_p + \gamma$} \label{alg_line:critical_case}
        \State $K^p\gets \min \{k=1,\ldots,K: r_{j^p_k} > s_p + \gamma\}$; \label{alg_line:A_set_end}
        \For {$k = 1,2,\ldots,K^p$}
            \State $C^p_{k}\gets \begin{cases}
                (1 + r_{j^p_{k-1}} - s_p) + (d_{j^p_{K^p}} - s_{j^p_k}) \,\text{~if~} 1 + r_{j^p_{k-1}} < s_{j^p_k}; \\
                d_{j^p_{K^p}} - s_{p} \quad \text{~if~} 1 + r_{j^p_{k-1}} \geq s_{j^p_k}; 
            \end{cases}$ \label{alg_line:key} 
        \EndFor
        \If{$\min_{k = 1,2,\ldots,K^p} C^p_{k} = d_{j^p_{K^p}} - s_p$} \label{alg_line:tight}
            \State Mark $p$ as \textit{tight}; 
        \Else \label{alg_line:separated}
            \State $k^p\gets \argmin_{k = 1,2,\ldots,K^p} C^p_{k}$ (break the ties arbitrarily); 
            \State $j^p_{k^p}$ is marked as \textit{secondary} attached to $p$; 
            \State Stop jobs $\{j\in \mathcal{J}^p: r_j > r_{j^p_{k^p-1}}\}$ (due to $p$); \Comment{where $r_{j^p_{0}} = s_p$}\label{alg_line:stop}
            \If{$s_{j^p_{k^p}} \leq s_{\hat{f}}$} 
                \State ${\hat{f}}\gets {j^p_{k^p}}$; \label{alg_line:f_gets_2}
            \EndIf
        \EndIf
    \EndIf
\EndProcedure \label{alg_line:primary_end}
\end{algorithmic}
\end{algorithm}

Now, we do its analysis. 
It seems the algorithm is so delicate that proofs have to be done by induction and contradiction. 

\begin{lemma} \label{lem:hat_f}
    The value of variable $\hat{f}$ at any time instant $t$ is the job that has the earliest starting deadline among all the jobs that are available to be started at time $t$. 
\end{lemma}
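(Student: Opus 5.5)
We prove the statement by induction over the sequence of events processed by $\bb^\gamma$, ordered by time. At a fixed instant, the events are handled in the order written in Algorithm \ref{alg:restart}. The invariant $(\star)$ is: after each event has been handled, either $\hat{f}=NULL$ (with $s_{\hat{f}}=\infty$) and no job is available, or $\hat{f}$ is an available job whose starting deadline is the smallest among all available jobs. By an \emph{available} job we mean one that has been released, is not currently being processed, has not been completed, and still has to be started (this includes stopped jobs). Initially no job is released and $\hat{f}=NULL$, so $(\star)$ holds. Between events the set of available jobs does not change and $\hat{f}$ is not reassigned. It therefore suffices to check that each event preserves $(\star)$.

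The events are of four kinds.

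\emph{A release of job $j$} (lines \ref{alg_line:release_begin}--\ref{alg_line:release_end}). If $j$ is started at once (lines \ref{alg_line:start_primary}, \ref{alg_line:start_tight}, \ref{alg_line:start_secondary}), the available set is unchanged, and so is $\hat{f}$. Otherwise $j$ joins the available set, and line \ref{alg_line:f_gets_1} replaces $\hat{f}$ by $j$ exactly when $s_j\le s_{\hat{f}}$. This is the usual update rule for a running minimum, so $(\star)$ is preserved.

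\emph{Time $s_{\hat{f}}$} (lines \ref{alg_line:flag_begin}--\ref{alg_line:flag_end}). Line \ref{alg_line:flag_start} starts every available job, so the available set becomes empty. Line \ref{alg_line:f_gets_3} then resets $\hat{f}$ to $NULL$, which is consistent with $(\star)$. One must also see that no available job has its starting deadline strictly earlier than the $s_{\hat{f}}$ being handled; this follows from $(\star)$ at the previous instant. Consequently the algorithm never misses a starting deadline, and ``time $s_{\hat{f}}$'' is exactly the first moment at which some available job must start.

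\emph{Time $d_p$} (lines \ref{alg_line:primary_begin}--\ref{alg_line:primary_end}). If $p$ is marked tight, or the test in line \ref{alg_line:critical_case} fails, nothing is stopped and $\hat{f}$ is unchanged. In the separated case, line \ref{alg_line:stop} stops the jobs $\{j\in\mathcal{J}^p: r_j>r_{j^p_{k^p-1}}\}$, and these become available. We must show that $j^p_{k^p}$ has the smallest starting deadline among the newly available jobs; line \ref{alg_line:f_gets_2} then performs the running-minimum update correctly.

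This is the main obstacle: showing that $s_{j^p_{k^p}}\le s_j$ for every stopped $j$. I would argue as follows.
\begin{itemize}
\item First take $j$ critical. The critical jobs are agreeable, meaning their releases and deadlines increase together. Every critical $j$ with $r_j>r_{j^p_{k^p-1}}$ is then some $j^p_k$ with $k\ge k^p$, and so $d_j\ge d_{j^p_{k^p}}$.
\item Now take $j$ non-critical. By definition there is a critical $j'\in\mathcal{J}^p$ with $[r_{j'},d_{j'}]\subset[r_j,d_j]$. Choose $j'$ minimal under inclusion; such a $j'$ is critical, and this choice relies on Assumption \ref{ass:windows}. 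Then $r_{j'}\ge r_j>r_{j^p_{k^p-1}}$, so $j'$ is some $j^p_k$ with $k\ge k^p$. Hence $d_j\ge d_{j'}\ge d_{j^p_{k^p}}$.
\end{itemize}
Since all jobs have unit length, an ordering of deadlines is the same as an ordering of starting deadlines, so $s_j\ge s_{j^p_{k^p}}$ in both cases.

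Two further points need checking. First, the jobs started ``due to'' $p$ at their release times (line \ref{alg_line:start_primary}) were never in the available set, so they had no effect on $\hat{f}$ before being stopped. Second, jobs released at exactly the same instant $t$ as a flag time or as $d_p$ must be handled consistently with the event order. For this I would use the comment at line \ref{alg_line:flag_start} together with the half-open intervals in the release conditions.

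Finally, evaluating $(\star)$ at an arbitrary instant $t$, after all events at $t$ have been handled, gives the lemma.
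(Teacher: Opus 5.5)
Your proof is correct and follows essentially the paper's route. Both argue by induction over the three kinds of events that change the set of available jobs: a release, time $s_{\hat{f}}$, and time $d_p$. The key step in both is that $j^p_{k^p}$ has the earliest starting deadline among the stopped jobs, because each stopped job either is some $j^p_k$ with $k\ge k^p$ or has a window containing that of such a job. Your appeal to Assumption~\ref{ass:windows}, to obtain a critical job with an inclusion-minimal sub-window, makes precise a step the paper only asserts.
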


\begin{lemma} \label{lem:feasible}
    Algorithm $\bb^\gamma$ is a well defined online scheduler in the following sense: 
    \begin{itemize}
        \item [(i)] if job $j$ becomes available to be started at time $t$, job $j$ must be (re)-started during $[t,s_j]$; 
        \item [(ii)] if job $j$ is stopped at time $t$, then $t < s_j$; 
    \end{itemize}
\end{lemma}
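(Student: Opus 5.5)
The plan is to prove both parts by tracing every way a job can be started, delayed, or stopped in $\bb^\gamma$, and showing each one respects the job's window. The main tool is Lemma \ref{lem:hat_f}, which says $\hat{f}$ is always the available job with the earliest starting deadline. I would set up an induction over the event times of the algorithm, taken in chronological order.

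The induction hypothesis at time $t$ is that every job that has become available by $t$ either has already been (re-)started in its window, or is still available with $s_j \ge t$.

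For part (i), the key observation is that the procedure at time $s_{\hat{f}}$ (line \ref{alg_line:flag_start}) starts \emph{all} available jobs. A job $j$ that becomes available at $t$ is either started immediately (lines \ref{alg_line:start_primary}, \ref{alg_line:start_tight}, \ref{alg_line:start_secondary}, all at $r_j=t\le s_j$), or it stays available and waiting. By Lemma \ref{lem:hat_f}, while $j$ waits we have $s_{\hat{f}}\le s_j$ at every moment. The flag procedure therefore fires at some time no later than $s_j$ and starts $j$ then, unless $j$ was started earlier. There are two further points to check:
\begin{itemize}
\item the flag procedure is never skipped, because $\hat{f}$ is reset to NULL only in line \ref{alg_line:f_gets_3} right after all available jobs are started;
\item a job stopped at $d_p$ (line \ref{alg_line:stop}) becomes available again at time $d_p$. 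It re-enters $\hat{f}$ either through line \ref{alg_line:f_gets_2} (for $j^p_{k^p}$) or through the invariant maintained by Lemma \ref{lem:hat_f} (for the other stopped jobs), so the same argument applies.
\end{itemize}
Combined with part (ii), which gives $d_p<s_j$, the restart lands in $[d_p,s_j]$.

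For part (ii), the only stopping step is line \ref{alg_line:stop}, executed at time $t=d_p=s_p+1$ on jobs $j\in\mathcal{J}^p$ with $r_j>r_{j^p_{k^p-1}}\ge s_p$. I would argue by contradiction: suppose such a $j$ has $s_j\le d_p$. Since $r_j>s_p$, its window $[r_j,d_j]$ then lies inside $(s_p,d_p+1]$. The crux is to show that this forces the min-cost test at line \ref{alg_line:tight} to mark $p$ as tight, so that the stopping branch is never reached.

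To do this, I would look at the critical jobs. Every job of $\mathcal{J}^p$ contains some critical job's window. Take a critical job $j^p_m$ with $[r_{j^p_m},d_{j^p_m}]\subseteq[r_j,d_j]$; then $r_{j^p_m}\ge r_j>r_{j^p_{k^p-1}}$, so $m\ge k^p$. Agreeability gives $s_{j^p_{k^p}}\le s_{j^p_m}\le s_j\le d_p$, and also $1+r_{j^p_{k^p-1}}\ge 1+s_p=d_p$. Hence $1+r_{j^p_{k^p-1}}\ge s_{j^p_{k^p}}$, and the second case of line \ref{alg_line:key} gives $C^p_{k^p}=d_{j^p_{K^p}}-s_p$. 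Next I would verify that every $C^p_k$ is at least $d_{j^p_{K^p}}-s_p$. In the first case, $C^p_k$ is the cost of two disjoint blocks covering the span; the claim there is that the two blocks together cover the whole span, or else the first case would simply not apply. If so, the minimum equals $d_{j^p_{K^p}}-s_p$ and $p$ is marked tight, which is the desired contradiction.

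I expect this last verification, that the minimum of the $C^p_k$ is exactly $d_{j^p_{K^p}}-s_p$ and is not strictly smaller, to be the delicate step. The first-case formula can in fact be smaller than the span. If it is, the fallback is to use the tie-breaking direction, together with the fact that $k^p$ is an argmin, to show that the chosen $k^p$ already satisfies $s_{j^p_{k^p}}>d_p$, and then that every stopped job has $s_j\ge s_{j^p_{k^p}}>d_p$ by agreeability.
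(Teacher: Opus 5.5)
Your part (i) is essentially the paper's argument. While $j$ waits, Lemma \ref{lem:hat_f} gives $s_{\hat f}\le s_j$, so the flag procedure fires no later than $s_j$ and starts every available job. You also correctly use (ii) to place restarts of stopped jobs inside their windows.

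The problem is part (ii). The step you flag as delicate, that every $C^p_k$ is at least $d_{j^p_{K^p}}-s_p$, is false, and the inequality always runs the other way. In the first case of the definition, $C^p_k=(d_{j^p_{K^p}}-s_p)-(s_{j^p_k}-1-r_{j^p_{k-1}})$. This is strictly below the span precisely because $1+r_{j^p_{k-1}}<s_{j^p_k}$, i.e.\ the two blocks leave a gap between them. In the second case $C^p_k$ equals the span. So $C^p_k\le d_{j^p_{K^p}}-s_p$ for every $k$, with strict inequality exactly when the first case applies. Your reasoning that ``the two blocks cover the whole span, or else the first case would not apply'' has this backwards.

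You do not need that claim at all. The stopping step is executed only in the non-tight branch, where $\min_k C^p_k<d_{j^p_{K^p}}-s_p$, and $k^p$ attains this minimum. Hence $C^p_{k^p}<d_{j^p_{K^p}}-s_p$. This already contradicts the equality $C^p_{k^p}=d_{j^p_{K^p}}-s_p$ that you derived from the hypothesis $s_j\le d_p$, so your contradiction setup closes immediately.

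Stated directly, which is how the paper argues: non-tightness forces $k^p$ into the first case. Then $s_{j^p_{k^p}}>1+r_{j^p_{k^p-1}}\ge 1+s_p=d_p$. Every stopped job $j$ has $r_j>r_{j^p_{k^p-1}}$, so it either is, or contains the window of, a critical job $j^p_m$ with $m\ge k^p$. By agreeability, $s_j\ge s_{j^p_m}\ge s_{j^p_{k^p}}>d_p$.

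Your ``fallback'' points in this direction, but the tie-breaking rule plays no role, since any minimizer works. The only ingredient needed is the dichotomy ``$C^p_k$ is below the span if and only if the first case holds.''
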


Observe that all the jobs marked as primary are also marked as flag jobs (line \ref{alg_line:flag_begin}-\ref{alg_line:flag_end}). 
Observe that all the flag jobs are in the time order of being marked as a flag job. 
Let $\mathcal{P}$ denote the set of all primary flag jobs and rewrite $\mathcal{P}$ as $p_1,p_2,\ldots,p_{|\mathcal{P}|}$ such that $p_x$ is the $x$-th job marked as a primary flag job. 
It follows that $s_{p_1} \leq s_{p_2} \leq \cdots \leq s_{p_{|\mathcal{P}|}}$ since each flag job is marked as a flag job at its starting deadline. 
We have the following set of observations about the scheduling made by $\bb^\gamma$. 

\begin{lemma} \label{lem:agreeable}
Take any $x = 1,2,\ldots,|\mathcal{P}|$. The following statements hold: 
\begin{enumerate}
    \item [(i)] if line \ref{alg_line:critical_case} does not hold for $p_x$, then $d_{p_x} < r_{p_{x+1}}$. 
    \item [(ii)] if line \ref{alg_line:critical_case} holds for $p_x$ and $p_x$ is tight (line \ref{alg_line:tight}), then $1 + r_{j^{p_x}_{K^{p_x}}} < r_{p_{x+1}}$. 
    \item [(iii)] if line \ref{alg_line:critical_case} holds for $p_x$, $p_x$ is not tight (line \ref{alg_line:separated}), and the secondary job $j^{p_x}_{k^{p_x}}$ attached to $p_x$ is not a flag job, then the flag job right next to $p_x$ is $p_{x+1}$, and $d_{p_x} < r_{p_{x+1}}$. 
    \item [(iv)] if line \ref{alg_line:critical_case} holds for $p_x$, $p_x$ is not tight (line \ref{alg_line:separated}), and the secondary job $j^{p_x}_{k^{p_x}}$ attached to $p_x$ is a flag job, then the flag job right next to $p_x$ is its secondary job $j^{p_x}_{k^{p_x}}$, $j^{p_x}_{k^{p_x}}$ is completed at its deadline $d_{j^{p_x}_{k^{p_x}}}$, and $s_{j^{p_x}_{k^{p_x}}} + (r_{j^{p_x}_{K^{p_x}}} - r_{j^{p_x}_{k^{p_x}-1}}) < r_{p_{x+1}}$. 
    \item [(v)] no job is stopped during $[r_{p_x}, d_{p_x})$. 
    \item [(vi)] any job released before or at time $s_{p_x}$ is completed before or at time $d_{p_x}$. 
    \item [(vii)] any job released at or after $r_{p_x}$, including $p_x, p_{x+1},\ldots,p_{|\mathcal{P}|}$ and their secondary jobs (if exists), cannot be started due to any of the jobs $p_1,p_2,\ldots,p_{x-1}$ or their secondary flag jobs (if exists). 
    \item [(viii)] primary job $p_x$ is never started until its starting deadline and then processed during $[s_{p_x},d_{p_x}]$. 
\end{enumerate} 
\end{lemma}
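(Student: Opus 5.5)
We prove all eight items together by strong induction on $x$, treating the run of $\bb^\gamma$ as a sequence of events (releases, times $s_{\hat f}$, times $d_p$). Lemmas \ref{lem:hat_f} and \ref{lem:feasible} let us assume the flag mechanism behaves as intended. The induction hypothesis for $x$ is that (v)--(viii) hold for $p_1,\ldots,p_{x-1}$, and that (i)--(iv) hold for $p_{x-1}$ (whichever applies). We then establish (vii), (viii), (v), (vi) for $p_x$, and afterwards (i)--(iv) for $p_x$.

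\textbf{Step 1: (vii) and (viii) for $p_x$.} By (i)--(iv) for $p_{x-1}$, every window in which a job could be started ``due to'' $p_{x-1}$ or its secondary flag job ends strictly before $r_{p_x}$. These windows are line \ref{alg_line:start_primary} ($(s_{p_{x-1}},d_{p_{x-1}}]$), line \ref{alg_line:start_tight} (up to $1+r_{j^{p_{x-1}}_{K}}$) and line \ref{alg_line:start_secondary} (up to $s_{j_{k}}+r_{j_K}-r_{j_{k-1}}$). By the induction hypothesis, the corresponding windows of $p_1,\ldots,p_{x-2}$ end even earlier. Hence nothing released at or after $r_{p_x}$ can be started due to an earlier flag, which gives (vii).

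Now consider $p_x$ itself. It is never started at release, since none of lines \ref{alg_line:start_primary}--\ref{alg_line:start_secondary} applies. It can also never be started at an earlier flag's time $s_{\hat f}$: if it were available then, Lemma \ref{lem:hat_f} would force it to be started at that time, contradicting the fact that it is later marked as a flag. Being primary rather than secondary, it is never stopped at line \ref{alg_line:stop} of a previous primary. So it first starts at $s_{p_x}$. Since only jobs of $\mathcal J^{p_x}$ are stopped at $d_{p_x}$, it runs through $[s_{p_x},d_{p_x}]$, which gives (viii).

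\textbf{Step 2: (v) and (vi) for $p_x$.} Stops occur only at line \ref{alg_line:stop}, i.e.\ at some time $d_p$. By Step 1 and the monotonicity $s_{p_1}\le s_{p_2}\le\cdots$, the only $d_p$ that could fall in $[r_{p_x},d_{p_x})$ is $d_{p_{x-1}}$. But $d_{p_{x-1}}<r_{p_x}$ by items (i)--(iv), each of whose thresholds is at least $d_{p_{x-1}}$; for (iv) this uses $s_{j_{k}}\ge d_{p_{x-1}}$. This gives (v).

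For (vi), take any job released by $s_{p_x}$. If it is still pending at $s_{p_x}$, it is started then by line \ref{alg_line:flag_start}, and by (v) it is not stopped before $d_{p_x}$. Otherwise it was started earlier, and by the induction hypothesis it is not stopped afterwards; by (viii) it must in fact have been completed by $d_{p_x}$.

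\textbf{Step 3: (i)--(iv) for $p_x$.} This is the step we expect to be the main obstacle, especially (ii) and (iv). The plan in each case is to identify the next flag $f'$ after $p_x$ and argue that $r_{f'}$ exceeds the stated threshold.

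In case (i), any job released in $(s_{p_x},d_{p_x}]$ is started by line \ref{alg_line:start_primary} and never stopped. The next flag therefore has release time after $d_{p_x}$.

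In case (ii), jobs released in $(d_{p_x},1+r_{j_{K^{p_x}}}]$ are started by line \ref{alg_line:start_tight}. Tightness means $C_k=d_{j_{K}}-s_p$ for every $k$, in particular $1+r_{j_{k-1}}\ge s_{j_k}$ for all $k\le K^p$. We will use this to argue that no job of $\mathcal J^{p_x}$ ever becomes pending after $d_{p_x}$. This part is delicate because line \ref{alg_line:stop} is not executed when $p$ is tight, so all of $\mathcal J^{p_x}$ simply continues running.

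For (iii) and (iv), the stopped jobs are exactly those with $r_j>r_{j_{k-1}}$. By Assumption \ref{ass:windows} and criticality, each of them contains the window of some critical $j_k,\ldots,j_K$ with $s\ge s_{j_k}$. So after $d_{p_x}$ the earliest pending deadline is $s_{j_k}$, and Lemma \ref{lem:hat_f} puts $\hat f=j_k$ (line \ref{alg_line:f_gets_2}). In case (iii), $j_k$ is not a flag, so it was started by an earlier line. The same argument as in (i) then forces the next flag to be a primary one released after $d_{p_x}$. In case (iv), jobs released in $(s_{j_k},s_{j_k}+r_{j_K}-r_{j_{k-1}}]$ are absorbed by line \ref{alg_line:start_secondary}. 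Moreover, the secondary flag runs until its deadline, since it is not primary and no $d_p$ stop applies to it. So the next primary is released later than that window.
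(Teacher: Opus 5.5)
Your overall architecture matches the paper's: a simultaneous induction on $x$ in which (v)--(viii) for $p_x$ come from (i)--(iv) for the earlier primaries, and (i)--(iv) for $p_x$ come from locating ``start-at-release'' windows. The gap is Step~3, which carries the real content but is only a plan; (ii) and (iv) are not actually proved, and the mechanism that makes them work is missing.

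\textbf{The missing mechanism.} The paper relies on an observation you never make. The variable $\hat f$ is loaded only at line~\ref{alg_line:f_gets_1} (a job at its own release time, and only if it was not started then) or at line~\ref{alg_line:f_gets_2} (a job that is marked secondary at that moment). It is reset at line~\ref{alg_line:f_gets_3}, right after line~\ref{alg_line:flag_start} has started every pending job. Since $\hat f=p_{x+1}$ at time $s_{p_{x+1}}$ and $p_{x+1}$ is primary, it follows that $p_{x+1}$ was loaded at its release and was not started then. It was also released after the preceding flag time, so in particular $r_{p_{x+1}}>s_{p_x}$; this is also what the ``including'' clause of (vii) needs. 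Hence $r_{p_{x+1}}$ lies outside every start-at-release window active at that moment:
\begin{itemize}
\item $(s_{p_x},d_{p_x}]$ (line~\ref{alg_line:start_primary});
\item $(d_{p_x},1+r_{j^{p_x}_{K^{p_x}}}]$ when $p_x$ is tight (line~\ref{alg_line:start_tight});
\item the secondary window of line~\ref{alg_line:start_secondary}.
\end{itemize}
This gives (i), (ii) and the second part of (iii) at once. Tightness ($1+r_{j_{k-1}}\ge s_{j_k}$) plays no role in (ii). When $p_x$ is tight, line~\ref{alg_line:stop} never runs, so nothing in $\mathcal J^{p_x}$ becomes pending and there is nothing delicate there. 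Your substitute, ``started at release and never stopped'', needs an argument you don't give. A job released in $(d_{p_x},1+r_{j^{p_x}_{K^{p_x}}}]$ is not in $\mathcal J^{p_x}$. Ruling out a stop at some later $d_{p'}$ requires knowing that no primary $p'$ has $s_{p'}<r_j$, which is essentially the claim you are proving.

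\textbf{Missing pieces in (iv).} Write $q=j^{p_x}_{k^{p_x}}$.
\begin{itemize}
\item Having $\hat f=q$ at time $d_{p_x}$ does not make $q$ the next flag. A later-released job with an earlier starting deadline can overwrite $\hat f$ via line~\ref{alg_line:f_gets_1}; that is exactly case (iii). You need the paper's contradiction argument. If $p_{x+1}$ were the flag right after $p_x$, then $q$ would stay pending from $d_{p_x}$ to $s_{p_{x+1}}$ and be started there by line~\ref{alg_line:flag_start}. Since $q$ lies in no later $\mathcal J^{p}$, it would never be pending again, so it could never be a flag.
\item You only exclude $r_{p_{x+1}}$ from $(s_q,\,s_q+r_{j^{p_x}_{K^{p_x}}}-r_{j^{p_x}_{k^{p_x}-1}}]$. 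You must also exclude $(d_{p_x},s_q]$. There, $p_{x+1}$ would be pending at $s_q$, get started by line~\ref{alg_line:flag_start}, and be erased from $\hat f$ by the reset.
\end{itemize}

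\textbf{Two faulty justifications in Step~1 (fixable).}
\begin{itemize}
\item ``Being primary rather than secondary, it is never stopped'' is not a valid reason. Line~\ref{alg_line:stop} stops every job of $\mathcal J^{p}$ released after $r_{j^{p}_{k^{p}-1}}$, whether or not it is secondary. The correct reason is timing: $r_{p_x}$ exceeds $s_f$ for every earlier flag $f$ and $d_{p_{x'}}$ for every $x'<x$. So $p_x$ was never pending at an earlier flag time and lies in no earlier $\mathcal J^{p}$.
\item Uninterrupted processing of $p_x$ on $[s_{p_x},d_{p_x}]$ follows from Lemma~\ref{lem:feasible}(ii): a job can only be stopped before its starting deadline. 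It does not follow from which jobs are stopped at $d_{p_x}$.
\end{itemize}
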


\begin{lemma} \label{lem:not_close}\quad
    \begin{itemize}
        \item [(i)] If $p$ is a primary tight job, then $1 + r_{j^{p}_{K^{p}}} > s_{j^p_{K^p}}$. 
        \item [(ii)] If $j^p_{k^p}$ is a secondary flag job attached to $p$, then $s_{j^p_{k^p}} + (r_{j^{p}_{K^p}} - r_{j^p_{k^p-1}}) > \max\{s_{j^p_{K^p}}, 1 + r_{j^{p}_{K^{p}}}\}$. 
    \end{itemize}
\end{lemma}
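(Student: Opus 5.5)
The plan is to read both parts directly off the cost values $C^p_k$ computed in line \ref{alg_line:key} of $\bb^\gamma$. Two facts do all the work. First, since $d_j = s_j+1$, the first branch of line \ref{alg_line:key} simplifies to $C^p_k = 2 + r_{j^p_{k-1}} - s_p + s_{j^p_{K^p}} - s_{j^p_k}$. Second, the critical jobs are agreeable, so $r_{j^p_{k-1}} < r_{j^p_k}$ for every $k$, with the convention $r_{j^p_0}=s_p$. Throughout I abbreviate $K:=K^p$ and $r_k := r_{j^p_k}$, $s_k := s_{j^p_k}$.

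For (i), suppose $p$ is tight, so $\min_k C^p_k = d_{j^p_K}-s_p = s_K+1-s_p$. I will compare this with the single term $C^p_K$. If $1+r_{K-1}<s_K$, then $C^p_K = 2+r_{K-1}-s_p$. This is strictly less than $s_K+1-s_p$ exactly when $1+r_{K-1}<s_K$, which we assumed, so the minimum would be below $d_{j^p_K}-s_p$ and $p$ would not be tight. Hence tightness forces $1+r_{K-1}\ge s_K$. Agreeability then gives $1+r_K>1+r_{K-1}\ge s_K$, which is (i). This also covers $K=1$, using $r_0=s_p<r_1$.

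For (ii), $j^p_{k^p}$ is secondary, so $p$ is not tight and $k:=k^p$ attains $C^p_k<d_{j^p_K}-s_p$. The second branch of line \ref{alg_line:key} would give equality, so we must be in the first branch, i.e. $1+r_{k-1}<s_k$. Adding $r_K$ to both sides gives $s_k + r_K - r_{k-1} > 1 + r_K$.

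It remains to show $s_K - r_K < s_k - r_{k-1}$, and this is the step needing a case split. If $k=K$, it is just $r_{K-1}<r_K$. If $k<K$, I use the minimality $C^p_k\le C^p_K$ and split on which branch $C^p_K$ falls in.
\begin{itemize}
\item If $1+r_{K-1}<s_K$, then $C^p_k\le C^p_K$ reads $2+r_{k-1}+s_K-s_k\le 2+r_{K-1}$. This gives $s_K-r_{K-1}\le s_k-r_{k-1}$, and hence $s_K-r_K<s_k-r_{k-1}$ by agreeability.
\item If $1+r_{K-1}\ge s_K$, then $s_K-r_K<s_K-r_{K-1}\le 1<s_k-r_{k-1}$, using the first-branch inequality for $k$.
\end{itemize}
Combining the two bounds yields (ii).

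I expect no real obstacle beyond keeping the indices consistent. In particular, everything is stated with $K^p$ rather than the full count of critical jobs, and the convention $r_{j^p_0}=s_p$ covers $k=1$. The hypothesis that the secondary job actually becomes a flag job does not seem to be needed.
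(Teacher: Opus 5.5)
Your proof is correct and follows essentially the same route as the paper. Tightness forces the second branch for $C^p_{K^p}$, non-tightness forces the first branch for $C^p_{k^p}$, and the bound on $s_{j^p_{K^p}}$ comes from the minimality $C^p_{k^p}\le C^p_{K^p}$ together with agreeability. The paper just skips your case split by observing that $C^p_{K^p}\le 2+r_{j^p_{K^p-1}}-s_p$ holds in either branch of the definition of $C^p_k$.
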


\begin{lemma} \label{lem:alg_cost}
    Take any primary flag job $p_x$ for $x = 1,2,\ldots,|\mathcal{P}|$. 
\begin{itemize}
    \item [(i)] If line \ref{alg_line:critical_case} does not hold for $p_x$, then the algorithm's cost during $[s_{p_x}, s_{p_{x+1}}]$ is at most $1 + \gamma$. 
    \item [(ii)] If line \ref{alg_line:critical_case} holds for $p_x$ and $p_x$ is tight (line \ref{alg_line:tight}), then the algorithm's cost during $[s_{p_x}, s_{p_{x+1}}]$ is at most $2 + (r_{j^{p_x}_{K^{p_x}}} - s_{p_x})$. 
    \item [(iii)] If line \ref{alg_line:critical_case} holds for $p_x$, $p_x$ is not tight (line \ref{alg_line:separated}), and the secondary job $j^{p_x}_{k^{p_x}}$ attached to $p_x$ does not become a flag job, then the algorithm's cost during $[s_{p_x}, s_{p_{x+1}}]$ is at most $1 + \gamma$. 
    \item [(iv)] If line \ref{alg_line:critical_case} holds for $p_x$, $p_x$ is not tight, and the secondary job $j^{p_x}_{k^{p_x}}$ attached to $p_x$ becomes a flag job, then the algorithm's cost during $[s_{p_x}, s_{p_{x+1}}]$ is at most $2 + r_{j^{p_x}_{K^{p_x}}} - s_{p_x}$. 
\end{itemize}
\end{lemma}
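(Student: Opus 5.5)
The plan is to bound the measure of busy time of $\bb^\gamma$ inside $[s_{p_x}, s_{p_{x+1}}]$ by covering it with a few explicit intervals, one per mechanism that starts jobs. Each mechanism is a line of Algorithm \ref{alg:restart}: line \ref{alg_line:start_primary}, line \ref{alg_line:start_tight}, line \ref{alg_line:start_secondary}, and line \ref{alg_line:flag_start} for flag jobs.

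By Lemma \ref{lem:agreeable}(vii), no job processed in this window can be started because of $p_1,\ldots,p_{x-1}$ or their secondary jobs. By Lemma \ref{lem:agreeable}(vi), every job released at or before $s_{p_x}$ finishes by $d_{p_x}$. So the only jobs running in the window are:
\begin{itemize}
\item $p_x$ itself, processed on $[s_{p_x},d_{p_x}]$ by Lemma \ref{lem:agreeable}(viii);
\item jobs started at $s_{p_x}$ by line \ref{alg_line:flag_start};
\item jobs started by lines \ref{alg_line:start_primary}, \ref{alg_line:start_tight} or \ref{alg_line:start_secondary} because of $p_x$ or $j^{p_x}_{k^{p_x}}$;
\item the secondary flag job $j^{p_x}_{k^{p_x}}$, if it exists, and jobs started with it.
\end{itemize}
Restarts only shrink busy time. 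Lemma \ref{lem:agreeable}(v) guarantees that $p_x$'s own interval is never cut.

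Case (i), line \ref{alg_line:critical_case} fails. Every job in $\mathcal{J}^{p_x}$ has $r_j\le s_{p_x}+\gamma$ and is started at release by line \ref{alg_line:start_primary}. Hence it finishes by $s_{p_x}+1+\gamma$. Nothing is stopped at $d_{p_x}$. By Lemma \ref{lem:agreeable}(i), $r_{p_{x+1}}>d_{p_x}$, so the next block belongs to $p_{x+1}$ and begins at $s_{p_{x+1}}$. There is a subtlety: a non-critical job of $\mathcal{J}^{p_x}$ may be released after $s_{p_x}+\gamma$ yet contain a critical job's window. I would argue it is still started at release, since the algorithm keeps it running when line \ref{alg_line:critical_case} fails. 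Its release time is then bounded by that of the last critical job, because it contains a critical window, so $r_j\le r_{j^{p_x}_K}\le s_{p_x}+\gamma$. This gives cost at most $1+\gamma$.

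Case (iii) is similar. After $d_{p_x}$, jobs with $r_j>r_{j^{p_x}_{k^{p_x}-1}}$ are stopped. The surviving jobs were started by time $r_{j^{p_x}_{k^{p_x}-1}}$. Since $k^{p_x}\le K^{p_x}$, we have $r_{j^{p_x}_{k^{p_x}-1}}\le s_{p_x}+\gamma$, so everything surviving ends by $s_{p_x}+1+\gamma$. The secondary job never becomes a flag job. By Lemma \ref{lem:agreeable}(iii), the next flag job is $p_{x+1}$ with $r_{p_{x+1}}>d_{p_x}$. The stopped jobs are restarted only at or after the next flag's deadline, which is $s_{p_{x+1}}$ or later. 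So the cost is again at most $1+\gamma$.

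Case (ii), tight. Jobs are started at release via line \ref{alg_line:start_primary} on $(s_{p_x},d_{p_x}]$, and via line \ref{alg_line:start_tight} up to $r_{j^{p_x}_{K^{p_x}}}+1$. They therefore all finish by $r_{j^{p_x}_{K^{p_x}}}+2$. The busy set is contained in $[s_{p_x}, r_{j^{p_x}_{K^{p_x}}}+2]$, of length $2+(r_{j^{p_x}_{K^{p_x}}}-s_{p_x})$. Lemma \ref{lem:agreeable}(ii) gives $r_{p_{x+1}}>1+r_{j^{p_x}_{K^{p_x}}}$. This shows no other job sneaks into this window except through $p_{x+1}$, whose block starts at $s_{p_{x+1}}$. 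I must also check that jobs released in this range after $d_{p_x}$ but not in $\mathcal{J}^{p_x}$ are covered. They are, because they start at release, which is at most $r_{j^{p_x}_{K^{p_x}}}+1$.

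Case (iv), the main obstacle. The busy time has two parts.
\begin{itemize}
\item The $p_x$-block $[s_{p_x},1+r_{j^{p_x}_{k^{p_x}-1}}]$, of length $1+r_{j^{p_x}_{k^{p_x}-1}}-s_{p_x}$.
\item The secondary block. It starts at $s_{j^{p_x}_{k^{p_x}}}$, and line \ref{alg_line:start_secondary} keeps starting jobs until $s_{j^{p_x}_{k^{p_x}}}+r_{j^{p_x}_{K^{p_x}}}-r_{j^{p_x}_{k^{p_x}-1}}$. It therefore ends by that time plus $1$, giving length $1+r_{j^{p_x}_{K^{p_x}}}-r_{j^{p_x}_{k^{p_x}-1}}$.
\end{itemize}
The two lengths sum to exactly $2+r_{j^{p_x}_{K^{p_x}}}-s_{p_x}$. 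Overlap between the blocks would only reduce the measure.

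The difficulty is showing that the listed jobs are the only ones running. The stopped jobs from $\mathcal{J}^{p_x}$ are restarted exactly at $s_{j^{p_x}_{k^{p_x}}}$ by line \ref{alg_line:flag_start}, and this needs Lemma \ref{lem:hat_f}. Lemma \ref{lem:agreeable}(iv) ensures that $j^{p_x}_{k^{p_x}}$ is the next flag job and completes at its deadline, and that $p_{x+1}$ is released only after the window of line \ref{alg_line:start_secondary}. Lemma \ref{lem:not_close}(ii) guarantees that this window extends past $s_{j^{p_x}_{K^{p_x}}}$, so all stopped jobs are indeed picked up in the secondary block rather than triggering an extra flag job. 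I expect most effort to go into this bookkeeping of which jobs can appear between $d_{p_x}$ and $s_{p_{x+1}}$.
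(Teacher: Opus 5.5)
Your proposal is correct and follows essentially the same route as the paper's proof. In each case you cover the busy time in $[s_{p_x}, s_{p_{x+1}}]$ by the same explicit intervals: one ending by $s_{p_x}+1+\gamma$ in cases (i) and (iii), one ending at $r_{j^{p_x}_{K^{p_x}}}+2$ in case (ii), and in case (iv) the pair $[s_{p_x}, 1+r_{j^{p_x}_{k^{p_x}-1}}]$ plus the secondary block of length $1+r_{j^{p_x}_{K^{p_x}}}-r_{j^{p_x}_{k^{p_x}-1}}$. You use Lemma \ref{lem:agreeable} to exclude all other processing. Your treatment of non-critical jobs in case (i) makes explicit a step the paper only asserts.

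One small point: the appeal to Lemma \ref{lem:not_close}(ii) in case (iv) is unnecessary. The stopped jobs are all restarted at $s_{j^{p_x}_{k^{p_x}}}$ simply because line \ref{alg_line:flag_start} starts every available job, and Lemma \ref{lem:agreeable}(iv) makes $j^{p_x}_{k^{p_x}}$ the next flag job.
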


The following definition points out which jobs are taken for constructing a lower bound of the optimal cost. 
\begin{definition}
    For each $p\in \mathcal{P}$, let $\mathcal{A}(p)$ denote the set of critical jobs $\{j^{p}_1,j^{p}_2,\ldots,j^{p}_{K^{p}}\}$ as described in line \ref{alg_line:A_set_begin}-\ref{alg_line:A_set_end} if either the primary job $p$ is tight (line \ref{alg_line:tight}), or, there is a secondary flag job $j^{p}_{k^{p}}$ attached to $p$. 
    Let $\mathcal{A}(p)$ be empty otherwise. 
\end{definition}

\begin{lemma} \label{lem:opt}
    There exists an optimal scheduling, denoted by $O$, of $\cup_{x = 1,2,\ldots, |\mathcal{P}|} \{p_x\}\cup \mathcal{A}(p_x)$ such that $O$ has exactly $|\mathcal{P}|$ blocks, ordered from left to right, and the the set of jobs accommodated within the $x$-th block is exactly $\{p_x\}\cup \mathcal{A}(p_x)$ for $x = 1,2,\ldots, |\mathcal{P}|$. 
\end{lemma}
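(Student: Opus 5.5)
We prove the statement by exhibiting a concrete optimal schedule with the claimed block structure, rather than by modifying an arbitrary optimal one. The argument has two halves: a separation claim (jobs attached to different primaries cannot share a block in any feasible schedule) and a block-by-block optimality argument. Together these show that some optimal schedule of $\cup_x \{p_x\}\cup\mathcal{A}(p_x)$ consists of exactly one block per group, in order.

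\textbf{Separation.} The claim is that for every $x<|\mathcal{P}|$, every job of $\{p_x\}\cup\mathcal{A}(p_x)$ has deadline strictly smaller than the release time of every job of $\{p_{x+1}\}\cup\mathcal{A}(p_{x+1})$. Then in any feasible schedule the group-$x$ jobs finish strictly before any group-$(x+1)$ job can start. Since the jobs of $\mathcal{A}(p_{x+1})$ are released after $s_{p_{x+1}}\ge r_{p_{x+1}}$, it suffices to compare with $r_{p_{x+1}}$. We split into the cases of Lemma~\ref{lem:agreeable}.
\begin{itemize}
\item In cases (i) and (iii), $\mathcal{A}(p_x)=\emptyset$, and the claim is exactly $d_{p_x}<r_{p_{x+1}}$, which the lemma gives.
\item In case (ii), $p_x$ is tight. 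The latest deadline in $\mathcal{A}(p_x)$ is $d_{j^{p_x}_{K^{p_x}}}=s_{j^{p_x}_{K^{p_x}}}+1$. By Lemma~\ref{lem:not_close}(i), $s_{j^{p_x}_{K^{p_x}}}<1+r_{j^{p_x}_{K^{p_x}}}$. Combined with (ii) of Lemma~\ref{lem:agreeable}, namely $1+r_{j^{p_x}_{K^{p_x}}}<r_{p_{x+1}}$, this gives $s_{j^{p_x}_{K^{p_x}}}<r_{p_{x+1}}$. This is not quite a bound on the deadline $s_{j^{p_x}_{K^{p_x}}}+1$, so the statement has to be weakened: a group-$(x+1)$ job cannot \emph{start} before $r_{p_{x+1}}$, while every group-$x$ job can \emph{start} by $s_{j^{p_x}_{K^{p_x}}}<r_{p_{x+1}}$. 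That relation is what we actually use.
\item In case (iv), Lemma~\ref{lem:not_close}(ii) together with (iv) of Lemma~\ref{lem:agreeable} gives the same kind of relation: $s_{j^{p_x}_{K^{p_x}}}<r_{p_{x+1}}$ and $1+r_{j^{p_x}_{K^{p_x}}}<r_{p_{x+1}}$.
\end{itemize}

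\textbf{Block construction.} Take any optimal schedule $O'$ and move every group-$x$ job as early as allowed inside its window, subject to staying in the same block as the others. More cleanly, construct $O$ directly. For each $x$, let $O_x$ be an optimal schedule of the group $\{p_x\}\cup\mathcal{A}(p_x)$ in isolation. Use the agreeable structure: $p_x$ is rigid at $s_{p_x}$ when viewed as a flag job, and the critical jobs satisfy $r_{j_1}<\dots<r_{j_K}$ and $d_{j_1}<\dots<d_{j_K}$. With these, the group can be scheduled inside $[s_{p_x},\max\{d_{p_x},\,1+r_{j^{p_x}_{K^{p_x}}},\,s_{j^{p_x}_{K^{p_x}}}\}]$ while starting no later than the times above. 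We then show the union $O=\cup_x O_x$ is optimal for the whole set. Any schedule's cost is at least the sum over $x$ of its cost restricted to group $x$, provided distinct groups occupy disjoint time spans. This disjointness is where the separation bounds are used: a job of group $x+1$ starts at or after $r_{p_{x+1}}>1+r_{j^{p_x}_{K^{p_x}}}\ge$ (the start of any group-$x$ job in a suitably shifted schedule) $+1$.

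\textbf{Main obstacle.} The delicate step is the exchange argument showing that in some optimal schedule the groups do not interleave or share a block. The naive deadline separation fails because a group-$x$ job may legitimately run late, up to its starting deadline, overlapping the release of $p_{x+1}$. I would first show by exchange that any optimal schedule can be shifted so each group-$x$ job starts no later than $\max\{s_{p_x},r_{j^{p_x}_{K^{p_x}}}\}$, using that $p_x$ is rigid or anchored at $s_{p_x}$ and that pulling an agreeable job earlier, toward the block containing $p_x$, never increases cost. After this shift, all group-$x$ processing ends by $1+r_{j^{p_x}_{K^{p_x}}}$ or $d_{p_x}$, both strictly less than $r_{p_{x+1}}$. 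Blocks therefore separate, and merging each group into one block is free by optimality within the group. If the shift fails in the non-tight case, I would fall back on the secondary structure: Lemma~\ref{lem:not_close}(ii) places the release of $p_{x+1}$ beyond $s_{j^{p_x}_{K^{p_x}}}$, so starting every group-$x$ job at or before its own starting deadline already precedes group $x+1$.
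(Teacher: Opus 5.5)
Your proposal has a genuine gap, and it sits exactly at the step you call the main obstacle.

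\textbf{The facts you use do not imply the lemma.} From Lemmas~\ref{lem:agreeable} and~\ref{lem:not_close} you keep only agreeability plus the inequalities $d_{p_x}<r_{p_{x+1}}$, $1+r_{j^{p_x}_{K^{p_x}}}<r_{p_{x+1}}$ and $s_{j^{p_x}_{K^{p_x}}}<r_{p_{x+1}}$. You then rely on two claims. The first is that any schedule costs at least the sum of per-group optima ``provided distinct groups occupy disjoint time spans''; this assumes the disjointness you are trying to prove. The second is that pulling an agreeable job of $\mathcal{A}(p_x)$ earlier, toward the block of $p_x$, never increases cost; this is false.

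Here is a configuration where all your relations hold but the lemma's conclusion fails:
\begin{itemize}
\item $p$ is rigid at $0$;
\item $j_1$ has $r_{j_1}=0.1$ and $s_{j_1}=0.5$;
\item $j_2$ has $r_{j_2}=0.9$ and $s_{j_2}=1.85$;
\item $p'$ is rigid at $1.95$.
\end{itemize}
The jobs $j_1,j_2$ are agreeable and released in $(s_p,d_p]$. Moreover $d_p=1$ and $s_{j_2}=1.85<1+r_{j_2}=1.9<r_{p'}=1.95$, so every relation you use, in both your case (ii) and case (iv), holds. Running $p,j_1$ in $[0,1.1]$ and $j_2,p'$ in $[1.85,2.95]$ costs $2.2$. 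Any schedule with blocks $\{p,j_1,j_2\}$ and $\{p'\}$ costs at least $1.9+1=2.9$. Moving $j_2$ earlier is precisely what raises the cost. Knowing that every group-$x$ job can \emph{start} before $r_{p_{x+1}}$ does not stop a late tail of $\mathcal{A}(p_x)$ from sharing a block with $p_{x+1}$.

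\textbf{The missing ingredient is the algorithm's choice of $k^{p_x}$.} This is what rules such configurations out, and your argument never uses it; you weakened Lemma~\ref{lem:agreeable}(iv) to a form that discards it. The paper transforms an arbitrary optimal schedule group by group. Its decisive case is a block of $p_b$ holding $p_b,j^{p_b}_1,\ldots,j^{p_b}_{\tilde k-1}$ while $j^{p_b}_{\tilde k}$ sits in the block of $p_{b+1}$.
\begin{itemize}
\item If $p_b$ is tight, then $C^{p_b}_{\tilde k}=d_{j^{p_b}_{K^{p_b}}}-s_{p_b}$ forces $s_{j^{p_b}_{\tilde k}}\le 1+r_{j^{p_b}_{\tilde k-1}}$ for \emph{every} $\tilde k$. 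You only used this for $\tilde k=K^{p_b}$. With it, the split is infeasible.
\item Otherwise, $C^{p_b}_{\tilde k}\ge C^{p_b}_{k^{p_b}}$ together with the full bound $r_{p_{b+1}}>s_{j^{p_b}_{k^{p_b}}}+(r_{j^{p_b}_{K^{p_b}}}-r_{j^{p_b}_{k^{p_b}-1}})$ gives $r_{p_{b+1}}>s_{j^{p_b}_{\tilde k}}+(r_{j^{p_b}_{K^{p_b}}}-r_{j^{p_b}_{\tilde k-1}})$.
\end{itemize}
That last inequality is exactly what makes the split schedule's cost on $[s_{p_b},r_{p_{b+1}}+1]$ strictly exceed the alternative. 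The split costs at least $(1+r_{j^{p_b}_{\tilde k-1}}-s_{p_b})+(r_{p_{b+1}}+1-s_{j^{p_b}_{\tilde k}})$ there. Keeping all of $\mathcal{A}(p_b)$ with $p_b$ costs at most $2+(r_{j^{p_b}_{K^{p_b}}}-s_{p_b})$.

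Without this argmin/secondary-flag ingredient, the exchange you describe cannot be justified. Adding it essentially turns your approach into the paper's inductive exchange argument.
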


\begin{theorem} \label{thm:restart}
    When $\gamma = \frac{\sqrt{5}-1}{2}$, the competitiveness of $\bb^\gamma$ is $\frac{\sqrt{5}+1}{2}$. 
\end{theorem}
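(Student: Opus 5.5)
The plan is a charging argument that compares the algorithm and a lower bound on the optimum block by block. By Lemma \ref{lem:alg_cost}, the algorithm's total cost is at most $\sum_x A_x$, where $A_x$ is the cost incurred during $[s_{p_x}, s_{p_{x+1}}]$ (the last interval being $[s_{p_{|\mathcal{P}|}},\infty)$). By Lemma \ref{lem:opt}, the optimal cost of all jobs is at least $\sum_x O_x$, where $O_x$ is the length of the $x$-th block of $O$, i.e.\ the optimal span of $\{p_x\}\cup\mathcal{A}(p_x)$ scheduled on its own. So it suffices to show $A_x \le \phi\, O_x$ for every $x$ when $\gamma=\phi-1=\frac{\sqrt5-1}{2}$. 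The matching lower bound is Theorem \ref{thm:restart_lb}, so only the upper bound is needed.

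I would split into the four cases of Lemma \ref{lem:alg_cost}.

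In cases (i) and (iii), $\mathcal{A}(p_x)$ may be empty, so I only use $O_x\ge 1$. Then $A_x\le 1+\gamma=\phi\le \phi\,O_x$.

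Case (ii), $p_x$ tight, gives $A_x\le 2+(r_{j_{K}}-s_p)$, writing $j_k=j^{p_x}_k$, $K=K^{p_x}$ and $p=p_x$.
\begin{itemize}
\item Since $p$ is tight, every $C^p_k$ equals $d_{j_K}-s_p$; in particular $1+r_{j_{k-1}}\ge s_{j_k}$ for all $k$, which makes the agreeable chain $p,j_1,\dots,j_K$ ``tight'' in this sense.
\item For the optimum, the block containing $p$ (which runs on $[s_p,d_p]$) and $j_1,\dots,j_K$ has length at least $\max\{1+r_{j_K}-s_p,\ s_{j_1}+1-s_p\}$, because $j_K$ cannot start before $r_{j_K}$.
\item Lemma \ref{lem:not_close}(i) gives $1+r_{j_K}>s_{j_K}$. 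Together with $r_{j_K}>s_p+\gamma$, this should give $O_x\ge 1+(r_{j_K}-s_p)>1+\gamma$.
\item Then $A_x/O_x\le (1+u)/u$ with $u=O_x-1+\dots$. More precisely, $\frac{2+y}{1+y}$ with $y=r_{j_K}-s_p>\gamma$ is at most $\frac{2+\gamma}{1+\gamma}=\phi$ for the golden choice of $\gamma$.
\end{itemize}

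Case (iv), a secondary flag job $j_{k^p}$, also gives $A_x\le 2+y$ with $y=r_{j_K}-s_p>\gamma$. I need $O_x\ge 1+y$ (or at least $O_x \ge (2+y)/\phi$). The key point is the choice of $k^p$ as the argmin of $C^p_k$: any schedule of $p,j_1,\dots,j_K$ splits the chain at some index $k$ into a block ending at $\ge 1+r_{j_{k-1}}$ and a block that must contain $j_k,\dots,j_K$. The second block starts at or before $s_{j_k}$ and ends at $\ge d_{j_K}$ by agreeability, or else the schedule is a single block of length $\ge d_{j_K}-s_p$. Hence $O_x\ge \min_k C^p_k = C^p_{k^p}$, and then
\[
C^p_{k^p}=(1+r_{j_{k^p-1}}-s_p)+(d_{j_K}-s_{j_{k^p}}).
\]
Lemma \ref{lem:not_close}(ii) bounds $d_{j_K}-s_{j_{k^p}}$ from below: $s_{j_{k^p}}+(r_{j_K}-r_{j_{k^p-1}})>s_{j_K}$ gives $d_{j_K}-s_{j_{k^p}}> 1 + r_{j_K}-r_{j_{k^p-1}}-\dots$. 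This telescopes to $O_x\ge 1+y$ or better, which yields the same $\frac{2+y}{1+y}\le\phi$.

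The main obstacle is case (iv): making rigorous that the optimum's cost restricted to $\{p\}\cup\mathcal{A}(p)$ is at least $\min_k C^p_k$. This requires the case analysis on where the optimal block splits the agreeable chain, with the truncation at $K^p$ rather than $K$ handled correctly. I would also verify in Lemma \ref{lem:alg_cost}(iv) that the algorithm's extension length $r_{j_K}-r_{j_{k^p-1}}$ matches the second term, so the bound is really $A_x\le 2+y$. Finally, summing over $x$ gives competitiveness $\phi$.
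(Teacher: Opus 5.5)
Your architecture is the paper's. You charge the algorithm's cost on $[s_{p_x},s_{p_{x+1}}]$ (Lemma \ref{lem:alg_cost}) to the $x$-th block of the schedule $O$ from Lemma \ref{lem:opt}. You match $1+\gamma$ against $1$ in cases (i) and (iii), and $2+y$ against $1+y$ with $y=r_{j_K}-s_{p_x}>\gamma$, using $\frac{2+y}{1+y}<\frac{2+\gamma}{1+\gamma}=1+\gamma$. Cases (i)--(iii) are fine. In case (ii) the appeal to Lemma \ref{lem:not_close}(i) is unnecessary, since your second bullet already gives $O_x\ge 1+y$.

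Case (iv), however, fails as you propose it, because the inequality $O_x\ge\min_k C^{p}_k$ is false. Take $s_p=0$ and a single critical job $j_1$ with $r_{j_1}=0.8$ and $s_{j_1}=5$. Then $K^p=1$ and $C^p_1=1+(d_{j_1}-s_{j_1})=2<d_{j_1}-s_p=6$, so $j_1$ is secondary. Yet $p$ and $j_1$ fit in the single block $[0,1.8]$, so $O_x=1.8<2$.

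Both supporting claims are wrong. A block containing $j_k,\dots,j_K$ need only end at $r_{j_K}+1$, not at $d_{j_K}$. Likewise, a single block containing the whole chain need only have length $r_{j_K}+1-s_p$, not $d_{j_K}-s_p$. The quantity $C^p_k$ starts the second group at its starting deadlines; it is a proxy for the algorithm's cost, not a lower bound on the optimum. Also, Lemma \ref{lem:not_close}(ii) rearranges to $d_{j_K}-s_{j_{k^p}}<1+r_{j_K}-r_{j_{k^p-1}}$. That is an upper bound, so it cannot supply the lower bound you want.

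The repair is immediate, and it is what the paper does: no argmin analysis is needed at this stage. In case (iv), $\mathcal{A}(p_x)=\{j_1,\dots,j_K\}$ is nonempty, and Lemma \ref{lem:opt} already places $p_x$ and all of $\mathcal{A}(p_x)$ in one block. Your case (ii) argument then gives $O_x\ge (r_{j_K}+1)-s_{p_x}=1+y$ verbatim.

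The splitting issue you flag as the main obstacle is handled inside the proof of Lemma \ref{lem:opt}. That proof uses the minimality of $C^{p_b}_{k^{p_b}}$ only in the situation where $p_{b+1}$ forces the second block to extend to $r_{p_{b+1}}+1$. There, starting the second group at starting deadlines really is optimal.
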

\begin{proof}
    Consider the following matching between the algorithm's cost and the optimal cost: 
    algorithm's cost during $[s_{p_x}, s_{p_{x+1}}]$ corresponds to the cost of the $x$-th block in the optimal scheduling as described in Lemma \ref{lem:opt}. 
    After a case by case check between Lemma \ref{lem:alg_cost} and \ref{lem:opt}, $1+\gamma$ in the algorithm's cost is matched to $1$ in the optimal cost, and $2 + (r_{j^{p_x}_{K^{p_x}}} - s_{p_x})$ in the algorithm's cost is matched to $1 + (r_{j^{p_x}_{K^{p_x}}} - s_{p_x})$ in the optimal cost. 
    Note that $r_{j^{p_x}_{K^{p_x}}} - s_{p_x} > \gamma$. 
    Since $\gamma$ is chosen to be $ \frac{\sqrt{5}-1}{2}$, we have $\frac{2 + (r_{j^{p_x}_{K^{p_x}}} - s_{p_x})}{1 + (r_{j^{p_x}_{K^{p_x}}} - s_{p_x})} < \frac{2 + \gamma}{1 + \gamma} = \frac{1 + \gamma}{1}$. 
    The local ratio leads to the global ratio, i.e., the competitiveness of $\bb^\gamma$ is $1 + \gamma = \frac{\sqrt{5}+1}{2}$. 
\end{proof}

At last, we show $\bb^\gamma$ is linearly fast given that jobs are sorted.

\begin{lemma} \label{lem:critical_jobs}
    If $\mathcal{J}^p$ is sorted and stored in a doubly linked list $L$ such that $r_{L(l_1)} \leq r_{L(l_2)}$ for any $1 \leq l_1 < l_2 \leq |\mathcal{J}^p|$, then it takes $O(|\mathcal{J}^p|)$ time to output the critical jobs in line \ref{alg_line:critical_jobs}. 
\end{lemma}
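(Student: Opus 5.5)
We need a linear-time procedure that outputs the critical jobs of $\mathcal{J}^p$, i.e. the jobs whose window $[r_j,d_j]$ contains no other job's window. Since all jobs have unit length, $d_j = s_j+1$, so containment is decided by release times and starting deadlines alone: $[r_{j'},d_{j'}]\subset[r_j,d_j]$ iff $r_{j'}\ge r_j$ and $s_{j'}\le s_j$, with at least one inequality strict. By Assumption \ref{ass:windows} no two windows are equal, so containment is always proper and the critical set is well defined.

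The plan is a single backward sweep over the list $L$, from the largest release time to the smallest, maintaining $m$, the minimum starting deadline among the jobs already scanned. A job $j$ is critical iff no job with release time at least $r_j$ has starting deadline at most $s_j$. One subtlety: several jobs may share the same release time. I would process each group of equal release times as a unit, which takes amortized constant work per element because equal keys are contiguous in the sorted list. Within a group, only the job with the smallest starting deadline can be critical, since any other job in the group contains it. That group minimum, say with starting deadline $s^*$, is critical iff $s^* < m$, where $m$ is taken over the strictly later groups. For the equality case: if $s^*=m$, the later job with the same deadline and a larger release time has a window strictly contained in this job's window, so it is indeed not critical. After handling the group, set $m\gets\min(m,s^*)$.

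Next, the jobs marked critical are collected. They are found in decreasing order of release time, so each is prepended to an output list, or the output is reversed at the end; either costs $O(|\mathcal{J}^p|)$ with a doubly linked list. The output then satisfies $r_{j^p_1}<\cdots<r_{j^p_K}$ as line \ref{alg_line:critical_jobs} requires. As a consistency check, the sweep also yields strictly increasing starting deadlines, which is the agreeable property.

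Correctness is the invariant that, when a group is reached, $m$ equals the minimum starting deadline over all jobs with strictly larger release time, combined with the characterization of containment above. Every element is touched once and each step is $O(1)$, which gives $O(|\mathcal{J}^p|)$. The main point needing care is tie handling, both for equal release times and for equal starting deadlines. That is where I would spend the detail, using Assumption \ref{ass:windows} to exclude identical windows.
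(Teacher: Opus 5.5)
Your proposal is correct, but it follows a different route from the paper. The paper (Algorithm~\ref{alg:critical}) scans $L$ \emph{forward} and deletes in place, like a stack. Each newly visited job $j$ removes every preceding survivor $j'$ with $s_j\le s_{j'}$, since such a $j'$ contains $j$'s window. It removes $j$ itself if the preceding survivor has the same release time and a smaller starting deadline.

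The paper's correctness argument then has three parts. First, every deleted job is redundant (Claim~1). Second, the surviving list is agreeable, shown by induction over the outer loop (Claim~2). Third, every survivor is critical: a non-critical survivor would contain some critical job, which survives by Claim~1, contradicting agreeability. The linear running time comes from charging each inner-loop iteration to a deletion.

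You instead scan \emph{backward} with a running suffix minimum $m$ of starting deadlines. This gives an explicit characterization: a job is critical iff it is the minimum-$s$ job of its release-time group (unique by Assumption~\ref{ass:windows}) and $s^*<m$. That settles both inclusions at once and gives agreeability for free. Your correctness proof is therefore shorter and more transparent than the paper's induction plus contradiction. The cost is minor bookkeeping: you traverse from the tail and prepend to, or reverse, the output.

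The paper's forward version, in exchange, works directly on $L$ in release order and leaves the critical jobs already sorted in place. It could even be maintained incrementally as the jobs of $\mathcal{J}^p$ are released, although the paper does not use this.
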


\begin{theorem} \label{thm:restart_time}
    If the input job instance of size $n$ is sorted and stored in a doubly linked list $L$ such that $r_{L(l_1)} \leq r_{L(l_2)}$ for any $1 \leq l_1 < l_2 \leq n$, then $\bb^\gamma$ is a $\frac{\sqrt{5}+1}{2}$-approximation algorithm whose time complexity is $O(n)$ for the offline setting . 
\end{theorem}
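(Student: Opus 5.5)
The plan has two parts: the approximation ratio and the linear running time. For the ratio, I run $\bb^\gamma$ offline as a simulation, processing events in time order, and read the guarantee off Theorem \ref{thm:restart}. An offline simulation sees exactly what the online algorithm sees at each event, so the cost it produces is the online cost. That cost is at most $\frac{\sqrt{5}+1}{2}$ times the optimum. The restarts are only bookkeeping here: in the final schedule every job runs once, in its last (re)started slot. By Lemma \ref{lem:feasible} that slot lies within $[r_j,s_j]$, so the output is a valid offline schedule whose busy time is no larger than the online cost. The approximation claim therefore follows directly, and the real work is the $O(n)$ bound.

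For the running time, I scan the sorted list $L$ once and keep a priority structure for the variable $\hat f$. By Lemma \ref{lem:hat_f}, $\hat f$ is just the pending job with the earliest starting deadline. The events are releases, which are processed in list order, the times $s_{\hat f}$, and the times $d_p$. A release event costs $O(1)$. It either starts the job immediately because one of the interval tests in lines \ref{alg_line:start_primary}--\ref{alg_line:start_secondary} holds, each being an $O(1)$ comparison against stored values of $p$, $K^p$, $k^p$ and $j^p_{k^p}$, or it updates $\hat f$ with one comparison. Keeping only the minimum-starting-deadline pending job is $O(1)$ per release. At time $s_{\hat f}$, starting all available jobs sweeps the current pending set. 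Each job enters that set at most twice, once on release and once when it is stopped at line \ref{alg_line:stop}, so these sweeps cost $O(n)$ in total. After a sweep the pending set is empty, which also resets $\hat f$ trivially.

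The time $d_p$ event is where I expect the difficulty. The set $\mathcal{J}^p$ is a contiguous sublist of $L$, namely the releases in $(s_p,d_p]$, so it is already sorted. Lemma \ref{lem:critical_jobs} extracts the critical jobs in $O(|\mathcal{J}^p|)$. Finding $K^p$ is a linear scan. For the costs $C^p_k$ in line \ref{alg_line:key}, each value for $k=1,\ldots,K^p$ needs only $r_{j^p_{k-1}}$, $s_{j^p_k}$ and the fixed quantity $d_{j^p_{K^p}}$, so all of them together take $O(K^p)$, and the argmin and the tightness test follow. Stopping the jobs with $r_j>r_{j^p_{k^p-1}}$ means cutting a suffix of the sublist, which is $O(|\mathcal{J}^p|)$.

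The main obstacle is charging these per-$p$ costs globally. I would argue that the windows $(s_p,d_p]$ of distinct primary jobs are disjoint. This should follow from Lemma \ref{lem:agreeable}(vii) and the fact that $s_{p_x}\ge d_{p_{x-1}}$, which in turn comes from (vi)/(viii): $p_x$ is released after $s_{p_{x-1}}$, or it would have been started with $p_{x-1}$. Given disjointness, each job belongs to at most one $\mathcal{J}^p$, and $\sum_p|\mathcal{J}^p|\le n$. Adding up the release, flag and $d_p$ events then gives $O(n)$ time overall.
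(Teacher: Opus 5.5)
Your proof is correct. For the $d_p$ events and for the disjointness of the sets $\mathcal{J}^p$ it follows the paper. The genuine difference is how you pay for the ``start all available jobs'' sweeps at flag events.

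\begin{itemize}
\item The paper keeps two pointers into $L$ and proves that the jobs started at each $s_f$ form a contiguous interval of $L$. It then argues that a job started at line \ref{alg_line:flag_start} is never stopped, so each job is swept at most once.
\item You instead keep an explicit pending list and charge each sweep to insertions, one at release or one at a stop. This needs only two facts: a job is stopped at most once, which is immediate from the disjointness of the $\mathcal{J}^p$; and line \ref{alg_line:flag_start} empties the pending set, so a running minimum suffices for $\hat f$ (Lemma \ref{lem:hat_f}).
\end{itemize}

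Your route is simpler and avoids the interval claim entirely, at the cost of an auxiliary list. It also bounds the number of flag events by the number of insertions, since each sweep empties a nonempty set. The paper needs that count separately when it bounds the event-dispatch comparisons, so state it explicitly.

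You also justify the approximation ratio, which the paper's proof leaves implicit. You output each job's last (re)started slot, which is valid by Lemma \ref{lem:feasible}. Its busy time is at most the online cost, which Theorem \ref{thm:restart} bounds.

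One step needs tightening. Your parenthetical about $p_x$ only yields $r_{p_x}>s_{p_{x-1}}$. To get past $d_{p_{x-1}}$, you also need that a release in $(s_{p_{x-1}},d_{p_{x-1}}]$ would be started at line \ref{alg_line:start_primary}, contradicting (viii). Alternatively, cite Lemma \ref{lem:agreeable} (i)--(iv), which give $d_{p_x}<r_{p_{x+1}}$ directly, as the paper does. Item (vii) is not the relevant one here.
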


{\bf Acknowledgements:} We are grateful to Sirie He and Gruia Calinescu for valuable discussions in the early stages of this work.

\onecolumn

\appendix 

\section[Lower bound of 2]{Lower bound of $2$}

\subsection{A correction} \label{sec:correction}

We noticed that Albers {\em et al.} \cite{albers2025onlinebusytimescheduling} attempted to give an lower bound of $2$ for any deterministic online algorithm for uniform job scheduling. However, their attempt of the adversary design and the analysis, in spite of being simpler than ours, still appears to be flawed. 
Albers {\em et al.}'s adversary can be summarized as below where $k$ is a large number: 
\begin{enumerate}
    \item initialize $i$ to be $1$ as the index of the current round; 
    \item release rigid job $f_1$ (corresponding to $\hat{j}_1$ in our adversary) with deadline at time $1$ and $t^*\gets r_{f_1}$; 
    \item release job $j$ at time $t^* + \epsilon$ with deadline at time $3i$ (corresponding to $j^i_k$ in our adversary); 
    \item do nothing until job $j$ is started, and $t^*\gets$ when $j$ is started; 
    \item if $t^*\in [r_{f_i}, d_{f_i})$, then release another job at time $t^* + \epsilon$ with deadline at time $3i$ (corresponding to $j^i_{\floor{(t^*-t_i)\cdot K}+1}$), $j\gets$ the new released job, and repeat 4-6;  
    \item if $t^*\geq d_{f_i}$, then mark job $j$ as $f_{i+1}$, $i\gets i+1$, and repeat 3-6 unless $i = k$; 
\end{enumerate}
Notice that the difference between the above adversary and ours is that even if $t^*\in [d_{f_i}, d_{f_i+1}]$ (in item 6), job $j$ is still marked as $f_{i+1}$ by the above adversary, while in our adversary a rigid job $\hat{j}_{i+1}$ is released at time $d_{\hat{j}_i}+2$ as in line \ref{alg_line:correct_move} of Algorithm \ref{alg:det_2}.  

We present the following algorithm $a$ against the above adversary that has a competitiveness strictly below $2$. 
Start $f_1$ at its release time since it is rigid. 
Take any $r = 0,1,\ldots$ as the index of the $r$-th round. 
The below describes a round of scheduling jobs after $f_{3r+1}$ and before $f_{3(r+1)+1}$ including $f_{3(r+1)+1}$. 
Suppose $f_{3r+1}$ is started at time $t$. 
Adversary releases job $j$ at time $t + \epsilon$ with deadline at time $3(3r+1)$ by definition. 
Algorithm $a$ starts job $j$ at its starting deadline $d_j - 1 = 3(3r+1)-1$. 
Then, the adversary marks $j$ as $f_{3r+2}$. 
Then, the adversary releases $j'$ at time $d_{f_{3r+2}} - 1 + \epsilon$ with deadline at time $3(3r+2)$. 
Algorithm $a$ starts job $j'$ at time $d_{f_{3r+2}} - \epsilon$. 
By definition, the adversary releases another job $j''$ at time $d_{f_{3r+2}}$ with deadline at time $3(3r+2)$. 
Algorithm $a$ starts $j''$ at its release time $r_{j''} = d_{f_{3r+2}} = 3(3r+1)$. 
Then, the adversary marks $j''$ as $f_{3r+3}$. 
Then, the adversary releases job $j'''$ at time $r_{f_{3r+3}} + \epsilon = d_{f_{3r+2}} + \epsilon$ with deadline at time $3(3r+3)$. 
Algorithm $a$ starts job $j'''$ at its starting deadline $d_{j'''}-1 = 3(3r+3) - 1$, and hence the adversary marks $j'''$ as $f_{3(r+1)+1}$. 
See Figure \ref{fig:counter_example} for $a$'s scheduling in the first two rounds until $f_7$. 

\begin{figure}[h]
    \centering
    \includegraphics[width=\linewidth]{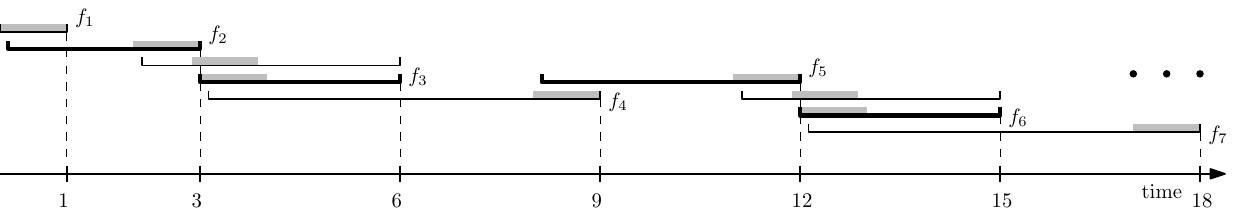}
    \caption{$a$'s scheduling}
    \label{fig:counter_example}
\end{figure}

Since the adversary has $k$ flag jobs, there are $\ceil{\frac{k}{3}}$ rounds. 
Take any $r = 0,1,\ldots,\ceil{\frac{k}{3}}-1$. 
The above scheduling of $a$ ensures that algorithm's cost during $[3(3r)-1, 3(3r+2)]$ is at most $3$ where the processing due to $f_{3r+1}$ has cost of $1$ and the processing of $f_{3r+2}$ and $f_{3r+3}$ has an extra cost of $2$. 
Then, the algorithm's cost is at most $3\ceil{\frac{k}{3}}$. 
Next, we show that the optimal cost is at least $2(\ceil{\frac{k}{3}}-1)$. 
Take jobs $\{f_{3r+2}, f_{3r+3}: r = 0,1,\ldots,\ceil{\frac{k}{3}}-2\}$. 
Note that $d_{f_{3r+2}} \leq r_{f_{3r+3}}$ and $d_{f_{3r+3}} = 3(3r+2) < 3(3r+3) - 1 + \epsilon = r_{f_{3(r+1)+2}}$. 
It follows that all the jobs in $\{f_{3r+2}, f_{3r+3}: r = 0,1,\ldots,\ceil{\frac{k}{3}}-2\}$ have mutually disjoint windows and hence the optimal cost is at least $2(\ceil{\frac{k}{3}}-1)$. 
Therefore, the competitiveness of algorithm $a$ is at most $\frac{3\ceil{\frac{k}{3}}}{2(\ceil{\frac{k}{3}}-1)}$ that drops below $2$ when $k\geq 13$. 

Here, we point out where we think Albers {\em et al.}'s analysis does not work. 
To achieve the lower bound of $2$, the algorithm's cost is represented as the total cost of components where a component is a maximal busy interval and all the maximal busy intervals are mutually disjoint. 
The components should be matched to the flag jobs one to one, otherwise one may not be able to reschedule the non-consecutive components into their adjacent components almost for free. 
After looking into the definitions of component and the adversary, the above statements cannot be achieved simultaneously. 

\section{Randomized online algorithm against oblivious adversary}



\subsection{The best coin} \label{sec:best_coin}

Next, we show that $\{0,1/2\}$ is the best pair of values for $\gamma$. Suppose algorithm flips an unfair coin to choose between $\{\gamma_1, \gamma_2\}$ such that $0\leq \gamma_1 < \gamma_2$. 

\begin{itemize}
    \item Case 1: $L \leq 1 + \gamma_1$. 
    Looking at $\bat^{\gamma_1}$, Lemma \ref{lem:I^-} and \ref{lem:key} imply $|\mathcal{F}^{\gamma_1}_i|\leq 1$, and hence $C^{\gamma_1}_i \leq 1 + \gamma_1 \leq (1+\gamma_1)\cdot L$. 
    Looking at $\bat^{\gamma_2}$, we have the same result:  $C^{\gamma_2}_i \leq 1 + \gamma_2 \leq (1+\gamma_2)\cdot L$. 
    \item Case 2: $1+\gamma_1 < L \leq 1+\gamma_2$. 
    Looking at $\bat^{\gamma_1}$, Lemma \ref{lem:I^-} and \ref{lem:key} imply $C^{\gamma_1}_i \leq L + 1 + \gamma_1 < 2\cdot L$. 
    Looking at $\bat^{\gamma_2}$, $|\mathcal{F}^{\gamma_2}_i|\leq 1$. We have $C^{\gamma_2}_i \leq 1 + \gamma_2 < \frac{1+\gamma_2}{1+\gamma_1}\cdot L$. 
    \item Case 3: $L > 1+\gamma_2$. 
    Looking at $\bat^{\gamma_1}$, Lemma \ref{lem:I^-} and \ref{lem:key} imply $C^{\gamma_1}_i \leq L + 1 + \gamma_1 < (1 + \frac{1+\gamma_1}{1+\gamma_2})\cdot L$. 
    Looking at $\bat^{\gamma_2}$, similarly, we have $C^{\gamma_2}_i \leq L + 1+\gamma_2 < 2\cdot L$. 
\end{itemize}

Let $\alpha\in (0,1)$ denote the probability of choosing $\bat^{\gamma_1}$. 
Then, $\alpha\cdot C^{\gamma_1}_i + (1-\alpha)\cdot C^{\gamma_2}_i \leq \mathbf{c}\cdot L$, where 
\begin{equation*}
    \mathbf{c} = \max\left\{\alpha\cdot (1+\gamma_1) + (1-\alpha)\cdot (1+\gamma_2), \alpha\cdot 2 + (1-\alpha)\cdot \frac{1+\gamma_2}{1+\gamma_1}, \alpha\cdot \left(1 + \frac{1+\gamma_1}{1+\gamma_2}\right) + (1-\alpha)\cdot 2\right\}. 
\end{equation*}

To obtain the best $\mathbf{c}$, consider solving the following optimization problem: 
\begin{eqnarray*}
    &\min& \quad \mathbf{c} \\
    &\text{s.t.}& \quad 0 \leq \alpha \leq 1 \\
    &\quad& \quad 0 \leq \gamma_1 \leq \gamma_2 \\
    &\quad& \quad \mathbf{c} \geq \alpha\cdot (1+\gamma_1) + (1-\alpha)\cdot (1+\gamma_1) \cdot \frac{1+\gamma_2}{1+\gamma_1} \\
    &\quad& \quad \mathbf{c} \geq \alpha\cdot 2 + (1-\alpha)\cdot \frac{1+\gamma_2}{1+\gamma_1} \\
    &\quad& \quad \mathbf{c} \geq \alpha\cdot \left(1 + \frac{1+\gamma_1}{1+\gamma_2}\right) + (1-\alpha)\cdot 2 
\end{eqnarray*}
Observe that the minimum is still achieved when $\gamma_1 = 0$. Then, let $\beta$ be $\frac{1+\gamma_2}{1+\gamma_1}$. Furthermore, the minimum is strictly below $2$ due to the previous subsection, which is achieved only when $\beta < 2$. Then, it suffices to solve the following optimization problem: 
\begin{eqnarray*}
    &\min& \quad \mathbf{c} \\
    &\text{s.t.}& \quad 0 \leq \alpha \leq 1 \\
    &\quad& \quad 1 < \beta < 2 \\
    &\quad& \quad \mathbf{c} \geq \alpha\cdot 2 + (1-\alpha)\cdot \beta = \beta + \alpha\cdot (2 - \beta) \\
    &\quad& \quad \mathbf{c} \geq \alpha\cdot (1 + \frac{1}{\beta}) + (1-\alpha)\cdot 2 = 2 + \alpha\cdot (\frac{1}{\beta}-1) 
\end{eqnarray*}
For a fixed $\beta$, the minimum is $\frac{3 - \beta}{3-\beta-\frac{1}{\beta}}$, achieved when $\alpha = \frac{2 - \beta}{3-\beta-\frac{1}{\beta}}$. 
Eventually, $\frac{3 - \beta}{3-\beta-\frac{1}{\beta}}$ achieves its minimum $\frac{9}{5}$ when $\beta = \frac{3}{2}$.


\section{Oblivious adversary against randomized online algorithm}

\subsection{Proof of Yao's principle}
\begin{proof} [Proof of Lemma \ref{lem:yao}]
    For simplicity, let $c^*$ denote $\frac{\min_{a^*\in \mathcal{A}^*} \e_X[a^*(X)]}{\e_X[\opt(X)]}$. 
    Take any randomized online algorithm $A$. 
    By Definition \ref{def:comp}, we show that if $\e_A[A(x)] \leq c \cdot \opt(x)$ for any input instance $x\in \mathcal{X}$, then $c \geq c^*$. 
    It suffices to show that there exists an input instance $x\in \mathcal{X}^*$ such that $\e_A[A(x)] \geq c^* \cdot \opt(x)$. 
    We show it by contradiction. 
    Assume $\e_A[A(x)] < c^* \cdot \opt(x)$ for any $x \in \mathcal{X}^*$. 
    Note that since $\mathcal{X}^*$ is finite, $\tilde{c}:= \max_{x\in \mathcal{X}^*} \frac{\e_A[A(x)]}{\opt(x)}$ exists and $\tilde{c} < c^*$. 
    Let $\mu$ and $\nu$ be the probability measures taken by random variable $X$ and $A$ respectively. 
    Since $l$ is a map from $\mathcal{A}$ to $\mathcal{A}^*$, random variable $l(A)$ has the probability measure $\nu^*(a^*) = \int_{l^{-1}(a^*)} \,\text{d}\nu(a)$ for each $a^* \in \mathcal{A}^*$. 
    Then, consider the following contradiction: 
    \begin{eqnarray}
        &\quad& \tilde{c}\cdot \e_X[\opt(X)] \nonumber\\
        &=& \e_X[\tilde{c} \cdot \opt(X)] = \int_{\mathcal{X}^*} \tilde{c} \cdot \opt(x) \,\text{d}\mu(x) \nonumber\\
        &\geq& \int_{\mathcal{X}^*}\e_A[A(x)] \,\text{d}\mu(x) = \int_{\mathcal{X}^*}\int_{\mathcal{A}} a(x) \,\text{d}\nu(a)\,\text{d}\mu(x) \label{eq:yao_1}\\
        &\geq& \int_{\mathcal{X}^*}\int_{\mathcal{A}} l(a)(x) \,\text{d}\nu(a)\,\text{d}\mu(x) = \int_{\mathcal{X}^*}\int_{\mathcal{A}^*} a^*(x) \,\text{d}\nu^*(a^*)\,\text{d}\mu(x) \label{eq:yao_2}\\
        &=& \int_{\mathcal{A}^*}\int_{\mathcal{X}^*} a^*(x) \,\text{d}\mu(x) \,\text{d}\nu^*(a^*) = \int_{\mathcal{A}^*} \e_X[a^*(X)] \,\text{d}\nu^*(a^*) \label{eq:yao_3}\\
        &\geq& \int_{\mathcal{A}^*} c^*\cdot \e_X[\opt(X)] \,\text{d}\nu^*(a^*) = c^*\cdot \e_X[\opt(X)], \label{eq:yao_4}
    \end{eqnarray}
    where the inequality in (\ref{eq:yao_1}) is by the definition of $\tilde{c}$, the inequality in (\ref{eq:yao_2}) is by the definition of $l$, the equality in (\ref{eq:yao_2}) is a change of variable, the first equality in (\ref{eq:yao_3}) is valid due to the finiteness of $\mathcal{X}^*$ and $\mathcal{A}^*$, and the inequality in (\ref{eq:yao_4}) is by the definition of $c^*$. 
\end{proof}

\subsection{Proofs of the oblivious adversary}

\begin{proof}[Proof of Lemma \ref{lem:best_alg}]

    Take any deterministic online algorithm $a$. 
    In the following, we define another deterministic online algorithm $a^*$ in Algorithm \ref{alg:a^*} while simulating $a$. 

\begin{algorithm} 
\caption{$a$ mapped to $a^*$}
\label{alg:a^*}
\algrenewcommand\algorithmicprocedure{\textbf{Upon}}
\small
\begin{algorithmic}[1]
\Procedure{time $T_r$ for some $r = 0,1,\ldots,R-1$}{}
    \If{$a$ starts some job during $[T_r+1, T_r+2)$}
        \State $a^*$ starts all the jobs that can be feasibly started at time $T_r+2$; 
    \ElsIf{$a$ starts some job during $[T_r, T_r+1)$}
        \State $T^*\gets$ the last time $a$ starts some job during $[T_r, T_r+1)$;
        \State $a^*$ starts all the jobs that can be feasibly started at time $T_r + \frac{\floor{(T^* - T_r)K}}{K}$; 
    \EndIf
\EndProcedure
\Procedure{some job that has not been scheduled by $a^*$ reaches its starting deadline}{}
    \State $a^*$ starts all the jobs that can be feasibly started now; 
\EndProcedure
\Procedure{time $T_R$}{}
    \State $a^*$ starts jobs $j^{T_R}_1, j^{T_R}_2, \ldots, j^{T_R}_{K}$ at time $T_R + 2$; 
\EndProcedure
\end{algorithmic}
\end{algorithm}

    First, we show that $a^*$ is a well-defined online algorithm. 
    Line 1,9 are valid commands, because counting the number of rounds is an online behavior and a new round starts as soon as a rigid job is released. 
    Take any $r$. At time $T_r$, note that new information about the input set of jobs is revealed only at or after $T_r+2$, because the next round of jobs $\sigma^{T_{r+1}}_{K}$ can only be released at or after $T_r+2$. 
    Then, for line 2-6, at time $T_r$ it is feasible to know how $a$ schedules jobs during $[T_r, T_r + 2)$. 
    At last, line 7-8 ensures that every job is feasibly scheduled by algorithm $a^*$. 
    
    Next, we show that $a^*$ has no higher cost than $a$ in any sample path. 
    First, we show that $a^*$ has no higher cost than $a$ during $[T_R, \infty)$. 
    Note that the cost of $a^*$ during $[T_R, \infty)$ is exactly $2$ that are during $[T_R, T_R+1]$ and $[T_R+2, T_R+3]$. 
    Indeed $2$ is the minimum cost any algorithm can get during $[T_R, \infty)$ because the processing of job $j^{T_R}_s$ and $T^{T_R}_K$ must be fully within $[T_R, \infty)$ and the two jobs' processing cannot have any overlapping. 
    
    It suffices to show that during $[T_r, T_{r+1}]$, $a^*$ has no higher cost than $a$ for each $r = 0,1,\ldots,R-1$.     
    Consider the case that there exists some job $j$ started by $a$ during $[T_r+1, T_r+2)$. 
    Let $\Gamma_r$ be $\frac{1}{K}$ in this case. 
    Clearly, the cost of $a^*$ during $[T_r, T_{r+1}]$ is either $1$ or $2$ corresponding to either $T_{r+1} = T_r+2$ or $T_{r+1} \geq T_r+3$ respectively. 
    The case of $T_{r+1} = T_r+2$ is trivial. 
    In the case of $T_{r+1} \geq T_r+3$, the cost of $a$ during $[T_r, T_{r+1}]$ is at least $2$ because $a$ is processing the whole job $j^{T_r}_s$ during $[T_r,T_r+1]$ and the whole job $j$ during $[T_r+1, T_r+3]$. 
    Therefore, $a^*$ has no higher cost than $a$ in the case that $a$ starts some jobs during $[T_r+1, T_r+2)$. 

    Next, consider the case that $a$ does not start any job during $[T_r+1, T_r+2)$ and $T^*$ doesn't exist (line 5). 
    Let $\Gamma_r$ be $\frac{1}{K}$ in this case. 
    Both $a$ and $a^*$ starts all the jobs $j^{T_r}_{1}, j^{T_r}_{2}, \ldots, j^{T_r}_{K}$ at time $T_r+2 = s_{j^{T_r}_{\Gamma_r\cdot K}}$, and hence, both algorithms have the same cost. 
    
    The last case is that $a$ does not start any job during $[T_r+1, T_r+2)$ and $T^*$ exists (line 5). 
    Let $\Gamma_r$ be $\frac{\floor{(T^*-T_r)K}+1}{K} \in \{\frac{2}{K}, \frac{3}{K}, \ldots, 1\}$, since $\frac{1}{K} \leq T^*-T_r < 1$. 
    By assumption, $a$ starts jobs $j^{T_r}_{\Gamma_r\cdot K}, j^{T_r}_{\Gamma_r\cdot K + 1}, \ldots, j^{T_r}_{K}$ at or after time $T_r+2$. 
    In the case of $T_{r+1} > s_{j^{T_r}_{\Gamma_r\cdot K}}$, by assumption $a$ starts job $j^{T_r}_{\Gamma_r\cdot K}$ during $[T_r+2, T_{r+1}-1]$. 
    It follows that the cost of $a$ during $[T_r, T_{r+1}]$ is at least $2 + (T^*-T_r)$.
    In the case of $T_{r+1} \leq s_{j^{T_r}_{\Gamma_r\cdot K}}$, clearly the cost of $a$ during $[T_r, T_{r+1}]$ is at least $1 + (T^*-T_r)$. 
    On the other hand, by definition, $a^*$ starts $j^{T_r}_1, j^{T_r}_2, \ldots, j^{T_r}_{\Gamma_r\cdot K-1}$ at time $T_r + \Gamma_r - \frac{1}{K}$ and starts $j^{T_r}_{\Gamma_r\cdot K}, j^{T_r}_{\Gamma_r\cdot K+1}, \ldots, j^{T_r}_{K}$ at time $\min\{T_{r+1}, s_{j^{T_r}_{\Gamma_r\cdot K}}\}$. 
    It follows that the cost of $a^*$ during $[T_r, T_{r+1}]$ is either $2 + \Gamma_r - \frac{1}{K}$ or $1 + \Gamma_r - \frac{1}{K}$ corresponding to either $T_{r+1} > s_{j^{T_r}_{\Gamma_r\cdot K}}$ or $T_{r+1}\leq s_{j^{T_r}_{\Gamma_r\cdot K}}$ respectively. 
    Since $T^* - T_r \geq \Gamma_r - \frac{1}{K}$, we have shown that $a^*$ has no higher cost than $a$ in any case. 

    By definition, observe that $\Gamma_r$ is defined based on the information strictly before time $T_r+2$. All the information strictly before time $T_r+2$ are exactly $\Delta_1, \Delta_2, \ldots, \Delta_r$. 
    Therefore, $\Gamma_r$ is a function of $\Delta_1, \Delta_2, \ldots, \Delta_r$. 
    Eventually, algorithm $a^*$ matches all the descriptions in the lemma. 
\end{proof}

\begin{proof} [Proof of Lemma \ref{lem:finite}]
    Let $\mathcal{A}'$ denote the set of deterministic online algorithms each of which fits the description of $a^*$ in Lemma \ref{lem:best_alg} for some $\Gamma_0, \Gamma_1, \ldots, \Gamma_{R-1}$. 
    Consider the equivalence relation $\sim$ among $\mathcal{A}'$ such that $a_1\sim a_2$ iff $a_1$ and $a_2$ make identical scheduling for each $x\in \mathcal{X}^*_{K,R}$. 
    Let $\mathcal{A}^*_{K,R}$ denote the set of representatives of the equivalence classes of $\mathcal{A}'$. 
    It suffices to show that $\mathcal{A}^*_{K,R}$ is finite. 

    Suppose $D_r$ is the set $\{(\delta_1,\delta_2,\ldots,\delta_r): \delta_{r'} = 2,3,\ldots,K+1, r' = 1,2,\ldots,r\}$ and $C$ is the set $\{\frac{1}{K}, \frac{2}{K}, \ldots, 1\}$. 
    Note that there exists a one-one correspondence between $\mathcal{A}^*_{K,R}$ and $\{(\Gamma_0,\Gamma_1,\ldots,\Gamma_{R-1}): \Gamma_r \in C^{D_r}\}$. 
    It suffices to show that $\{(\Gamma_0,\Gamma_1,\ldots,\Gamma_{R-1}): \Gamma_r \in C^{D_r}\}$ is finite, which is clearly true. 
\end{proof}

\begin{proof} [Proof of Lemma \ref{lem:comp_1}]
Consider the cost of algorithm $a^*$ with parameters $\Gamma_0, \ldots,\Gamma_R$ scheduling $X_{K,R}$ in the following parts
\begin{itemize}
    \item for each $r = 1,2,\ldots,R$, the cost during $[T_{r-1}, T_{r}]$ is $1 + \Gamma_{r-1} - \frac{1}{K} + \mathds{1}\{\Delta_{r}-1 > \Gamma_{r-1}\cdot K\}$, where job $j^{T_{r-1}}_{\Gamma_{r-1}\cdot K - 1}$ is started at its release time $T_{r-1} + \Gamma_{r-1} - \frac{1}{K}$ and the processing of job $j^{T_{r-1}}_{\Gamma_{r-1}\cdot K}$ completes before or at time $T_{r}$ only if $T_{r} > T_{r-1} + \Gamma_{r-1}\cdot K + 1 = s_{j^{T_{r-1}}_{\Gamma_{r-1}\cdot K}}$; 
    \item the cost during $[T_R, \infty)$ is $2$ that are during $[T_R, T_R+1]$ and $[T_R+2, T_R+3]$; 
\end{itemize}
Then, we have the expected algorithm cost
\begin{eqnarray*}
    &\quad& \e\left[2 + \sum_{r = 1,2,\ldots,R} \left(1 + \Gamma_{r-1} - \frac{1}{K} + \mathds{1}\{\Delta_{r}-1 > \Gamma_{r-1}\cdot K\}\right)\right] \\
    &=& 2 + R\cdot \left(1 - \frac{1}{K}\right) + \sum_{r = 1,2,\ldots,R} \e\left[ \Gamma_{r-1} + \mathds{1}\{\Delta_{r}-1 > \Gamma_{r-1}\cdot K\}\right]. 
\end{eqnarray*}
Then, we have
\begin{eqnarray*}
    &\quad& \e\left[ \Gamma_{r-1} + \mathds{1}\{\Delta_{r}-1 > \Gamma_{r-1}\cdot K\}\right] \\
    &=& \sum_{\delta_1,\ldots,\delta_r = 2,3,\ldots,K+1} \left(\Gamma_{r-1}(\delta_1,\ldots,\delta_{r-1}) + \mathds{1}\{\delta_{r}-1 > \Gamma_{r-1}(\delta_1,\ldots,\delta_{r-1})\cdot K\}\right)\cdot \frac{1}{K^r}\\
    &=& \sum_{\delta_1,\ldots,\delta_{r-1} = 2,3,\ldots,K+1} \sum_{\delta_r = 2,3,\ldots,K+1} \left(\Gamma_{r-1}(\delta_1,\ldots,\delta_{r-1}) + \mathds{1}\{\delta_{r}-1 > \Gamma_{r-1}(\delta_1,\ldots,\delta_{r-1})\cdot K\}\right)\cdot \frac{1}{K^r} \\
    &=& \sum_{\delta_1,\ldots,\delta_{r-1} = 2,3,\ldots,K+1} \Gamma_{r-1}(\delta_1,\ldots,\delta_{r-1}) \cdot \frac{1}{K^{r-1}} + (K - \Gamma_{r-1}(\delta_1,\ldots,\delta_{r-1})\cdot K)\cdot \frac{1}{K^r} \\
    &=& \sum_{\delta_1,\ldots,\delta_{r-1} = 2,3,\ldots,K+1} \frac{1}{K^{r-1}} = 1. 
\end{eqnarray*}
After substitution, the expected algorithm cost is $2 + R\cdot \left(2 - \frac{1}{K}\right)$. 

Consider the optimal scheduling of $X_{K,R}$ in the following parts:  
\begin{itemize}
    \item for each $r = 1,2,\ldots,R$, the optimal cost during $[T_{r-1}, T_{r}]$ is $1 + \frac{\Delta_{r}-2}{K}$, where jobs $j^{T_{r-1}}_{1}, j^{T_{r-1}}_{2}, \ldots, j^{T_{r-1}}_{\Delta_r-2}$ are started at time $r_{j^{T_{r-1}}_{\Delta_{r}-2}} = T_{r-1} + \frac{\Delta_{r}-2}{K}$ and job $j^{T_{r-1}}_{\Delta_{r}-1}, j^{T_{r-1}}_{\Delta_{r}}, \ldots, j^{T_{r-1}}_{K}$ are started at time $s_{j^{T_{r-1}}_{\Delta_{r}-1}} = T_{r}$; 
    \item the optimal cost during $[T_R, \infty)$ is $2$, where jobs $j^{T_R}_{1}, j^{T_R}_{2}, \ldots, j^{T_R}_{K}$ are started at time $T_R + 2$; 
\end{itemize}

By the chosen distribution, we have the expected optimal cost $\e\left[ 2 + \sum_{r = 1,2,\ldots,R} 1 + \frac{\Delta_r-2}{K} \right] 
    = 2 + R\cdot\left(1-\frac{1}{K}\right) + R\cdot\frac{\e[\Delta_1-1]}{K} 
    = 2 + R\cdot\left(1-\frac{1}{K}\right) + R\cdot\frac{\frac{1+2+\cdots+K}{K}}{K} 
    = 2 + R\cdot\frac{3K-1}{2K}$. 
\end{proof}

\begin{proof} [Proof of Theorem \ref{thm:lb_1}]
    Since $\mathcal{X}_{K,R}$ and $\mathcal{A}^*_{K,R}$ are both finite, by Lemma \ref{lem:best_alg},\ref{lem:yao} and \ref{lem:comp_1}, the competitiveness of any randomized online algorithm is at least $\frac{\e[a^*(X_{K,R})]}{\e[\opt(X_{K,R})]} = \frac{2 + R\cdot \left(2 - \frac{1}{K}\right)}{2 + R\cdot\frac{3K-1}{2K}}$. 
    By letting $K$ go to infinity, we have the competitiveness of any randomized online algorithm is at least $\frac{2 + 2R}{2 + \frac{3}{2}R}$. 
    By letting $R$ go to infinity, we have the competitiveness of any randomized online algorithm is at least $\frac{4}{3}$. 
\end{proof}
\begin{proof} [Proof of Lemma \ref{lem:comp_2}]
Similar to the proof of Lemma \ref{lem:comp_1}, take any algorithm $a^*$ and consider 
{\small 
\begin{eqnarray*}
    &\quad& \e\left[ \Gamma_{r-1} + \mathds{1}\{\Delta_{r}-1 > \Gamma_{r-1}\cdot K\}\right] \\
    &=& \sum_{\delta_1,\ldots,\delta_r = 2,3,\ldots,K+1} (\Gamma_{r-1}(\delta_1,\ldots,\delta_{r-1}) + \mathds{1}\{\delta_{r}-1 > \Gamma_{r-1}(\delta_1,\ldots,\delta_{r-1})\cdot K\})\cdot \p(\Delta_1=\delta_1, \ldots,\Delta_r = \delta_r)\\
    &=& \sum_{\delta_1,\ldots,\delta_{r-1} = 2,3,\ldots,K+1} \sum_{\delta_r = 2,3,\ldots,K+1} \\
    &\quad& (\Gamma_{r-1}(\delta_1,\ldots,\delta_{r-1}) + \mathds{1}\{\delta_{r}-1 > \Gamma_{r-1}(\delta_1,\ldots,\delta_{r-1})\cdot K\})\cdot \p(\Delta_1=\delta_1, \ldots,\Delta_r = \delta_r) \\
    &=& \sum_{\delta_1,\ldots,\delta_{r-1} = 2,3,\ldots,K+1} \p(\Delta_1=\delta_1, \ldots,\Delta_{r-1} = \delta_{r-1})\cdot \\
    &\quad& \sum_{\delta_r = 2,3,\ldots,K+1} (\Gamma_{r-1}(\delta_1,\ldots,\delta_{r-1}) + \mathds{1}\{\delta_{r}-1 > \Gamma_{r-1}(\delta_1,\ldots,\delta_{r-1})\cdot K\})\cdot \p(\Delta_r = \delta_r) \\
    &=& \sum_{\delta_1,\ldots,\delta_{r-1} = 2,3,\ldots,K+1} \p(\Delta_1=\delta_1, \ldots,\Delta_{r-1} = \delta_{r-1})\cdot \left(\Gamma_{r-1}(\delta_1,\ldots,\delta_{r-1}) + \frac{(\tilde{K} - \Gamma_{r-1}(\delta_1,\ldots,\delta_{r-1})\cdot K)^+}{K}\right) \\
    &\geq& \sum_{\delta_1,\ldots,\delta_{r-1} = 2,3,\ldots,K+1} \p(\Delta_1=\delta_1, \ldots,\Delta_{r-1} = \delta_{r-1})\cdot \frac{\tilde{K}}{K} = \frac{\tilde{K}}{K}. 
\end{eqnarray*}
}
After substitution, the expected algorithm cost is $2 + R\cdot \left(1 - \frac{1}{K}\right) + \sum_{r = 1,2,\ldots,R} \e\left[ \Gamma_{r-1} + \mathds{1}\{\Delta_{r}-1 > \Gamma_{r-1}\cdot K\}\right] \geq 2 + R\cdot \left(1 + \frac{\tilde{K}-1}{K}\right)$. 

For the optimal cost, we have $\e[\Delta_r-1] = \frac{1+K-\tilde{K}}{K} + \frac{1}{K}(2+3+\cdots+\tilde{K}) = \frac{1}{K}(K-\tilde{K}+\frac{\tilde{K}(1+\tilde{K})}{2}) = 1 + \frac{\tilde{K}(\tilde{K-1})}{2K}$. 
After substitution, the expected optimal cost is $\e\left[ 2 + \sum_{r = 1,2,\ldots,R} 1 + \frac{\Delta_r-2}{K} \right] = 2 + R\cdot\left(1-\frac{1}{K}\right) + R\cdot\frac{1 + \frac{\tilde{K}(\tilde{K}-1)}{2K}}{K} = 2 + R\cdot \left(1 + \frac{\tilde{K}(\tilde{K}-1)}{2K^2}\right)$. 
\end{proof}
\begin{proof} [Proof of Theorem \ref{thm:low_best}]
    Let $p$ be any number between $0$ and $1$. 
    Let $\tilde{K}$ be $\floor{p\cdot K}$. 
    Since $\mathcal{X}_{K,R}$ and $\mathcal{A}^*_{K,R}$ are both finite, by Lemma \ref{lem:best_alg}, Lemma \ref{lem:yao}, and \ref{lem:comp_2}, the competitiveness of any randomized online algorithm is at least $\frac{2 + R\cdot \left(1 + \frac{\tilde{K}-1}{K}\right)}{2 + R\cdot \left(1 + \frac{\tilde{K}(\tilde{K}-1)}{2K^2}\right)}$. 
    Note that the term $\frac{2 + R\cdot \left(1 + \frac{\tilde{K}-1}{K}\right)}{2 + R\cdot \left(1 + \frac{\tilde{K}(\tilde{K}-1)}{2K^2}\right)} = \frac{2 + R\cdot \left(1 + \frac{\floor{p\cdot K}-1}{K}\right)}{2 + R\cdot \left(1 + \frac{\floor{p\cdot K}(\floor{p\cdot K}-1)}{2K^2}\right)}$ goes to $\frac{2+R\cdot (1+p)}{2+R\cdot (1+\frac{p^2}{2})}$, by letting $K$ go to infinity. 
    Then, we have the competitiveness of any randomized online algorithm is at least $\frac{2+R\cdot (1+p)}{2+R\cdot (1+\frac{p^2}{2})}$. 
    By letting $R$ go to infinity, we have the competitiveness of any randomized online algorithm is at least $\frac{1+p}{1+\frac{p^2}{2}}$. 
    Finally, $\frac{1+p}{1+\frac{p^2}{2}}$ reaches its maximum $\frac{1+\sqrt{3}}{2}$ by letting $p = \sqrt{3}-1$. 
\end{proof}

\section{Algorithm that restarts jobs}

\subsection{Lower bounds}
\begin{proof} [Proof of Lemma \ref{lem:best_alg_restart}]
    Consider defining $a$ in Algorithm \ref{alg:a} while simulating $a_r$. 
    We show that algorithm $a$ is a well defined online algorithm, and then, $a(x) \leq a_r(x)$ for each $x\in \mathcal{X}^*_{K,R}$. 
    
\begin{algorithm} 
\caption{$a_r$ mapped to $a$}
\label{alg:a}
\algrenewcommand\algorithmicprocedure{\textbf{Upon}}
\begin{algorithmic}[1]
\For {$r = 0,1,\ldots,R$}
    \Procedure{time $T_r$}{}
        \State $a$ starts all the jobs that can be feasibly started now;  
        \If{$a_r$ is always processing jobs during $[T_r,T_r+2)$}
            \State  $a$ starts all the jobs $j^{T_r}_1,\ldots,j^{T_r}_K$ at time $T_r+2$; 
        \Else
            \State $t\gets$ the first moment during $[T_r+1,T_r+2)$ when $a_r$ becomes idle;   
            \State $a$ starts all the jobs that can be feasibly started at time $t-1$;  
        \EndIf
    \EndProcedure
    \For{$k = 1,2,\ldots,\Delta_{r+1}-1$}
        \Procedure{time $T_r+k+1$}{}
            \If {$a_r$ finishes processing some job $j$ during $(T_r+k+1, T_r+k+2]$}
                \State $a$ starts all the jobs that can be feasibly started now; 
            \EndIf 
        \EndProcedure
    \EndFor
\EndFor
\end{algorithmic}
\end{algorithm}    

First, we show that $a$ is online. 
Since $a_r$ is an online algorithm, all the scheduling decisions $a_r$ make at time $t$ depends only on what happened at and before time $t$. 
Note that all the decisions $a_r$ make before time $T_r + 2$ depends only on $\Delta_1,\Delta_2,\ldots,\Delta_r$ because $T_{r+1}\geq T_r + 2$ almost surely. 
Since $\Delta_1,\Delta_2,\ldots,\Delta_r$ have all been revealed at time $T_r$, it is feasible for $a$ at time $T_r$ to know how $a_r$ processes jobs during $[T_r,T_r+2)$ (line 4 and 6). 
Similarly, all the available information about the input job instance at time $T_r + k + 1$ equal to that at any time $t\in (T_r+k+1,T_r+k+2)$ for $k = 1,2,\ldots,\Delta_{r+1}-1$. 
Then, it is feasible for $a$ at time $T_r+k+1$ to know all the decisions $a_r$ make during $[0,T_r+k+2)$ and hence to know whether $a_r$ finishes processing a job during $(T_r+k+1,T_r+k+2]$ (line 11). 

Next, we ensure that every job is feasibly scheduled by $a$. 
First, we show that if $a$ schedules any job (line 3,5,8,12), then the job's processing is fully within its window. 
Since all the jobs $j^{T_r}_1,\ldots,j^{T_r}_K$ are released before $T_r+1$ and have starting deadline at or after $T_r+2$, line 5 is feasible. Line 3,8,12 are feasible, clearly. 
Second, we show $a$ does not miss any job. 
At time $T_r$ for each $r$, line 3 starts job $j^{T_r}_s$ at time $T_r$, so it suffices to consider $j^{T_r}_1,\ldots,j^{T_r}_K$ for each $r$. 
Furthermore, due to line 5, it suffices to consider the case that $a_r$ is not always processing jobs during $[T_r,T_r+2)$. 
Suppose $t$ is the first moment during $[T_r+1,T_r+2)$ when $a_r$ becomes idle. 
Take any $k = 1,2,\ldots,K$. Let $t^*+1$ denote when $a_r$ finish processing $j^{T_r}_k$. 
By line 6-8, if $a$ does not start job $j^{T_r}_k$ at time $t-1$, then $j^{T_r}_k$ must be released after $t-1$, and hence $a_r$ cannot finish processing $j^{T_r}_k$ at or before time $t$. Since $a_r$ is not processing any job at time $t$, we have $t^* > t \geq T_r+1$. 
Then, we have $t^*+1 > T_r+2$. 
If $t^*+1\in (T_r + k + 1, T_r + k + 2]$ for some $k = 1,2,\ldots,\Delta_{r+1}-1$, then we have $d_{j^{T_r}_k}\geq T_r+k+2$ because jobs' deadline are all integral. 
Thus, it is feasible for $a$ to start job $j$ at time $T_r+k+1$, and hence $a$ starts job $j^{T_r}_k$ at time $T_r+k+1$ due to line 12. 
If $t^* + 1> T_{r+1}+1$, then line 3 ensures that $a$ must start job $j^{T_r}_k$ at or before time $T_{r+1}$. 

It remains to show $a(x) \leq a_r(x)$ for each $x\in \mathcal{X}^*_{K,R}$. 
Fix $x\in \mathcal{X}^*_{K,R}$. 
It suffices to show that the cost of $a$ is no higher than the cost of $a_r$ during $[T_r,T_{r+1})$ for any $r = 0,1,\ldots,R$ where $T_{R+1} = \infty$. 
If $a_r$ is always processing jobs during $[T_r,T_r+2)$, then the cost of $a_r$ is at least $2$ that are during $[T_r,T_r+2]$ and the cost of $a$ is at most $2$ that are during $[T_r,T_r+1]$ and $[T_r+2,T_r+3]$ due to line 5. 

If $a_r$ is not always processing jobs during $[T_r,T_r+2)$, suppose $t$ is the first moment during $[T_r+1,T_r+2)$ when $a_r$ becomes idle. 
Both $a$ and $a_r$ have the same cost of $t - T_r$ during $[T_r,t]$. 
In this case, suppose $j^{T_r}_k$ is the first job completed by $a_r$ after time $t+1$, and say the completion time of $j^{T_r}_k$ is $t^*+1$. 
By definition, $a$ starts the remaining of $j^{T_r}_1,\ldots,j^{T_r}_K$ at time $\ceil{t^*}$ as in line 12. 
Therefore, if there is a positive cost of $a$ during $[t,T_{r+1}]$, then the cost is exactly $1$ and $\ceil{t^*} + 1 \leq T_{r+1}$. 
Meanwhile, the cost of $a_r$ during $[t,T_{r+1}]$ is at least $1$ due to some job $j^{T_r}_k$ during $[t^*,t^*+1]$. 
\end{proof}

\begin{proof} [Proof of Lemma \ref{lem:restart_lb}]

Partition $\{0,1,\ldots,R-1\}$ into $S_1:=\{r: \Gamma_{r} > \phi\}$ and $S_2:=\{r: \Gamma_{r} \leq \phi\}$. 
For each $r\in S_1$, the cost of $a^*$ during $[T_r,T_{r+1}]$ is $1 + \Gamma_r - \frac{1}{K}$ that are during $[T_r,T_r + \Gamma_r - \frac{1}{K}+1] \subset [T_r, T_r+2] = [T_r,T_{r+1}]$, since $\Delta_{r+1} = 2$ by definition. 
For each $r\in S_2$, the cost of $a^*$ during $[T_r,T_{r+1}]$ is $1 + \Gamma_r - \frac{1}{K} + 1$ that are during $[T_r,T_r + \Gamma_r - \frac{1}{K}+1]$ and $\left[s_{j^{T_r}_{\Gamma_r\cdot K}}, s_{j^{T_r}_{\Gamma_r\cdot K}}+1\right] = \left[s_{j^{T_r}_{\Gamma_r\cdot K}}, s_{j^{T_r}_{\Gamma_r\cdot K+1}}\right] = \left[s_{j^{T_r}_{\Gamma_r\cdot K}}, T_{r+1}\right]$. 
The cost of $a^*$ during $[T_R,\infty)$ is $2$ that are during $[T_R, T_{R}+1]$ and $[T_{R}+2,T_{R}+3]$ due to Lemma \ref{lem:best_alg}. 

Consider the following scheduling denoted by $b$. 
For each $r\in S_1\cup \{R\}$, $b$ starts jobs $j^{T_r}_1,j^{T_r}_2,\ldots,j^{T_r}_K$ at time $s_{j^{T_r}_1} = T_r+2 = T_{r+1}$. 
For each $r\in S_2$, $b$ starts jobs $j^{T_r}_1, j^{T_r}_2, \ldots, j^{T_r}_{\Gamma_r\cdot K}$ at time $r_{j^{T_r}_{\Gamma_r\cdot K}}$ and jobs $j^{T_r}_{\Gamma_r\cdot K+1}, j^{T_r}_{\Gamma_r\cdot K+2}, \ldots, j^{T_r}_{K}$ at time $s_{j^{T_r}_{\Gamma_r\cdot K+1}} = T_{r+1}$. 
Clearly, for each $r\in S_1$, the cost of $b$ during $[T_r,T_{r+1}]$ is $1$ that are during $[T_r,T_r + 1]$ by definition. 
For each $r\in S_2$, the cost of $b$ during $[T_r,T_{r+1}]$ is $1 + \Gamma_r$ that are during $[T_r,T_r + \Gamma_r +1]$. 
The cost of $b$ during $[T_R,\infty)$ is $2$. 

Let $c$ denote $\min\left\{1 + \phi, \frac{2+\phi}{1+\phi}\right\}$. 
Note that for each $r\in S_2$, we have $\frac{2 + \Gamma_r}{1 + \Gamma_r} \geq \frac{2+\phi}{1+\phi}$ since $\Gamma_r \leq \phi$. 
Then, we have the competitiveness of $a^*$ is at least 
\begin{eqnarray*}
    &\quad& \frac{\left(\sum_{r\in S_1} 1 + \Gamma_r - \frac{1}{K}\right) + \left(\sum_{r\in S_2} 2 + \Gamma_r - \frac{1}{K}\right) + 2}{|S_1| + \left(\sum_{r\in S_2} 1 + \Gamma_r\right) + 2} \\
    &\geq{}& \frac{\left(\sum_{r\in S_1} 1 + \Gamma_r \right) + \left(\sum_{r\in S_2} 2 + \Gamma_r \right)}{|S_1| + \left(\sum_{r\in S_2} 1 + \Gamma_r\right) + 2} - \frac{R/K}{|S_1| + |S_2|} \\
    &={}& \frac{\left(\sum_{r\in S_1} 1 + \Gamma_r \right) + \left(\sum_{r\in S_2} 2 + \Gamma_r \right)}{|S_1| + \left(\sum_{r\in S_2} 1 + \Gamma_r\right) + 2} -  \frac{1}{K} \\
    &\geq{}& \frac{\left(\sum_{r\in S_1} c \right) + \left(\sum_{r\in S_2} c\cdot (1 + \Gamma_r) \right)}{|S_1| + \left(\sum_{r\in S_2} 1 + \Gamma_r\right)}\cdot \frac{|S_1| + \left(\sum_{r\in S_2} 1 + \Gamma_r\right)}{|S_1| + \left(\sum_{r\in S_2} 1 + \Gamma_r\right) + 2} -  \frac{1}{K} \\
    &\geq{}& c\cdot \frac{R}{R + 2} -  \frac{1}{K}. 
\end{eqnarray*} 
\end{proof}

\begin{proof} [Proof of Theorem \ref{thm:restart_lb}]
    By Lemma \ref{lem:restart_lb}, the competitiveness of $a_r$ is at least $c\cdot \frac{R}{R + 2} -  \frac{1}{K}$. 
    Since $a_r$ is taken as any deterministic online algorithm that restarts jobs, the competitiveness of any deterministic online algorithm that restarts jobs is at least $c\cdot \frac{R}{R + 2} -  \frac{1}{K}$. 
    Since $R$ and $K$ can be arbitrary large, we have the competitiveness of any deterministic online algorithm that restarts jobs is at least $c$, where $c = \min\left\{1 + \phi, \frac{2+\phi}{1+\phi}\right\}$ reaches maximum $\frac{\sqrt{5}+1}{2}$ when $1 + \phi = \frac{2+\phi}{1+\phi}$. 
\end{proof}

\subsection[Analysis of Best Batch]{Analysis of Best Batch $\bb^\gamma$}

\begin{proof} [Proof of Lemma \ref{lem:hat_f}]
    We show it by induction. 
    The set of available job may change in three cases: a new job $j$ is released (line \ref{alg_line:release_begin}), the set of available jobs are started (line \ref{alg_line:flag_start}), and some jobs are stopped (line \ref{alg_line:stop}). 
    It suffices to show that in each case above, the value of $\hat{f}$ is updated correctly. 
    
    Case 1: upon $j$ is released (line \ref{alg_line:release_begin}). 
    It suffices to consider the case that $j$ is not started as in line \ref{alg_line:start_primary},\ref{alg_line:start_tight}, \ref{alg_line:start_secondary}, i.e., job $j$ becomes available after being released. 
    Assume that the value of $\hat{f}$ upon $j$ is released before line \ref{alg_line:f_gets_1} is the job that has the earliest starting deadline among all the jobs that are available at time $r_j$ except job $j$. 
    We show that the value of $\hat{j}$ upon $j$ is released after line \ref{alg_line:f_gets_1} is the job that has the earliest starting deadline among all the jobs that are available at time $r_j$ including job $j$. 
    Since the new value of variable $\hat{f}$ takes the job with the earlier starting deadline between $j$ and its old value in line \ref{alg_line:f_gets_1}, we are done. 
    
    Case 2: time $d_p$ for some primary job $p$. 
    Assume that $\hat{f}$ upon time $d_p$ before line \ref{alg_line:f_gets_2} is the job that has the earliest starting deadline among all the jobs that are available at time $d_p$ except jobs $\{j\in \mathcal{J}^p: r_j > r_{j^p_{k^p-1}}\}$ that are stopped in line \ref{alg_line:stop}. 
    We show that $\hat{f}$ upon time $d_p$ after line \ref{alg_line:f_gets_2} is the job that has the earliest starting deadline among all the jobs that are available at time $r_j$ including jobs $\{j\in \mathcal{J}^p: r_j > r_{j^p_{k^p-1}}\}$. 
    Note that $j^p_{k^p}$ has the earliest starting deadline among jobs $\{j\in \mathcal{J}^p: r_j > r_{j^p_{k^p-1}}\}$. 
    This is because each job $j$ in $\mathcal{J}^p$ such that $r_j > r_{j^p_{k^p-1}}$ satisfies either $j\in \{j^p_{k^p}, j^p_{k^p+1},\ldots,j^p_{K}\}$ or $[r_j,d_j]\supset [r_{j^p_{k}},d_{j^p_{k}}]$ for some $k = k^p, k^p+1,\ldots, K$. 
    It follows that the new value of $\hat{f}$ as the job with the smaller starting deadline between $j^p_{k^p}$ and its old value has the earliest starting deadline among all the jobs available at time $d_p$ including $\{j\in \mathcal{J}^p: r_j > r_{j^p_{k^p-1}}\}$. 
    
    Case 3: time $s_{\hat{f}}$. 
    $\hat{f}$ takes value $NULL$ at time $s_{\hat{f}}$ as in line \ref{alg_line:f_gets_3} after line \ref{alg_line:flag_start}. 
    Since line \ref{alg_line:flag_start} starts all the available jobs at at time $s_{\hat{f}}$, the set of available jobs becomes empty after line \ref{alg_line:flag_start} at time $s_{\hat{f}}$. 
    Therefore, $\hat{f}$ should not be any job at time $s_{\hat{f}}$ after line \ref{alg_line:f_gets_3}. 
\end{proof}

\begin{proof} [Proof of Lemma \ref{lem:feasible}]
    Here, we show (i). 
    Assume $j$ is not started during $[t, s_j]$ for the sake of contradiction. 
    Then, job $j$ is available to be started during $[t, s_j]$. 
    Take any time instant $t'\in [t, s_j]$. 
    Let $f$ denote the value of $\hat{f}$ at time $t'$. 
    By Lemma \ref{lem:hat_f}, we have $f\neq NULL$ and $s_f \leq s_j$. 
    It follows that at time $s_f$, job $j$ is available to be started. 
    By algorithm definition, line \ref{alg_line:flag_start} says job $j$ must be started at some time during $s_f\in [t,s_j]$ (contradiction). 
    
    For (ii), suppose job $j$ is stopped at time $d_p$ due to primary job $p$. 
    Note that job $j$ must be in $\mathcal{J}^p$.  
    Note that $C^p_{k} \leq d_{j^p_{K^p}} - s_p$ for each $k = 1,2,\ldots,K^p$ by definition of $C^p_{k}$ (line \ref{alg_line:key}). 
    Furthermore, note that $C^p_{k} < d_{j^p_{K^p}} - s_p$ iff $1 + r_{j^p_{k-1}} < s_{j^p_{k}}$ where $r_{j^p_{0}} = s_p$. 
    Since line \ref{alg_line:stop} happens only when $p$ is not tight (line \ref{alg_line:separated}), we have $C^p_{k^p} < d_{j^p_{K^p}} - s_p$, and then, we have $1 + r_{j^p_{k^p-1}} < s_{j^p_{k^p}}$. 
    Note that $d_p \leq 1 + r_{j^p_{k^p-1}} < s_{j^p_{k^p}}$. 
    $j$ is stopped due to $p$ (line \ref{alg_line:stop}) only if $r_j > r_{j^p_{k^p-1}}$. 
    It follows that $j$ is either $j^p_k$ for some $k = k^p, k^p+1, \ldots, K$ or $[r_j,d_j]\supset [r_{j^p_k}, d_{j^p_k}]$ for some $k = k^p, k^p+1, \ldots, K$ as defined in line \ref{alg_line:critical_jobs}. 
    In either of the cases, we have $s_j \geq s_{j^p_{k^p}} > d_p$. 
\end{proof}

\begin{proof} [Proof of Lemma \ref{lem:agreeable}]
    We prove all the statements by induction on the value of variable $x$. 
    Take any $y = 1,2,\ldots,|\mathcal{P}|$. 
    Assume all the statements hold for $x = 1,2,\ldots,y-1$, and we show all the statements hold for $x = y$. 

    Here, we show (viii). 
    Let $\mathcal{F}$ denote the set of flag jobs among $p_1,j^{p_1}_{k^{p_1}}, p_2,j^{p_2}_{k^{p_2}}, \ldots, p_{y-1},j^{p_{y-1}}_{k^{p_{y-1}}}, p_y$. 
    By Induction Hypothesis (IH), the jobs in $\mathcal{F}$ are rewritten as $p_1$,$j^{p_1}_{k^{p_1}}$, $p_2$,$j^{p_2}_{k^{p_2}}$, $\ldots$, $p_{y-1}$, $j^{p_{y-1}}_{k^{p_{y-1}}}$, $p_y$ in the time order of being marked as flag jobs where $j^{p_1}_{k^{p_1}}, j^{p_2}_{k^{p_2}}, \ldots, j^{p_{y-1}}_{k^{p_{y-1}}}$ may or may not exist or marked as a flag job. 
    Note that a job is started only in line \ref{alg_line:flag_start}, \ref{alg_line:start_primary}, \ref{alg_line:start_tight}, \ref{alg_line:start_secondary} due to some flag job. 
    By Induction Hypothesis (IH), one may check all the cases and see that job $p_y$ is not started due to any flag job in $\mathcal{F}\setminus \{p_y\}$. 
    By algorithm's definition, if $p_y$ started due to some flag job $f$, then $p_y$ is started due to some flag job $f$ that is marked as a flag job before $p_y$. 
    Since any flag job $f\notin \mathcal{F}$ is marked as a flag job later than $p_y$,  
    we have $p_y$ is available to be scheduled since its release and never started until time $s_{p_y}$. 
    In line \ref{alg_line:flag_begin}, job $p_y$ is available to be started at time $s_{p_y}$, and hence started at time $s_{p_y}$ as in line \ref{alg_line:flag_start}. 
    By Lemma \ref{lem:feasible} (ii), $p_y$ is never stopped and hence completed at its deadline $d_{p_y}$. 

    Here, we show (v), i.e., no job is not stopped during $[r_{p_y}, d_{p_y})$. 
    Note that jobs are only stopped at time $d_p$ for some primary flag job $p$. 
    By IH, we have $r_{p_y} > d_{p_x}$ for any $x\leq y-1$. 
    On the other hand, we have $d_{p_y} \leq d_{p_x}$ for any $x\geq y+1$. 
    Therefore, no job can be stopped during $[r_{p_y}, d_{p_y})$. 

    Here we show (vi). 
    Since all the jobs are feasibly scheduled, it suffices to look at jobs whose starting deadlines are after time $s_{p_y}$. 
    Take any job $j$ such that $r_j \leq s_{p_y}$ and $s_j > s_{p_y}$. 
    Either job $j$ is available to be started at time $s_{p_y}$ due to $p_y$ or not, as in line \ref{alg_line:flag_start}. 
    If yes, job $j$ is started at time $s_{p_y}$ due to $p_y$ by definition, and then completed at time $d_{p_y}$, because no job is stopped during $[s_{p_y}, d_{p_y})$. 
    If no, then either job $j$ has already been completed before or at time $s_{p_y}$, or, job $j$ has been started before or at time $s_{p_y}$ and is processed at time $s_{p_y}$. 
    Since no job is stopped during $[s_{p_y}, d_{p_y})$, in any of the cases, job $j$ must be completed before or at time $d_{p_y}$. 

    We start to show (vii). 
    Claim 1: $r_{p_x} > s_{p_y}$ for any $x\geq y+1$. 
    At time $s_{p_y}$, variable $\hat{f}$ takes the value $NULL$ in line \ref{alg_line:f_gets_3}. 
    Since all the available jobs are started at time $s_{p_y}$, after being $NULL$, $\hat{f}$ can only be some job released after $s_{p_y}$ either in line \ref{alg_line:f_gets_2} or \ref{alg_line:f_gets_1}, because IH guarantees that $\hat{f}$ cannot be updated to be any secondary job attached to $p_x$ for any $x\leq y-1$. 
    The claim follows, because $\hat{f}$ has to be $p_x$ at time $s_{p_x}$ for any $x\geq y+1$. 
    
    Since any secondary job attached to $p_x$ must be released after $s_{p_x}$, any secondary job attached to $p_x$ for some $x\geq y$ must be released after time $s_{p_y}$. 
    We have that all the jobs $p_y, p_{y+1},\ldots,p_{|\mathcal{P}|}$ and their secondary jobs (if exists) are released at or after $r_{p_y}$. 

    Take any job $j$ such that $r_j \geq r_{p_y}$. 
    Take any primary flag job $p_x$ with $x\leq y-1$. 
    If line \ref{alg_line:critical_case} does not hold for $p_x$, then IH (i) says $d_{p_x} < r_{p_{x+1}}$, and hence, we have $d_{p_x} < r_{p_{x+1}} \leq r_{p_y} \leq r_j$. 
    By definition (line \ref{alg_line:flag_start}, \ref{alg_line:start_primary}), job $j$ cannot be started due to $p_x$, since $p_x$ is not tight. 
    If line \ref{alg_line:critical_case} holds for $p_x$ and $p_x$ is tight, then IH (ii) says $1 + r_{j^{p_x}_{K^{p_x}}} < r_{p_{x+1}} \leq r_{p_y} \leq r_j$. 
    By definition (line \ref{alg_line:flag_start}, \ref{alg_line:start_primary}, \ref{alg_line:start_tight}), job $j$ cannot be started due to $p_x$. 
    If line \ref{alg_line:critical_case} holds for $p_x$ and $p_x$ is not tight, then IH (iii) says $d_{p_x} < r_{p_{x+1}} \leq r_{p_y} \leq r_j$. 
    By definition (line \ref{alg_line:flag_start}, \ref{alg_line:start_primary}), job $j$ cannot be started due to $p_x$.
    Take any secondary flag job $j^{p_{x}}_{k^{p_x}}$ with $x\leq y-1$. 
    By IH (iv), we have $s_{j^{p_x}_{k^{p_x}}} + (r_{j^{p_x}_{K^{p_x}}} - r_{j^{p_x}_{k^{p_x}-1}}) < r_{p_{x+1}}\leq r_{p_y} \leq r_j$. 
    By definition (line \ref{alg_line:flag_start}, \ref{alg_line:start_secondary}), job $j$ cannot be started due to $j^{p_x}_{k^{p_x}}$. 

    Claim 2: $r_{p_{y+1}}\notin (s_{p_y}, d_{p_y}]$. 
    Assume that $r_{p_{y+1}}\in (s_{p_y}, d_{p_y}]$ and we show a contradiction. 
    Note that $p_{y+1}$ is a flag job, and hence, variable $\hat{f}$ must be $p_{y+1}$ at time $s_{p_{y+1}}$. 
    Since $r_{p_{y+1}}\in (s_{p_y}, d_{p_y}]$, line \ref{alg_line:f_gets_1} cannot happen to $p_{y+1}$ because $p_{y+1}$ is started as it is released in line \ref{alg_line:start_primary}. 
    But since $p_{y+1}$ is primary, $p_{y+1}$ is never marked as secondary. 
    It follows that line \ref{alg_line:f_gets_2} cannot happen to $p_{y+1}$ either. 
    Therefore, we have shown that $r_{p_{y+1}}\in (s_{p_y}, d_{p_y}]$ cannot hold. 
    
    It follows from Claim 1 and 2 that (i) holds for $x = y$. 

    Here, we show (ii) holds for $x = y$. 
    By Claim 1 and 2, it remains to show that $r_{p_{y+1}}\in (d_{p_y}, 1 + r_{j^{p_y}_{K^{p_y}}}]$ does not hold. 
    Note that $\hat{f}$ must be updated to be $p_{y+1}$. 
    Since $r_{p_{y+1}}\in (d_{p_y}, 1 + r_{j^{p_y}_{K^{p_y}}}]$, $p_{y+1}$ is started at its release time in line \ref{alg_line:start_tight} and hence $\hat{f}$ cannot be $p_{y+1}$ at $p_{y+1}$'s release as in line \ref{alg_line:f_gets_1}. 
    Since $p_{y+1}$ is primary, $p_{y+1}$ is never marked as secondary, and hence, line  \ref{alg_line:f_gets_2} cannot happen to $p_{y+1}$ either. 
    Therefore, $r_{p_{y+1}} > 1 + r_{j^{p_y}_{K^{p_y}}}$ must hold. 

    Here, we show the first part of (iii) holds for $x = y$. 
    By IH (vi), all the secondary flag jobs attached to $p_x$ with $x\leq y-1$ are marked as flag jobs before $p_y$. 
    The secondary flag jobs attached to $p_x$ must be completed after $p_x$ for each $x\geq y+1$. 
    Thus, the flag job (marked as a flag job) right next to $p_y$ can only be $p_{y+1}$, since $p_y$ does not have a secondary flag job. 
    It follows from Claim 1 and 2 that the second part of (iii) holds for $x = y$. 

    Here, we show the first part of (iv) holds for $x = y$. 
    Note that the flag job right next to $p_y$, say $f$, is either $p_{y+1}$ or $j^{p_y}_{k^{p_y}}$.
    Claim 1 and 2 says $r_{p_{y+1}} > d_{p_y}$. 
    It follows that no job can be stopped during $[s_{p_{y+1}}, d_{p_{y+1}})$, because $d_{p_x} < s_{p_{y+1}}$ for each $x\leq y$ and $d_{p_x}\geq d_{p_{y+1}}$ for each $x\geq y+1$. 
    Assume the contrary that $f$ is $p_{y+1}$ for the sake of contradiction. 
    Then, job $j^{p_y}_{k^{p_y}}$ becomes available to be started at time $d_{p_y}$ and is kept available until time $s_{p_{y+1}}$, because (vii) guarantees that $j^{p_y}_{k^{p_y}}$ and $p_{y+1}$ cannot be started due to any of $p_1,p_2,\ldots,p_{y-1}$ or their secondary flag jobs, and $p_{y+1}$ is the flag job right next to $p_y$. 
    It follows that $s_{j^{p_y}_{k^{p_y}}} \geq s_{p_{y+1}}$, which implies variable $\hat{f}$ cannot be $j^{p_y}_{k^{p_y}}$ since $p_{y+1}$ is released. 
    It contradicts to the fact that $j^{p_y}_{k^{p_y}}$ is a flag job that says $\hat{f}$ is $j^{p_y}_{k^{p_y}}$ at time $s_{j^{p_y}_{k^{p_y}}}$. 

    Here, we show the second part of (iv) holds for $x=y$. 
    By the first part, We have the flag job right next to $p_y$ is $j^{p_y}_{k^{p_y}}$. 
    By (vii), we have $j^{p_y}_{k^{p_y}}$ becomes available since time $d_{p_y}$ and remains available until time $s_{j^{p_y}_{k^{p_y}}}$. 
    It follows that $j^{p_y}_{k^{p_y}}$ is started at time $s_{j^{p_y}_{k^{p_y}}}$ as in line \ref{alg_line:flag_start}. 
    By Lemma \ref{lem:feasible}, job $j^{p_y}_{k^{p_y}}$ is guaranteed to be completed at time $d_{j^{p_y}_{k^{p_y}}}$. 
    
    Here, we show the third part of (iv) holds for $x = y$. 
    By Claim 1 and 2, it suffices to show that $r_{p_{y+1}}\notin (d_{p_y}, s_{j^{p_y}_{k^{p_y}}} + (r_{j^{p_y}_{K^{p_y}}} - r_{j^{p_y}_{k^{p_y}-1}})]$.
    Assume $r_{p_{y+1}}\in (d_{p_y}, s_{j^{p_y}_{k^{p_y}}} + (r_{j^{p_y}_{K^{p_y}}} - r_{j^{p_y}_{k^{p_y}-1}})]$ for the sake of contradiction. 
    (vii) guarantees that $p_{y+1}$ can only be started due to $p_y$ or some flag job after $p_y$. 
    By definition, $p_{y+1}$ cannot be started due to $p_y$ since $r_{p_{y+1}} > d_{p_y}$. 
    Since $j^{p_y}_{k^{p_y}}$ is the flag job right next to $p_y$, we have $p_{y+1}$ is available to be started during $[r_{p_{y+1}},s_{j^{p_y}_{k^{p_y}}}]$. 
    By definition (line \ref{alg_line:flag_start}), job $p_{y+1}$ is started at time $s_{j^{p_y}_{k^{p_y}}}$. 
    Since $p_{y+1}$ is a flag job, variable $\hat{f}$ must be $p_{y+1}$ at time $s_{p_{y+1}}$. 
    At time $s_{j^{p_y}_{k^{p_y}}}$, $\hat{f}$ is reset to be $NULL$ in line \ref{alg_line:f_gets_3}. 
    It follows that $\hat{f}$ is updated to be $p_{y+1}$ during $[s_{j^{p_y}_{k^{p_y}}}, s_{p_{y+1}}]$. 
    Clearly, the updating cannot happen in line \ref{alg_line:f_gets_2} because $p_{y+1}$ is primary and cannot be marked as secondary. It remain to consider the case of line \ref{alg_line:f_gets_1} for the updating. 
    Since the only effective updating is during $[s_{j^{p_y}_{k^{p_y}}}, s_{p_{y+1}}]$, it suffices to consider the case of $r_{p_{y+1}}\in [s_{j^{p_y}_{k^{p_y}}}, s_{j^{p_y}_{k^{p_y}}} + (r_{j^{p_y}_{K^{p_y}}} - r_{j^{p_y}_{k^{p_y}-1}})]$. 
    By algorithm's definition, $p_{y+1}$ is started at its release time as in line \ref{alg_line:start_secondary} and hence cannot update the value of $\hat{f}$ (contradiction). 
\end{proof}

\begin{proof} [Proof of Lemma \ref{lem:not_close}]
    Suppose $p$ is a primary tight job. 
    We have $C^p_{k} = d_{j^p_{K^p}} - s_{p}$ for each $k = 1,2,\ldots,K^p$. 
    It follows that $C^p_{K^p} = d_{j^p_{K^p}} - s_{p}$. 
    It follows that $r_{j^p_{K^p-1}} + 1 \geq s_{j^p_{K^p}}$.
    Then, we have $1 + r_{j^p_{K^p}} > 1 + r_{j^p_{K^p-1}} \geq s_{j^p_{K^p}}$. 

    Suppose $j^p_{k^p}$ is the secondary flag job attached to $p$.
    Since primary job $p$ has a secondary job, $p$ must not be tight. 
    Then, we have $C^p_{k^p} = \min_{k = 1,2,\ldots,K^p} C^p_{k} < d_{j^p_{K^p}} - s_{p}$. 
    It follows that $(r_{j^p_{k^p-1}} + 1 - s_p ) + (d_{j^p_{K^p}} - s_{j^p_{k^p}}) = C^p_{k^p} \leq C^p_{K^p} \leq 2 + r_{j^p_{K^p-1}} - s_p$. 
    After rearranging the terms, we have $s_{j^p_{K^p}} \leq r_{j^p_{K^p-1}} - r_{j^p_{k^p-1}} + s_{j^p_{k^p}}$. 
    It follows that $s_{j^p_{K^p}} <  s_{j^p_{k^p}} + (r_{j^{p}_{K^p}} - r_{j^p_{k^p-1}})$ since $r_{j^{p}_{K^p}} > r_{j^{p}_{K^p-1}}$. 

    Here, we show that $s_{j^p_{k^p}} + (r_{j^{p}_{K^p}} - r_{j^p_{k^p-1}}) > 1 + r_{j^{p}_{K^{p}}}$. 
    Since $p$ is not tight, we have $C^p_{k^p} < d_{j^p_{K^p}} - s_p$, and hence, $1 + r_{j^p_{k^p-1}} < s_{j^{p}_{k^p}}$. 
    It follows that $s_{j^{p}_{k^p}} + (r_{j^{p}_{K^{p}}} - r_{j^p_{k^p-1}}) > 1 + r_{j^{p}_{K^{p}}}$. 
\end{proof}

\begin{proof} [Proof of Lemma \ref{lem:alg_cost}]
    For (i), It suffices to show that algorithm is not busy during $(d_{p_x} + \gamma, s_{p_{x+1}})$. 
    Lemma \ref{lem:agreeable} (vi) says that any job released before or at time $s_{p_x}$ must be completed before $d_{x}$. Since line \ref{alg_line:critical_case} does not hold for $p_x$, we have that all the jobs released during $[s_{p_x}, d_{p_x}]$ (i.e. $\mathcal{J}^{p_x}$) are released during $[s_{p_x}, s_{p_x} + \gamma]$. 
    Lemma \ref{lem:agreeable} guarantees that no job is stopped during $[s_{p_x}, d_{p_{x+1}})$ and $d_{p_x} < r_{p_{x+1}}$. 
    It follows that each job $j$ in $\mathcal{J}^{p_x}$ is processed during $[r_j,r_j + 1]$. 
    Take any job $j$ released after $d_{p_x}$. 
    Lemma \ref{lem:agreeable} (vii) says $j$ is not started due to any job completed before $p_x$. 
    Since $p_x$ is not tight, $j$ is not started due to $p_x$. 
    Then, job $j$ is not started until time $s_{p_{x+1}}$, since $p_{x+1}$ is the flag job right next to $p_x$. 
    It follows that algorithm cannot process any job during $(d_{p_x} + \gamma, s_{p_{x+1}})$. 
    
    For (ii), it suffices to show that algorithm is not busy during $(2 + r_{j^{p_x}_{K^{p_x}}}, s_{p_{x+1}})$. 
    Lemma \ref{lem:agreeable} (vi) says that any job released before or at time $s_{p_x}$ must be completed before $d_{x}$. 
    Since $p_x$ is tight, the algorithm's definition and Lemma \ref{lem:agreeable} (ii) guarantees that any job released during $(s_{p_x}, 1 + r_{j^{p_x}_{K^{p_x}}}] $ is processed during $[r_j,r_j + 1]$. 
    Take any job $j$ released after $1 + r_{j^{p_x}_{K^{p_x}}}$. 
    Lemma \ref{lem:agreeable} (vii) says $j$ is not started due to any job completed before $p_x$. 
    Also clearly, $j$ is not started due to $p_x$. 
    Then, job $j$ is not started until time $s_{p_{x+1}}$, because $p_{x+1}$ is the flag job right next to $p_x$. 
    It follows that algorithm cannot process any job during $(2 + r_{j^{p_x}_{K^{p_x}}}, s_{p_{x+1}})$. 

    For (iii), note that $1 + r_{j^{p_x}_{k^{p_x}-1}} - s_{p_x} \leq 1 + r_{j^{p_x}_{K^{p_x}-1}} - s_{p_x} \leq 1 + \gamma$ by the definition of $K^{p_x}$.    
    It suffices to show that algorithm is not busy during $(1 + r_{j^{p_x}_{k^{p_x}-1}},s_{p_{x+1}})$. 
    Lemma \ref{lem:agreeable} (vi) says that any job released before $s_{p_x}$ is completed before $d_{p_x}$. 
    Lemma \ref{lem:agreeable} guarantees that any job $j$ released during $[s_{p_x}, r_{j^{p_x}_{k^{p_x}-1}}]$ is started at $r_j$ and completed at $r_j+1$ without any interruption. 
    Lemma \ref{lem:agreeable} together with the fact that $p_x$ is not tight also guarantees that any job $j$ released during $(r_{j^{p_x}_{k^{p_x}-1}}, d_{p_x}]$ is started at $r_j$, then stopped at time $d_{p_x}$, and is not started until time $s_{p_{x+1}}$.  
    Lemma \ref{lem:agreeable} together with the fact that $p_x$ is not tight also guarantees that any job $j$ released after time $d_{p_x}$ is not started until time $s_{p_{x+1}}$. 
    Therefore, we have shown that algorithm cannot be processing any job during $(1 + r_{j^{p_x}_{k^{p_x}-1}},s_{p_{x+1}})$. 

    For (iv), first we show that algorithm is not busy during $(1 + r_{j^{p_x}_{k^{p_x}-1}},s_{j^{p_x}_{k^{p_x}}})$ and $(s_{j^{p_x}_{k^{p_x}}} + (r_{j^{p_x}_{K^{p_x}}} - r_{j^{p_x}_{k^{p_x}-1}}), s_{p_{x+1}})$. 
    Lemma \ref{lem:agreeable} (vi) says that any job released before or at $s_{p_x}$ is completed before $d_{p_x}$. 
    Lemma \ref{lem:agreeable} together with the fact that $p_x$ is not tight imply the following statements: 
    (1) any job $j$ released during $(s_{p_x}, r_{j^{p_x}_{k^{p_x}-1}}]$ is started at $r_j$ and completed at $r_j+1$ without any interruption; 
    (2) any job $j$ released during $(r_{j^{p_x}_{k^{p_x}-1}}, d_{p_x}]$ is started at $r_j$, then stopped at time $d_{p_x}$, is not started until time $s_{j^{p_x}_{k^{p_x}}}$, and is processed during $[s_{j^{p_x}_{k^{p_x}}}, d_{j^{p_x}_{k^{p_x}}}]$ without any interruption; 
    (3) any job $j$ released during $(d_{p_x},s_{j^{p_x}_{k^{p_x}}}]$ is not started until time $s_{j^{p_x}_{k^{p_x}}}$, and processed during $[s_{j^{p_x}_{k^{p_x}}}, d_{j^{p_x}_{k^{p_x}}}]$ without any interruption; 
    (4) any job $j$ released during $(s_{j^{p_x}_{k^{p_x}}}, s_{j^{p_x}_{k^{p_x}}} + (r_{j^{p_x}_{K^{p_x}}} - r_{j^{p_x}_{k^{p_x}-1}})]$ is started at its release time, and processed during $[r_j,r_j+1]$ without any interruption.
    (5) any job $j$ released during $(s_{j^{p_x}_{k^{p_x}}} + (r_{j^{p_x}_{K^{p_x}}} - r_{j^{p_x}_{k^{p_x}-1}}), s_{p_{x+1}}]$ is not started until time $s_{p_{x+1}}$. 
    Therefore, we have shown that algorithm is not busy during $(1 + r_{j^{p_x}_{k^{p_x}-1}},s_{j^{p_x}_{k^{p_x}}})$ and $(d_{j^{p_x}_{k^{p_x}}} + (r_{j^{p_x}_{K^{p_x}}} - r_{j^{p_x}_{k^{p_x}-1}}), s_{p_{x+1}})$. 

    At last for (iv), note that the cost during $[s_{p_x}, 1 + r_{j^{p_x}_{k^{p_x}-1}}]$ and $[s_{j^{p_x}_{k^{p_x}}}, d_{j^{p_x}_{k^{p_x}}} + (r_{j^{p_x}_{K^{p_x}}} - r_{j^{p_x}_{k^{p_x}-1}})]$ is exactly $2 + r_{j^{p_x}_{K^{p_x}}} - s_{p_x}$. 
\end{proof}

\begin{proof} [Proof of Lemma \ref{lem:opt}]
    Take any optimal scheduling $O$. 
    Let $O_0$ be $O$. For $b = 1,2,\ldots,|\mathcal{P}|$, let $O_b$ be a new scheduling that starts job $p_b$ at its starting deadline, starts jobs $\mathcal{A}(p_b)$ (if non-empty) at their release times respectively, and has the same scheduling as $O_{b-1}$ for all the rest jobs. 
    By Lemma \ref{lem:agreeable} and \ref{lem:not_close}, we have that $r_{j^{p_{x}}_{K^{p_{x}}}} + 1 < r_{p_{x+1}}$ if $\mathcal{A}(p_{x})$ is non-empty, for each $x = 1,2,\ldots,|\mathcal{P}|-1$.  
    By Lemma \ref{lem:agreeable}, we have $d_{p_{x}} < r_{p_{x+1}}$ if $\mathcal{A}(p_{x})$ is empty, for each $x = 1,2,\ldots,|\mathcal{P}|-1$. 
    It follows that for each $x = 1,2,\ldots,b-1$, the set of jobs accommodated within the $x$-th block of the scheduling $O_{b-1}$ is exactly $\{p_x\}\cup \mathcal{A}(p_x)$. 
    Clearly, $O_{|\mathcal{P}|}$ is the desired optimal scheduling. 
    It remains to show $O_b$ has no higher cost than $O_{b-1}$ for each $b = 1,2,\ldots,|\mathcal{P}|$ with the induction hypothesis that $O_{b-1}$ is optimal. 

    Note that the jobs $\cup_{x = 1,2,\ldots,b-1} \{p_x\}\cup \mathcal{A}(p_x)$ are accommodated within the first $b-1$ blocks of $O_{b-1}$ and the remaining jobs are either $p_b$ or released after $s_{p_b}$. 
    Starting job $p_b$ at $s_{p_b}$ does not increase the overall cost of $O_{b-1}$. 
    Since $O_{b-1}$ is optimal, without loss of generality, assume that $O_{b-1}$ starts $p_b$ at its starting deadline $s_{p_b}$.
    Moreover, it is clear that the $b$-th block of $O_{b-1}$ must accommodate job $p_b$. 

    We show that the $b$-th block of $O_{b-1}$ cannot accommodate any job $j$ in $\cup_{x = b+1,b+2,\ldots, |\mathcal{P}|} \{p_x\}\cup \mathcal{A}(p_x)$. 
    Assume $j\in \cup_{x = b+1,b+2,\ldots, |\mathcal{P}|} \{p_x\}\cup \mathcal{A}(p_x)$ is accommodated within the $b$-th block of $O_{b-1}$ and we show a contradiction. 
    Observe that jobs $\cup_{x = 1,2,\ldots,b-1} \{p_x\}\cup \mathcal{A}(p_x)$ are scheduled within the previous $b-1$ blocks and each job in $\cup_{x = b+1,b+2,\ldots, |\mathcal{P}|} \{p_x\}\cup \mathcal{A}(p_x)$ is either $p_b$ or released after $s_{p_b}$. 
    Then, the $b$-th block of $O_{b-1}$ must start at time $s_{p_b}$ and end at or after time $r_j + 1$. 
    Note that we have $d_{p_b} < r_{p_{b+1}} \leq r_j$ by Lemma \ref{lem:agreeable}. 
    If $\mathcal{A}(p_b)$ is empty, then no job is being processed during $[d_{p_b}, r_{p_{b+1}}]\neq \emptyset$ (contradiction) because the any job from $\cup_{x = b+1,b+2,\ldots, |\mathcal{P}|} \{p_x\}\cup \mathcal{A}(p_x)$ is released at or after time $r_{p_{b+1}}$. 
    If $\mathcal{A}(p_b)$ is non-empty, then no job is being processed during $[1 + r_{j^{p_b}_{K^{p_b}}}, r_{p_{b+1}}]\neq \emptyset$ (contradiction) because it is feasible for jobs $\mathcal{A}(p_b)$ to be started at their release times and any job from $\cup_{x = b+1,b+2,\ldots, |\mathcal{P}|} \{p_x\}\cup \mathcal{A}(p_x)$ is released at or after time $r_{p_{b+1}}$, where we have $1 + r_{j^{p_b}_{K^{p_b}}} < r_{p_{b+1}} \leq r_j$ by Lemma \ref{lem:agreeable} and \ref{lem:not_close}. 
    Therefore, the $b$-th block can only accommodate jobs from $\{p_b\}\cup \mathcal{A}(p_b)$. 

    If $\mathcal{A}(p_b)$ is empty, the we are done, because $O_{b-1}$ and $O_b$ have the same scheduling of all the jobs by their definition and the assumption. 
    It suffices to consider the case that $\mathcal{A}(p_b)$ is non-empty. 
    Consider the special case that the set of jobs accommodated within the $(b+1)$-th block of $O_{b-1}$ is a subset of $\mathcal{A}(p_b)$. 
    In this case, we have the cost of $O_{b-1}$ equal to $C_1 + A + B + C_2$ where $C_1$ is the total cost of the first $b-1$ blocks, $A$ is the cost of the $b$-th block, $B$ is the cost of the $(b+1)$-th block, and $C_2$ is the total cost of the remaining blocks. 
    On the other hand, we have the cost of $O_b$ at most $C_1 + A + (r_{j^{p_{b}}_{K^{p_{b}}}} - s_{p_{b}}) + C_2$ where $A + (r_{j^{p_{b}}_{K^{p_{b}}}} - s_{p_{b}})$ is an upper bound of the cost of the $b$-th block of $O_b$ that processes $\{p_{b}\}\cup \mathcal{A}(p_{b})$. 
    Since $r_{j^{p_{b}}_{K^{p_{b}}}} - s_{p_{b}} \leq 1 \leq B$. We have shown that $O_b$ has no higher cost than $O_{b-1}$ in the special case. 

    It remains to consider the case that the $(b+1)$-th block of $O_{b-1}$ accommodates some job other than $\mathcal{A}(p_b)$. By Definition, all the jobs $\cup_{x = 1,2,\ldots,b-1} \{p_x\}\cup \mathcal{A}(p_x)$ are accommodated within the first $b-1$ blocks of $O_{b-1}$. It follows that the $(b+1)$-th block of $O_{b-1}$ must accommodate some job in $\cup_{x = b+1,b+2,\ldots, |\mathcal{P}|} \{p_x\}\cup \mathcal{A}(p_x)$. 
    Since each job in $\cup_{x = b+1,b+2,\ldots, |\mathcal{P}|} \{p_x\}\cup \mathcal{A}(p_x)$ is either $p_{b+1}$ or released after $s_{p_{b+1}}$, the $(b+1)$-th block of $O_{b-1}$ must accommodate job $p_{b+1}$. 
    Precisely, it suffices to consider the case that the $b$-th block of $O_{b-1}$ accommodates job $p_b$ and a proper subset of $\mathcal{A}(p_b)\neq \emptyset$ only, $O_{b-1}$ starts $p_b$ at time $s_{p_b}$, and the $(b+1)$-th block of $O_{b-1}$ accommodates job $p_{b+1}$.  
    We show a contradiction and hence this case does not exist. 
    Suppose the $b$-th block of $O_{b-1}$ accommodates job $p_b, j^{p_b}_{1}, \ldots, j^{p_b}_{\tilde{k}-1}$ for some $\tilde{k} = 1,2,\ldots,K^{p_b}$ such that jobs $j^{p_b}_{\tilde{k}}$ is accommodated within the $(b+1)$-th block of $O_{b-1}$. 
    Note that $p_b$ cannot be tight, otherwise it would be infeasible for job $j^{p_b}_{\tilde{k}}$ to be scheduled within the $(b+1)$-th block, because $C^{p_b}_{\tilde{k}} = d_{j^{p_b}_{K^{p_b}}} - s_{p_b}$ implies $s_{j^{p_b}_{\tilde{k}}} \leq 1 + r_{j^{p_b}_{\tilde{k}-1}}$. 
    It follows from the definition of $\mathcal{A}(p_b)$ that $p_b$ has a secondary flag job $j^{p_b}_{k^{p_b}}$. 
    It follows from Lemma \ref{lem:agreeable} (iv) that $r_{p_{b+1}} > s_{j^{p_b}_{k^{p_b}}} + (r_{j^{p_b}_{K^{p_b}}} - r_{j^{p_b}_{k^{p_b}-1}})$. 
    It follows that the $(b+1)$-th block of $O_{b-1}$ starts at time $s_{j^{p_b}_{\tilde{k}}}$ and ends at or after time $r_{p_{b+1}} + 1$, where note that we have $s_{j^{p_b}_{\tilde{k}}} \leq s_{j^{p_b}_{K^{p_b}}} < r_{p_{b+1}}$ by Lemma \ref{lem:agreeable} and \ref{lem:not_close}. 
    Observe that $O_{b-1}$ and $O_b$ have the same scheduling/cost outside $[s_{p_b}, r_{p_{b+1}} + 1]$, because $O_{b-1}$ and $O_b$ only differ in the scheduling of $\mathcal{A}(p_b)$, and we have $s_{p_b} < r_{j^{p_b}_{1}}$ and $s_{j^{p_b}_{K^{p_b}}} < r_{p_{b+1}}$. 
    It suffices to show that the cost of $O_{b-1}$ during $[s_{p_b}, r_{p_{b+1}} + 1]$ is strictly greater than the cost of $O_b$ during $[s_{p_b}, r_{p_{b+1}} + 1]$. 
    
    Observe that $O_{b-1}$ is processing jobs during $[s_{p_b}, r_{j^{p_b}_{\tilde{k}-1}} + 1]$ and $[s_{j^{p_b}_{\tilde{k}}}, r_{p_{b+1}} + 1]$. 
    Observe that the cost of $O_{b-1}$ during $[s_{p_b}, r_{j^{p_b}_{\tilde{k}-1}} + 1]$ and $[s_{j^{p_b}_{\tilde{k}}}, r_{p_{b+1}} + 1]$ is equal to the cost of a scheduling, say $S_1$, of jobs $p_b, j^{p_b}_1,\ldots,j^{p_b}_{K^{p_b}},p_{b+1}$ such that $p_b,j^{p_b}_1,\ldots,j^{p_b}_{\tilde{k}-1}$ are scheduled within one block optimally (locally), jobs $j^{p_b}_{\tilde{k}},\ldots,j^{p_b}_{K^{p_b}},p_{b+1}$ are scheduled within another block optimally (locally), where the second block accommodating $j^{p_b}_{\tilde{k}},\ldots,j^{p_b}_{K^{p_b}},p_{b+1}$ may have strictly higher cost than scheduling the same jobs $j^{p_b}_{\tilde{k}},\ldots,j^{p_b}_{K^{p_b}},p_{b+1}$ in two separate blocks. 
    Precisely, in scheduling $S_1$, $p_b$ is started at its starting deadline, $j^{p_b}_1,\ldots,j^{p_b}_{\tilde{k}-1}$ are started at their release time respectively, $j^{p_b}_{\tilde{k}},\ldots,j^{p_b}_{K^{p_b}}$ are started at their starting deadline respectively, and $p_{b+1}$ is started at its release time. 
    Consider another scheduling $S_2$ of jobs $p_b, j^{p_b}_1,\ldots,j^{p_b}_{K^{p_b}},p_{b+1}$ such that $p_b,j^{p_b}_1,\ldots,j^{p_b}_{k^{p_b}-1}$ are scheduled within one block optimally (locally), and $j^{p_b}_{k^{p_b}},\ldots,j^{p_b}_{K^{p_b}},p_{b+1}$ are scheduled within another block optimally (locally). 
    Precisely, in scheduling $S_2$, $p_b$ is started at its starting deadline, 
    jobs $j^{p_b}_1,\ldots,j^{p_b}_{k^{p_b}-1}$ are started at their release times, 
    jobs $j^{p_b}_{k^{p_b}},\ldots,j^{p_b}_{K^{p_b}}$ are started at their starting deadlines, 
    and $p_{b+1}$ is started at its release time. 
    By the definition of the secondary job $j^{p_b}_{k^{p_b}}$, the cost of $S_1$ during $[s_{p_b}, d_{j^{p_b}_{K^{p_b}}}]$ is equal to $C^{p_b}_{\tilde{k}}$ that is at least $C^{p_b}_{k^{p_b}}$ that is the cost of $S_2$ during $[s_{p_b}, d_{j^{p_b}_{K^{p_b}}}]$. 
    It follows that the cost of $S_1$ during $[s_{p_b}, 1 + r_{p_{b+1}}]$ is at least the cost of $S_2$ during $[s_{p_b}, 1 + r_{p_{b+1}}]$, since $s_{j^{p_b}_{K^{p_b}}} < r_{p_{b+1}}$ and both scheduling $S_1$ and $S_2$ are busy during $[d_{j^{p_b}_{K^{p_b}}}, r_{p_{b+1}}+1]$. 
    Note that the cost of the scheduling $S_2$ during $[s_{p_b}, 1 + r_{p_{b+1}}]$ is greater than $(1 + r_{j^{p_b}_{k^{p_b}-1}} - s_{p_b}) + (1 + r_{j^{p_b}_{K^{p_b}}} - r_{j^{p_b}_{k^{p_b}-1}}) = 2 + (r_{j^{p_b}_{K^{p_b}}} - s_{p_b})$ where $1 + r_{j^{p_b}_{k^{p_b}-1}} - s_{p_b}$ is the cost of processing $p_b, j^{p_b}_1,\ldots,j^{p_b}_{k^{p_b}-1}$, and the cost of processing $j^{p_b}_{k^{p_b}},\ldots,j^{p_b}_{K^{p_b}}, p_{b+1}$ in the scheduling $S_2$ is strictly greater than $1 + r_{j^{p_b}_{K^{p_b}}} - r_{j^{p_b}_{k^{p_b}-1}}$ since $r_{p_{b+1}} > s_{j^{p_b}_{k^{p_b}}} + r_{j^{p_b}_{K^{p_b}}} - r_{j^{p_b}_{k^{p_b}-1}}$. 
    On the other hand, the cost of $O_b$ during $[s_{p_b}, r_{p_{b+1}} + 1]$ is at most $2 + (r_{j^{p_b}_{K^{p_b}}} - s_{p_b})$ where $1 + (r_{j^{p_b}_{K^{p_b}}} - s_{p_b})$ is the cost during $[s_{p_b}, 1 + r_{j^{p_b}_{K^{p_b}}}]$ for processing jobs $p_b, j^{p_b}_1,\ldots,j^{p_b}_{K^{p_b}}$ and $1$ is the cost during $[r_{p_{b+1}}, r_{p_{b+1}}+1]$ in the scheduling $O_b$. 
    Eventually, we have shown that the cost of $O_{b-1}$ during $[s_{p_b}, r_{p_{b+1}} + 1]$ is strictly greater than the cost of $O_b$ during $[s_{p_b}, r_{p_{b+1}} + 1]$ (contradiction). 
\end{proof}

\begin{proof} [Proof of Lemma \ref{lem:critical_jobs}]
We present Algorithm \ref{alg:critical} that outputs the critical jobs and takes $O(|\mathcal{J}^p|)$ time.

\begin{algorithm}
\caption{Determine critical jobs}
\label{alg:critical}
\algrenewcommand\algorithmicrequire{\textbf{Input:}}
\algrenewcommand\algorithmicensure{\textbf{Output:}}
\begin{algorithmic}[1]
\Require{$\mathcal{J}^p$ is sorted and stored in a doubly linked list $L$ such that $l_1 < l_2$ iff $r_{L(l_1)} \leq r_{L(l_2)}$ }
\Ensure{a doubly linked list $L^*$ such that $r_{L^*(1)} < r_{L^*(2)}< \cdots < r_{L^*(|L^*|)}$ and $L^*$ contains exactly the set of jobs $\left\{j\in \mathcal{J}^p: \neg([r_{j'},d_{j'}]\subset [r_j,d_j]), \forall j'\in \mathcal{J}^p\setminus \{j\}\right\}$; }
\State $j\gets$ the first element of $L$; \label{alg_line:j_update_1}
\While{$j$ is not the last element in $L$}
    \State $j\gets$ the element next to $j$; 
    \State $j'\gets$ the element preceding $j$; 
    \While{$r_{j'} = r_j \lor s_{j} \leq s_{j'}$} \label{alg_line:inner_while}
    \If{$s_{j} \leq s_{j'}$}
        \State Delete element $j'$ from the linked list $L$; \label{alg_line:delete_j'}
        \If{$j$ is the first element of $L$}
            \State Break;
        \Else
            \State $j'\gets$ the element preceding $j$;
        \EndIf
    \Else \Comment{This implies $r_{j'} = r_j$ and $s_{j'} < s_j$}
        \State Delete element $j$ from the linked list $L$; \label{alg_line:delete_j}
        \State $j\gets j'$; \label{alg_line:j_update_2}\Comment{now $j$ and its preceding element (if any) do not satisfy the inner while condition}
        \State Break;
    \EndIf
    \EndWhile 
\EndWhile
\State \textbf{Return} $L$; 
\end{algorithmic}
\end{algorithm}

Before showing the correctness of Algorithm \ref{alg:critical}, consider proving two claims.  
Claim 1: all the deleted elements are redundant jobs. 
If a job $j'$ is deleted from the linked list in line \ref{alg_line:delete_j'}, then there exists a job $j$ such that $j'$ precedes $j$ and $s_{j'} \geq s_j$. It follows that $[r_{j'}, s_{j'}]\supset [r_j,s_j]$, which says $j'$ is redundant. 
If a job $j$ is deleted from the linked list in line \ref{alg_line:delete_j}, then there exists a job $j'$ preceding $j$ such that $r_{j'} = r_j$ and $s_{j'} < s_j$. It follows that $[r_j, s_j]\supset [r_{j'}, s_{j'}]$, which says $j$ is redundant. 

Claim 2: the output linked list of jobs is agreeable, i.e., $r_{L^*(1)} < r_{L^*(2)} < \cdots < r_{L^*(|L^*|)}$ and $d_{L^*(1)} < d_{L^*(2)} < \cdots < d_{L^*(|L^*|)}$. 
We prove the claim by induction. 
Base case: at the end of the first round of the outer While loop, we show that all the elements in $L$ before $j$ including $j$ are agreeable. 
Suppose $j_1,j_2$ are the first two jobs of $\mathcal{J}^p$. 
Consider the following cases. 
Case 1: $r_{j_1} = r_{j_2}$ and $d_{j_1} < d_{j_2}$.  
Job $j_2$ is deleted in line \ref{alg_line:delete_j}. 
The value of $j$ at the end of the first round of the outer While loop is $j_1$. 
Case 2: $r_{j_1} = r_{j_2}$ and $d_{j_1} > d_{j_2}$.  
Job $j_1$ is deleted in line \ref{alg_line:delete_j'}. 
The value of $j$ at the end of the first round of the outer While loop is $j_2$. 
Case 3: $r_{j_1} < r_{j_2}$ and $d_{j_1} > d_{j_2}$.  
Job $j_1$ is deleted in line \ref{alg_line:delete_j'}. 
The value of $j$ at the end of the first round of the outer While loop is $j_2$. 
Case 4: $r_{j_1} < r_{j_2}$ and $d_{j_1} < d_{j_2}$.  
Both job $j_1$ and $j_2$ are kept in the linked list. 
The value of $j$ at the end of the first round of the outer While loop is $j_2$. 

Inductive case: assume that at the end of the $k$-th round of the outer While loop, all the elements in $L$ before $j$ including $j$ are agreeable (Induction Hypothesis), and we show that at the end of the $(k+1)$-th round of the outer While loop, all the elements in $L$ before $j$ including $j$ are agreeable. 
Suppose $j_1,j_2,\ldots,j_h$ are all the elements in $L$ before $j$ including $j$ at the end of the $k$-th round of the outer While loop such that $r_{j_1} < r_{j_2} < \cdots < r_{j_h}$ and $d_{j_1} < d_{j_2} < \cdots < d_{j_h}$.
Suppose $j_{h+1}$ is the element next to $j_h$ in $L$ at the end of the $k$-th round of the outer While loop. 
Consider the following cases. 
Case 1: $r_{j_h} = r_{j_{h+1}}$ and $d_{j_h} < d_{j_{h+1}}$.  
Job $j_{h+1}$ is deleted in line \ref{alg_line:delete_j}. 
The value of $j$ at the end of the $(k+1)$-th round of the outer While loop is $j_h$. 
Therefore, the elements before $j$ including $j$ at the end of the $(k+1)$-th round of the outer While loop are equal to the elements before $j$ including $j$ at the end of the $k$-th round of the outer While loop. 
The agreeability follows. 
Case 2: $d_{j_h} \geq d_{j_{h+1}}$.  
Job $j_h$ is deleted in the first round of the inner While loop in line \ref{alg_line:delete_j'}. 
Note that we have $r_{j_1} < \cdots < r_{j_{h-1}} < r_{j_h} \leq r_{j_{h+1}}$ by Induction Hypothesis. 
It follows that the inner While loop keeps running and then jobs $\{j_x: d_{j_x} \geq d_{j_{h+1}}, x = 1,2,\ldots,h-1\}$ are deleted in line \ref{alg_line:delete_j'} because $r_{j_x} < r_{j_{h+1}}$ for each $x \leq h-1$. 
It follows that all the remaining elements in $L$ before $j_{h+1}$ are agreeable. 
Note that the value of $j$ at the end of the $(k+1)$-th round of the outer While loop is $j_{h+1}$. 
Case 3: $r_{j_h} < r_{j_{h+1}}$ and $d_{j_h} < d_{j_{h+1}}$.  
In this case, no job is deleted. 
The value of $j$ at the end of the $(k+1)$-th round of the outer While loop is $j_{h+1}$. 
We have the set of jobs $j_1,j_2,\ldots,j_{h+1}$ is agreeable. 

We show that the output of Algorithm \ref{alg:critical} is exactly the set of critical jobs of the input $\mathcal{J}^p$, i.e., \\
$\left\{j\in \mathcal{J}^p: \neg([r_{j'},d_{j'}]\subset [r_j,d_j]), \forall j'\in \mathcal{J}^p\setminus \{j\}\right\}$. 
By Claim 1, the set of critical jobs is contained within $L^*$. 
For the other direction, take any job $j$ in $L^*$. 
If $j$ is not critical, then there exists a critical job $j'\in \mathcal{J}^p$ such that $[r_{j'},d_{j'}]\subset [r_j,d_j]$. 
It follows that $j'\in L^*$. 
The existence of the pair of $j$ and $j'$ contradicts to Claim 2. 

It remains to compute its time complexity. 
Claim 3: the inner While loop takes $O(|\mathcal{J}^p|)$ time. 
Note that each time $r_{j'} = r_j \lor s_{j} \leq s_{j'}$ in line \ref{alg_line:inner_while} holds, some job is deleted from the linked list. Since there are at most $|\mathcal{J}^p|$ jobs in the linked list, the inner While loop is run at most $|\mathcal{J}^p|$ times. For each round of the inner While loop, every line takes a constant time. Therefore, Claim 3 is proven. 
Claim 4: the outer While loop has $|\mathcal{J}^p|$ rounds. 
For each round of the outer While loop, either $j$ is updated to be the element next to $j$ in line \ref{alg_line:j_update_1}, or, $j$ is updated to be the element next to $j$ in line \ref{alg_line:j_update_1} and then the job stored in $j$ is deleted (line \ref{alg_line:delete_j}) and $j$ is updated back to its old value (line \ref{alg_line:j_update_2}). 
In either of the cases, the number of elements after $j$ in the linked list decreases by one. 
Claim 4 follows. 
Therefore, we have shown that Algorithm \ref{alg:critical} takes $O(|\mathcal{J}^p|)$ time.

\end{proof}

\begin{proof} [Proof of Theorem \ref{thm:restart_time}]
    First, we make some observations. 
    Note that there are $|\mathcal{P}|$ primary jobs and they are all distinct from each other. 
    Note that Lemma \ref{lem:agreeable} says that $d_{p_x} < r_{p_{x+1}}$ for each $x = 1,2,\ldots,|\mathcal{P}|-1$. 
    Take any $x = 1,2,\ldots,|\mathcal{P}|-1$. 
    Take any $j_1\in \mathcal{J}^{p_x}$ and $j_2\in \mathcal{J}^{p_{x+1}}$. 
    Then, we have $r_{p_x}\leq s_{p_x} < r_{j_1} \leq d_{p_x} < r_{p_{x+1}}\leq s_{p_{x+1}} < r_{j_2}$. 
    It follows that the sets $\mathcal{J}^{p_1}, \mathcal{J}^{p_2}, \ldots, \mathcal{J}^{p_{|\mathcal{P}|}}, \mathcal{P}$ are mutually disjoint. 
    Note that $\mathcal{J}^{p_x}$ is an interval stored in the linked list $L$ for each $x$, because $\mathcal{J}^{p_x}$ is exactly the set of jobs released during a time interval $(s_{p_x},d_{p_x}]$. 
    For any flag job $f$ (stored in variable $\hat{f}$), note that the jobs that are started at time $s_f$ due to $f$ in line \ref{alg_line:flag_start} will be completed without interruption. 
    
    Claim that for any flag job $f$ (stored in variable $\hat{f}$), the set of jobs started at time $s_{f}$ due to $f$ in line \ref{alg_line:flag_start} is an interval stored in the linked list $L$. 
    We show the claim by contradiction. 
    Assume that $j_1,j_2,j_3$ are three jobs such that $r_{j_1} \leq r_{j_2} \leq r_{j_3}$, and $j_1,j_3$ are started in line \ref{alg_line:flag_start} due to the same flag job $f$, and $j_2$ is not available to be started at time $s_{f}$ in line \ref{alg_line:flag_start}. 
    Note that Lemma \ref{lem:agreeable} guarantees that jobs that are started due to primary job $p$ and then stopped (e.g. $j^p_{k_p}, \ldots, j^p_K$) are started due to the flag job right next to $p$ in line \ref{alg_line:flag_start} and will not be stopped again.
    It follows that $j_2$ is started due to some flag job $f'$ strictly before $f$ and then completed without interruption. 
    Since $j_1$ is available to be started at time $s_{f}$ and $r_{j_1} \leq r_{j_2}$, job $j_1$ is also available to be started due to $f'$ or some flag job before $f'$. 
    It follows that $j_1$ is started due to $f'$ or some flag job before $f'$ and then completed without interruption. 
    It contradicts to the assumption that $j_1$ is started due to $f$. 

    Now, we are ready to show how the algorithm achieves linear time complexity. 
    Suppose $L$ is the doubly linked list that stores all the $n$ jobs in order of non-descending release time in the input. 
    Two pointers $a$ and $b$ are used. 
    Pointer $b$ always points at the job to be released next and moves from the first node to the last node of $L$ one by one. 
    Pointer $a$ always points at the first job available to be started. 

    Consider each time of line \ref{alg_line:flag_start}. 
    Suppose pointer $a$ and $b$ point at jobs $L(x)$ and $L(y)$ with $x < y$ respectively, where $x < y$ because at line \ref{alg_line:flag_start} at least one job ($\hat{f}$) is available to be started at time $s_{\hat{f}}$.
    By the above claim, the interval $[x,y-1]$ is exactly the set of jobs available to be started at time $s_{\hat{f}}$. 
    It follows that starting these jobs only requires $O(y-x)$ time, including moving the two pointers to job $L(y)$ and making the same schedule for each job within the interval.
    Since jobs that are started in line \ref{alg_line:flag_start} are never stopped, any job cannot be started twice in line \ref{alg_line:flag_start}. 
    It follows that line \ref{alg_line:flag_begin}-\ref{alg_line:flag_end} take $O(n)$ time. 

    Consider line \ref{alg_line:release_begin}-\ref{alg_line:release_end}. 
    Each time of line \ref{alg_line:release_begin}, pointer $b$ moves to the next node. 
    Note that values of $\hat{f}, p, j^p_{k_p}, j^p_{k_p-1}, j^p_{K_p}$ are stored separately and can be accessed in constant time. 
    Note that the two pointers point at job $j$ before starting job $j$ in line \ref{alg_line:start_primary}, \ref{alg_line:start_tight} and \ref{alg_line:start_secondary}, and pointer $a$ moves to the job next to $j$ after starting job $j$ in line \ref{alg_line:start_primary},\ref{alg_line:start_tight} and \ref{alg_line:start_secondary}. 
    In the cases other than line \ref{alg_line:start_primary},\ref{alg_line:start_tight} and \ref{alg_line:start_secondary}, either updating value of $\hat{f}$ or do nothing takes a constant time. 
    Summing up, line \ref{alg_line:release_begin}-\ref{alg_line:release_end} take $O(n)$ time. 
    Note that after running through line \ref{alg_line:flag_begin}-\ref{alg_line:flag_end} of job $j$'s release, pointer $b$ moves to the job next to $j$. 
    
    
    At time $d_p$, pointer $a$ points at the first job released after time $d_p$ and pointer $b$ points at the job preceding to the location of pointer $a$. 
    It takes $O(|\mathcal{J}^{p}|)$ time to obtain the jobs $\mathcal{J}^{p}$ in a linked list because the jobs $\mathcal{J}^{p}$, as an interval, are adjacent to the location of the two pointers. 
    Therefore, line \ref{alg_line:A_set_begin} takes $O(|\mathcal{J}^{p}|)$ time. 
    Lemma \ref{lem:critical_jobs} says line \ref{alg_line:critical_jobs} takes $O(|\mathcal{J}^{p}|)$ time. 
    Obtaining the value of $K^p$ in line \ref{alg_line:A_set_end} takes $O(|\mathcal{J}^{p}|)$ time. 
    The for loop for computing each $C^p_k$ in line \ref{alg_line:key} takes $O(|\mathcal{J}^{p}|)$ time, because it takes constant time to obtain the values $r_{j^p_{k-1}}, s_{j^p_k}$ when the pointer is at job $j^p_k$, and 
    the values of $s_p,d_{j^p_{K^p}}$ are stored separately and can be accessed in constant time. 
    Computing $\min_{k = 1,2,\ldots,K^p} C^p_k$ also takes $O(|\mathcal{J}^{p}|)$ time, along which the value of $k^p$ can be obtained and job $j^p_{k^p}$ is marked as secondary. 
    Stopping jobs within $\{j\in \mathcal{J}^p: r_j > r_{j^p_{k^p-1}}\}$ in the linked list (line \ref{alg_line:stop}) takes $O(|\mathcal{J}^{p}|)$ time via moving pointer $a$ leftwards to the first job among the jobs that are stopped. 
    It takes $O(1)$ time for $\hat{f}$ to store $j^p_{k_p}$. 
    Therefore, we have shown that it takes $O(|\mathcal{J}^{p}|)$ time for the round of time $d_p$ for line \ref{alg_line:primary_begin}-\ref{alg_line:primary_end}. 
    Summing up for all $p$, line \ref{alg_line:primary_begin}-\ref{alg_line:primary_end} take $O(n)$ time, since $\mathcal{J}^{p_1}, \mathcal{J}^{p_2}, \ldots, \mathcal{J}^{p_{|\mathcal{P}|}}$ are mutually disjoint.     

    It takes a constant time to access the values of $s_{\hat{f}}, d_p, r_j$ where pointer $b$ points at job $j$. 
    Thus, it takes a constant time to compare the values of $s_{\hat{f}}, d_p, r_j$ to decide which block of programs (line \ref{alg_line:flag_begin}-\ref{alg_line:flag_end}, line \ref{alg_line:release_begin}-\ref{alg_line:release_end}, or line \ref{alg_line:primary_begin}-\ref{alg_line:primary_end}) algorithm enters at each stage. 
    At last, we show the total number of comparisions needed is $O(n)$. 
    It suffices to show for a fixed value of variable $j$, the tuple $(s_{\hat{f}}, d_p, r_j)$ is only updated for a constant number of times. 
    Suppose pointer $b$ points at $L(x)$. 
    It comes down to look at what happens during the time period $(r_{L(x-1)}, r_{L(x)})$. 
    Since no new job is released during $(r_{L(x-1)}, r_{L(x)})$, line \ref{alg_line:flag_begin}-\ref{alg_line:flag_end} can only happen at most once during $(r_{L(x-1)}, r_{L(x)})$. 
    It follows that line \ref{alg_line:primary_begin}-\ref{alg_line:primary_end} can only happen at most twice during $(r_{L(x-1)}, r_{L(x)})$. 
    It follows that the value of $\hat{f}$ and $p$ are updated for constant number of times. 
\end{proof}

\end{document}